\def\doi{8(4:5)2012}
\newcommand{\bull}{\rule{.85ex}{1ex} \par \bigskip}
\newcommand{\ignore}[1]{}
\newcommand{\cproblem}[3]{
\vspace{.2cm}
\noindent {\bf #1} \\
INPUT: #2 \\
QUESTION: #3 \\}
\DeclareMathOperator{\Csp}{CSP}
\DeclareMathOperator{\GLP}{GLP}
\newcommand{\R} {\mathbb{R}}
\newcommand{\lingamma}{\Gamma_{{\rm lin}}}
\begin{document}

\title{Essential Convexity and Complexity of Semi-Algebraic Constraints}

\author[M~.Bodirsky]{Manuel Bodirsky\rsuper a}
\address{{\lsuper a}CNRS/LIX, \'Ecole Polytechnique, 91128 Palaiseau,
  France}
\email{bodirsky@lix.polytechnique.fr}
\thanks{{\lsuper a}Manuel Bodirsky has received funding from the \emph{European Research Council} 
under the European Community's Seventh Framework Programme (FP7/2007-2013 Grant Agreement no. 257039).}

\author[P.~Jonsson]{Peter Jonsson\rsuper b}
\address{{\lsuper b}Department of Computer and System Science, Link\"opings Universitet\\
 SE-581 83, Sweden.} 
 \email{petej@ida.liu.se}
\thanks{{\lsuper b}Peter Jonsson is partially supported by the {\em Center for Industrial Information Technology}
({\sc Ceniit}) under grant 04.01 and by the {\em Swedish Research Council} (VR) under grants 2006-4532 and 621-2009-4431.}

\author[T.~v.~Oertzen]{Timo von Oertzen\rsuper c}
\address{{\lsuper c}Max-Planck-Institute for Human Development, K\"onigin-Luise-Strasse 5, 14195 Berlin, Germany, and
University of Virginia, Department of Psychology, Charlottesville, USA.}
\email{vonoertzen@mpib-berlin.mpg.de}

\keywords{Constraint Satisfaction Problem, Convexity, Computational Complexity, Linear Programming}
\subjclass{F.2.2, F.4.1, G.1.6}
\amsclass{68Q17}

\maketitle

\begin{abstract}
Let $\Gamma$ be a structure with a finite relational signature and a first-order definition in $(\mathbb R;*,+)$ with parameters from $\mathbb R$, that is, a relational structure over the real numbers where all relations are semi-algebraic sets.
In this article, we study the computational complexity of
constraint satisfaction problem (CSP) for $\Gamma$: the problem
to decide whether a given primitive positive sentence
is true in $\Gamma$.
We focus on those structures $\Gamma$ that 
contain the relations $\leq$, 
$\{(x,y,z) \; | \; x+y=z\}$ and $\{1\}$. Hence, all CSPs studied in this article are at least as expressive as the feasibility problem for linear programs.
The central concept in our investigation is {\em essential convexity}: a
relation $S$ is essentially convex if
for all $a,b \in S$, there are only finitely many
points on the line segment between $a$ and $b$ that are not in $S$.
If $\Gamma$ contains a relation $S$ that is not essentially convex
and this is witnessed by \emph{rational} points $a,b$, 
then we show that the CSP for $\Gamma$ is NP-hard.
Furthermore, we characterize essentially convex relations in
logical terms. This different view may
open up new ways for identifying tractable classes of semi-algebraic CSPs.
For instance, we show that 
if $\Gamma$ is a first-order expansion of $(\mathbb R; +,1,\leq)$, then the CSP for $\Gamma$ can be solved in polynomial time if and only if all relations in $\Gamma$ are essentially convex (unless P=NP).
\end{abstract}

\section{Introduction}
Linear Programming is a computational problem of outstanding theoretical
and practical importance. It is known to be computationally
equivalent to the problem to decide whether a given set
of linear (non-strict) inequalities is \emph{feasible}, i.e., defines a non-empty set:

\cproblem{Linear Program Feasibility}
{A finite set of variables $V$; a finite set of linear inequalities of the form $a_1x_1+ \cdots+a_k x_k \leq a_0$ where $x_1,\dots,x_k \in V$ and $a_0,\dots,a_k$ are
rational numbers where the numerators and denominators 
are represented in binary.}
{Does there exist an $x \in {\mathbb R}^{|V|}$ that satisfies all inequalities?}

This problem can be viewed as a \emph{constraint
satisfaction problem}, where the allowed constraints are linear
inequalities with rational coefficients, and the question is whether there is an assignment
of real values to the variables such that all the constraints are satisfied. 
For formal definitions of concepts related to constraint satisfaction, we refer
the reader to Section~\ref{sect:csp}.
It is obvious that this problem cannot be formulated
with a finite constraint language; 
however, we will later on (Proposition~\ref{prop:lp-pp}) see that
the feasbility problem for linear programs
is polynomial-time equivalent to the constraint satisfaction problem
for the structure 
$$\lingamma := \big ({\mathbb R}; \{(x,y,z) \; | \; x+y=z\},\leq,\{1\} \big)\; .$$

It is well-known that linear programming can be solved in polynomial time; moreover, several algorithms are known that are efficient also in practice. 
In this article, we study how far $\lingamma$ can be expanded such that the corresponding
constraint satisfaction problem remains polynomial-time solvable.
An important class of relations that generalizes the class
of relations defined by linear inequalities is the class of all
\emph{semi-algebraic} relations, i.e., relations 
that have a first-order definition over 
$(\mathbb R; *,+)$ using parameters from $\mathbb R$. By the fundamental theorem of Tarski and Seidenberg, it is known that a relation $S \subseteq {\mathbb R}^n$
is semi-algebraic if and only if it has a quantifier-free first-order definition in $(\mathbb R; *,+,\leq)$ using parameters from $\mathbb R$. Geometrically, we 
can view semi-algebraic sets as finite unions of finite intersections of the solution
sets of strict and non-strict polynomial inequalities.

We propose
a framework for systematically studying the computational
complexity of expansions of $\lingamma$ by semi-algebraic relations.
In this framework, a constraint satisfaction problem is given by
a (fixed and finite) constraint language $\Gamma$. All the constraints
in the input of such a feasibility problem must be chosen from this 
constraint language $\Gamma$ (a formal definition can be found in Section~\ref{sect:csp}).
This way of parameterizing constraint satisfaction problems by their
constraint language has proved to be very fruitful for finite domain
constraint satisfaction problems~\cite{FederVardi,JBK,BoundedWidth,Conservative,Bulatov}. 
Since the constraint
language is finite, the computational complexity of such a problem
does not depend on how the constraints
are represented in the input. We believe that the very
same approach is very promising for studying the complexity
of problems in real algebraic geometry. In Section~\ref{sect:concl}
we will discuss a connection between some of the CSPs with
semi-algebraic constraint languages and open problems
in convex geometry and semidefinite programming.

One of the key reasons why linear program feasibility can be decided in polynomial time is that the feasible regions of linear inequalities are \emph{convex}.
Convexity is not a necessary condition for tractability of 
semi-algebraic constraint satisfaction problems, though. It is, for instance, well-known
that linear program feasibility can also be decided in polynomial time
when some of the input constraints are \emph{disequalities},
i.e., constraints of the form $a_1 x_1 + \cdots + a_k x_k \neq a_0$
for rational values $a_0,\dots,a_k$.
However, we show that if $\lingamma \subseteq \Gamma$ and $\Gamma$
contains a relation $S$ with rational $a,b \in S$ such that on the line segment $L$ between $a$ and $b$
there are infinitely many points that are not in $S$, then
the $\Csp(\Gamma)$ is NP-hard. This motivates the notion of \emph{essential convexity}: a set $S \subseteq {\mathbb R}^k$ is \emph{essentially convex} if for all
$p,q \in S$ there are only finitely many points on the line between $p$ and $q$ that are not in $S$.
One of our central results is a logical characterization
of essentially convex semi-algebraic relations in Section~\ref{sect:main}. This characterization can be used to show several results
that are briefly described next.

A relation is called \emph{semi-linear}
if it has a first-order definition with rational parameters\footnote{We deviate from model-theoretic terminology as it is used e.g. in~\cite{MarkerPeterzilPillay} in that we only allow rational and not arbitrary real parameters in first-order definitions. Our definition conincides with the definition of semi-linear sets given in e.g.~\cite{DumortierGyssensVardeurzenVanGucht,
DumortierGyssensVardeurzenVanGuchtErr}.}
 in the structure $(\mathbb R; +,\leq)$.
From the perspective of constraint satisfaction, 
the set of semi-linear relations is a rich set.
For example, every relation $S \subseteq {\mathbb Q}^k$ with finitely 
many elements is semi-linear; thus, every finitary
relation on a finite set can be viewed as a semi-linear relation.
In Section~\ref{sect:semilinear}, we show that 
when we add a finite number of semi-linear relations to $\lingamma$, then the resulting
language either has a polynomial-time or an NP-hard constraint satisfaction problem.
This result is useful for studying optimization problems: note that
linear programming can be viewed as optimizing a linear function over
the feasible points of a set of linear inequalities. This
view suggests an immediate generalization: optimize a linear function over
the feasible points of an instance of a constraint satisfaction problem
for semi-linear constraint languages.
We completely classify the complexity of this problem in Section~\ref{sect:glp}.

Another application concerns temporal reasoning.
A {\em temporal constraint language} $\Gamma$ is a structure
$({\mathbb R};R_1,\ldots,R_l)$ with a first-order definition
in $({\mathbb R};<)$. Many computational problems in 
artificial intelligence and scheduling can be modeled 
as constraint satisfaction problems for temporal constraint languages.
The complexity of the CSP for temporal constraint languages $\Gamma$ has been completely classified recently~\cite{tcsps-journal}; there are
9 tractable classes of temporal constraint satisfaction problems.
Often, temporal languages are extended with some
mechanism for expressing {\em metric} time, i.e., the ability to
assign numerical values to variables and performing some kind of
arithmetic calculations~\cite{DrakengrenJonssonMetric}.
It has been observed that many metric languages $\Gamma$ are semi-linear
and satisfy $\lingamma \subseteq \Gamma$, and if such
a language is
polynomial-time solvable, then it is a subclass
of the so-called Horn-DLR class~\cite{JonssonBaeckstroem}. Our result shows that this is
not a coincidence: whenever $\Gamma$ is not
a subclass of Horn-DLR, then the $\Csp(\Gamma)$ is NP-hard.

\section{Preliminaries} 
\subsection{Constraint Satisfaction Problems}\label{sect:csp}
A first-order formula\footnote{Our terminology is standard; all notions that are not explicitly introduced can be found in standard textbooks, e.g., in~\cite{Hodges}.} is called \emph{primitive positive} (pp)
if it is of the form 
\[\exists x_1,\dots,x_n . (\psi_1 \wedge \dots \wedge \psi_m)\]
where $\psi_i$ are atomic formulas, i.e., formulas of the form
$x=y$ or $S(x_{i_1},\dots,x_{i_k})$ where $S$ is the relation
symbol for a $k$-ary relation in $\Gamma$. We call such
a formula a {\em pp-formula}, and as usual a pp-formula without free variables is called a pp-sentence.

Let $\Gamma=(D; S_1,\dots,S_l)$ be a structure with domain $D$ and a finite relational signature.
The \emph{constraint satisfaction problem for $\Gamma$} 
($\Csp(\Gamma)$ in short)
is the computational problem to decide whether a given 
primitive positive sentence $\Phi$ involving relation
symbols for the relations in $\Gamma$ is true in $\Gamma$.
The conjuncts in a pp-sentence $\Phi$ are also called the
\emph{constraints} of $\Phi$, and
to emphasize the connection between the structure $\Gamma$
and the constraint satisfaction problem, we typically refer to $\Gamma$ as
a \emph{constraint language}.
By choosing an appropriate constraint language $\Gamma$,
many computational problems that have been studied in the literature can be formulated as $\Csp(\Gamma)$
(see e.g.~\cite{BodirskySurvey,JBK}).

When studying the complexity of different CSPs, it is often useful
to be able to derive new relations from old.
If $\Gamma=(D;S_1,\dots,S_l)$ is a relational structure 
and $S \subseteq D^k$ is a relation, then $(\Gamma,S)$ denotes
the expansion $(D;S,S_1,\dots,S_l)$
of the structure $\Gamma$ by the relation $S$.
We say that an $n$-ary relation $S$ is 
{\em pp-definable} in $\Gamma$ if there exists a pp-formula $\phi$
with free variables $x_1,\dots,x_n$ such that $(x_1,\dots,x_n) \in S$
iff $\phi(x_1,\dots,x_n)$ holds in $\Gamma$.
The following simple but important result explains the importance
of pp-definability for constraint satisfaction problems.

\begin{lem}[Jeavons et al.~\cite{JeavonsClosure}] \label{lem:pp}
Let $\Gamma$ 
be a relational structure, and let $S$
be pp-definable over $\Gamma$. Then $\Csp((\Gamma,S))$ is polynomial-time equivalent to $\Csp(\Gamma)$.
\end{lem}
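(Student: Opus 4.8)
The plan is to prove the two directions of the polynomial-time equivalence separately, the first being immediate and the second requiring a syntactic substitution argument.

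For the direction in which $\Csp(\Gamma)$ reduces to $\Csp((\Gamma,S))$, observe that every pp-sentence over the signature of $\Gamma$ is in particular a pp-sentence over the (larger) signature of $(\Gamma,S)$, and it holds in $\Gamma$ if and only if it holds in $(\Gamma,S)$, since the two structures share the same domain and interpret the common relation symbols identically. Hence the identity map is a (trivial) polynomial-time reduction.

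For the direction in which $\Csp((\Gamma,S))$ reduces to $\Csp(\Gamma)$, fix once and for all a pp-formula $\phi(x_1,\dots,x_n) = \exists y_1,\dots,y_m.\,(\psi_1 \wedge \dots \wedge \psi_r)$ that defines $S$ over $\Gamma$; note that $\phi$, and in particular the numbers $m$ and $r$, depend only on $\Gamma$ and $S$ and are therefore constants. Given an instance $\Phi$ of $\Csp((\Gamma,S))$, process its conjuncts one at a time: leave every conjunct of the form $x = y$ or $S_i(\dots)$ unchanged, and replace each conjunct $S(z_1,\dots,z_n)$ by a fresh copy $\psi_1' \wedge \dots \wedge \psi_r'$ of the matrix of $\phi$, where each $\psi_j'$ is obtained from $\psi_j$ by substituting $z_1,\dots,z_n$ for $x_1,\dots,x_n$ and by renaming $y_1,\dots,y_m$ to variables occurring nowhere else (a new batch of $m$ fresh variables for each occurrence of an $S$-constraint). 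Finally prepend existential quantifiers for all the newly introduced variables. Call the resulting pp-sentence $\Phi'$; it uses only relation symbols of $\Gamma$.

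It remains to check correctness and the running time. Correctness follows because $\phi$ defines $S$ in $\Gamma$: unfolding one $S$-constraint at a time, $\exists y_1,\dots,y_m.\,(\psi_1 \wedge \dots \wedge \psi_r)$ holds — with the $x_i$ instantiated to given domain values — exactly when the tuple of those values lies in $S$, and renaming the witness variables $y_j$ to globally fresh variables both prevents variable capture and lets the new existential quantifiers be pulled to the front; hence $\Phi'$ is true in $\Gamma$ iff $\Phi$ is true in $(\Gamma,S)$. For the running time, each $S$-constraint is replaced by a block of $r$ atomic formulas and $m$ new quantifiers with $r$ and $m$ constant, so $|\Phi'| = O(|\Phi|)$ and $\Phi'$ is computable from $\Phi$ in linear time. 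The only point requiring care is precisely this bookkeeping — using a disjoint set of fresh witness variables for every copy of $\phi$ — rather than any genuine mathematical obstacle; the statement is essentially a normalization fact about primitive positive definitions.
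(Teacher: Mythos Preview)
Your proof is correct and is the standard argument for this folklore-style lemma. Note, however, that the paper does not actually give a proof of this statement: it is stated as a cited result from Jeavons et al.\ and used without further justification. Your substitution argument (replace each $S$-atom by a copy of the pp-definition with fresh witness variables, then pull the new existentials to the front) is exactly the proof one finds in the original reference and in standard expositions, so there is nothing to compare against here beyond observing that what you wrote matches the intended reasoning.
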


\subsection{Semi-algebraic and semi-linear relations}
\label{sect:prelims}
We say that a relation $S \subseteq D^n$ is \emph{first-order definable} in a structure
$\Gamma$ with domain $D$ if there exists a formula $\phi(x_1,\dots,x_n)$ using universal and existential quantification, disjunction, conjunction, negation, and atomic formulas over $\Gamma$ (where $x_1,\dots,x_n$ denote the free variables in $\phi$) such that $\phi(a_1,\dots,a_n)$ is true
over $\Gamma$ if and only if $(a_1,\dots,a_n) \in S$.
We always admit equality when building atomic formulas, i.e., we
have atomic formulas of the form $t_1 = t_2$ for terms $t_1,t_2$ formed from function symbols for $\Gamma$ and variables.
We say that $S$ is \emph{first-order definable in $\Gamma$ 
with parameters from $A$}, for $A \subseteq D$, if additionally we can use constant symbols
for the elements of $A$ in the first-order definition of $S$.

A set $S \subseteq \mathbb R^n$ is called \emph{semi-algebraic} if it has a first-order definition in $(\mathbb R; *,+)$ using parameters from $\mathbb R$.
Note that the order $\leq$ of the real numbers is first-order definable 
in $(\mathbb R; *,+)$, since $$a \leq b \; \Leftrightarrow \; \exists c. \; b=a+c*c \; .$$

We need some basic algebraic and topological concepts and facts.

\begin{defi}[Section 3.1 in~\cite{BasuPollackRoy}]
A set $S \subseteq {\mathbb R}^n$ is \emph{open} 
if it is the union of open balls, i.e., if every point
of $S$ is contained in an open ball contained in $S$. 
A set $S \subseteq {\mathbb R}^n$ is \emph{closed} if
its complement is open. The \emph{closure} of a set $S$,
denoted $\bar S$, 
is the intersection of all closed sets containing $S$. 
Equivalently, $\bar S = \{ x \in {\mathbb R}^n \; | \; \forall r>0 \; \exists y \in S. \; (y-x)^2 < r^2 \}$.
A point $p$ in $S$
is a {\em boundary point} if for every $\epsilon > 0$, the $n$-dimensional open
ball with radius $\epsilon$ around $p$ contains at least one point in $S$ and
one point not in $S$. The set of boundary points is denoted $\partial S$.
The \emph{interior} of $S$, denoted
by $S^\circ$, is $S \setminus \partial S$. 
\end{defi}
Note that 
the interior of $S$ consists of all $p \in S$ such that
there exists an $\epsilon > 0$ with the following property: 
the $n$-dimensional open ball
with radius $\epsilon$ around $p$ is contained in $S$.
Also note that a finite union of closed sets is closed. 

\begin{prop}[Proposition~2.2.2. in~\cite{BochnakCosteRoy}]
\label{prop:semi-algebraic-closure}
The closure of a semi-algebraic relation is semi-algebraic.
\end{prop}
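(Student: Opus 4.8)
The plan is to invoke the Tarski--Seidenberg theorem, quoted in the introduction, in the following form: every first-order formula over $(\mathbb R; *, +, \leq)$ with parameters from $\mathbb R$ is equivalent to a quantifier-free one, and hence defines a semi-algebraic set. It therefore suffices to produce a first-order definition of $\bar S$ over this structure, given one of $S$.

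First I would fix a first-order formula $\phi(y_1,\dots,y_n)$ over $(\mathbb R; *, +)$ with parameters from $\mathbb R$ defining the semi-algebraic relation $S \subseteq \mathbb R^n$; such a $\phi$ exists by the definition of \emph{semi-algebraic}. Using the description of the closure recorded in the definition above, namely
\[ \bar S = \{\, x \in \mathbb R^n \mid \forall r>0 \; \exists y \in S .\; (y-x)^2 < r^2 \,\}, \]
I would write down the formula
\[ \psi(x_1,\dots,x_n) \; := \; \forall r \; \Big( r>0 \;\rightarrow\; \exists y_1 \cdots \exists y_n \,\big( \phi(y_1,\dots,y_n) \,\wedge\, \sum_{i=1}^n (y_i-x_i)^2 < r^2 \big) \Big) . \]
Since $\leq$ (and hence $<$) is first-order definable in $(\mathbb R; *, +)$, as recorded above, $\psi$ is a first-order formula over $(\mathbb R; *, +)$ with parameters from $\mathbb R$, and by construction $(a_1,\dots,a_n) \in \bar S$ if and only if $\psi(a_1,\dots,a_n)$ holds in $(\mathbb R; *, +)$. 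Applying Tarski--Seidenberg to $\psi$ then shows that $\bar S$ is semi-algebraic.

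The argument is essentially bookkeeping, so I do not expect a real obstacle; the only points worth a word of care are (i) that the topological closure coincides with the set defined by the $\forall\exists$-formula above, which is precisely the ``Equivalently'' clause of the definition and may be taken as given, and (ii) that the real parameters occurring in $\phi$ are harmless, since Tarski--Seidenberg is quantifier elimination for the theory of real closed fields and applies uniformly in such parameters. One could instead argue more constructively, writing $S$ as a finite union of basic semi-algebraic sets of the form $\{\, P = 0,\; Q_1 > 0, \dots, Q_m > 0 \,\}$ and describing the closure of each piece explicitly, but this is longer, and the quantifier-elimination route seems cleanest here.
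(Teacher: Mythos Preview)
Your argument is correct and is exactly the standard proof: write the closure as a first-order formula over $(\mathbb R; *, +, \leq)$ and invoke Tarski--Seidenberg. Note, however, that the paper does not give its own proof of this proposition --- it is simply quoted from Bochnak--Coste--Roy (Proposition~2.2.2) as background --- so there is nothing to compare against beyond observing that the cited source proves it in essentially the same way you do.
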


We use the notion of \emph{dimension} $\dim(S) \in \mathbb N$ of a semi-algebraic set $S$
as defined in~\cite{BochnakCosteRoy}. 

\begin{defi}[Section 2.8 in~\cite{BochnakCosteRoy}]
Let $S \subseteq \mathbb R^k$ be a semi-algebraic set, and let
${\mathcal P}(S)$ be the ring of polynomial functions on $S$, i.e., 
the ring of functions $S \rightarrow {\mathbb R}$ which are the restriction of a polynomial. Then
the dimension of $S$, denoted by $\dim(S)$, is the maximal length
of chains of prime ideals of ${\mathcal P}(S)$, i.e., the maximal $d$
such that there exist distinct prime ideals $I_0, I_1,\dots,I_d$ of ${\mathcal P}(S)$ with $I_0 \subset I_2 \subset \dots \subset I_d$.
\end{defi}

To work with this definition of dimension, we need some more concepts. 

\begin{defi}[see~\cite{BochnakCosteRoy}]
Let $S \subseteq \mathbb R^k$ and $T \subseteq \mathbb R^l$
be semi-algebraic sets. A function $f \colon S \rightarrow T$ is \emph{semi-algebraic} if the set $\{(x_1,\dots,x_k,y_1,\dots,y_l) \; | \; f(x_1,\dots,x_k)=(y_1,\dots,y_l) \}$ is a semi-algebraic
subset of $\mathbb R^{k+l}$. 
\end{defi}

As usual, bijective functions $f \colon S \rightarrow T$ such that 
$S' \subseteq S$ is open if and only if $f(S') \subseteq T$ 
is open are called \emph{homeomorphisms}.



\begin{lem}[Propositions 2.8.5, 2.8.9, and 2.8.13 in~\cite{BochnakCosteRoy}]\label{lem:dim}
Let $S \subseteq \mathbb R^n$ be semi-algebraic. 
\begin{iteMize}{$\bullet$}
\item If $S=S_1 \cup S_2$ then $\dim(S) = \max(\dim(S_1),\dim(S_2))$.
\item If there is a semi-algebraic homeomorphism from $S$ to $(0,1)^d$, then $\dim(S)=d$.
\item $\dim(\bar S \setminus S) < \dim(S)$.
\end{iteMize}
\end{lem}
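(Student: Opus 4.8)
These three facts are standard in the theory of semi-algebraic dimension, and in the paper they are simply quoted from \cite{BochnakCosteRoy}; for the reader's convenience we indicate how they are obtained. The unifying tool is \emph{cylindrical algebraic cell decomposition}: every finite family of semi-algebraic subsets of $\mathbb R^n$ admits a refinement by a finite partition of $\mathbb R^n$ into semi-algebraic \emph{cells}, each semi-algebraically homeomorphic to an open cube $(0,1)^e$ for some $e$, such that every member of the family is a union of cells. A preliminary (and, in a self-contained treatment, laborious) step is to verify that the dimension $\dim(S)$ defined via chains of prime ideals of $\mathcal P(S)$ equals the largest dimension of a cell occurring in any decomposition adapted to $S$; this is the bridge between the algebraic definition and the geometry, and everything below is read off from it.

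Given the bridge, the first item is immediate: take a decomposition adapted simultaneously to $S_1$, $S_2$ and $S_1 \cup S_2$; the cells contained in $S_1 \cup S_2$ are precisely the cells contained in $S_1$ together with those contained in $S_2$, so the largest cell dimension is $\max(\dim(S_1),\dim(S_2))$. One can also argue purely algebraically: writing $I(S)$ for the ideal of polynomials vanishing on $S$, we have $\mathcal P(S) \cong \mathbb R[x_1,\dots,x_n]/I(S)$ and $I(S_1 \cup S_2) = I(S_1) \cap I(S_2)$; a prime ideal containing an intersection of two ideals contains one of them, and a chain of prime ideals is totally ordered, so any chain of primes of $\mathcal P(S_1 \cup S_2)$ lies entirely over $I(S_1)$ or entirely over $I(S_2)$ and is therefore a chain of primes of $\mathcal P(S_1)$ or of $\mathcal P(S_2)$; combined with the inequalities $\dim(S_i) \le \dim(S_1 \cup S_2)$ coming from the restriction surjections $\mathcal P(S_1 \cup S_2) \to \mathcal P(S_i)$, this yields the equality.

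For the second item it suffices to combine two ingredients: $\dim$ is invariant under semi-algebraic homeomorphism (a consequence of cell decomposition, since such a map transports a decomposition adapted to the domain to one adapted to the range), and $\dim((0,1)^d) = d$ by a direct computation --- a polynomial vanishing on the open cube $(0,1)^d$ is identically zero, since the cube is Zariski dense in $\mathbb R^d$, hence $\mathcal P((0,1)^d) \cong \mathbb R[x_1,\dots,x_d]$, whose longest chain of prime ideals has length $d$.

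The third item is the one with genuine content. Fix a decomposition adapted to $S$ and put $d = \dim(S)$, so every cell contained in $S$ has dimension at most $d$. Since $\bar S = \bigcup_{C \subseteq S} \bar C$ is a finite union, $\bar S \setminus S \subseteq \bigcup_{C \subseteq S}(\bar C \setminus C)$. The crucial geometric input is that the frontier $\bar C \setminus C$ of a single cell $C$ is a union of cells of dimension strictly smaller than $\dim(C)$; this is precisely where the recursive shape of cylindrical cells --- graphs and bands over lower-dimensional base cells along which the relevant polynomials keep a fixed sign pattern --- has to be used, and it is the main obstacle in a from-scratch proof. Granting it, each $\bar C \setminus C$ has dimension $< d$, and by the first item so does their finite union; hence $\dim(\bar S \setminus S) < \dim(S)$. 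All of this is carried out in Section~2.8 of \cite{BochnakCosteRoy}, which for our purposes we simply invoke.
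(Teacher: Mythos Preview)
The paper itself gives no proof of this lemma at all; it simply cites Propositions~2.8.5, 2.8.9, and~2.8.13 of Bochnak--Coste--Roy. Your proposal is therefore not competing with an argument in the paper but supplying one, and as a sketch of the standard development in that reference it is essentially correct.

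One point deserves tightening. For the second item you justify invariance of dimension under semi-algebraic homeomorphism by saying that ``such a map transports a decomposition adapted to the domain to one adapted to the range.'' That is not how the argument goes: a semi-algebraic homeomorphism need not carry cells to cells, and the image of a cell decomposition is in general not a cell decomposition. The usual route (as in BCR~2.8.8) is to show that any semi-algebraic map $f\colon S\to T$ satisfies $\dim f(S)\le\dim S$ --- proved by decomposing the graph of $f$ and using that projections do not increase dimension --- and then apply this in both directions to a semi-algebraic bijection. Your computation $\dim((0,1)^d)=d$ and your arguments for the first and third items are fine; in particular the algebraic proof of the first item (via $I(S_1\cup S_2)=I(S_1)\cap I(S_2)$ and the fact that a prime containing $I\cap J\supseteq IJ$ must contain $I$ or $J$) is a nice alternative to the cell-decomposition argument.
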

In particular, if $S \subseteq T$, then $\dim(S) \leq \dim(T)$.

A set $V \subseteq \mathbb R^n$ is called an \emph{(algebraic) variety}
if it can be defined as a conjunction of the form $p_1 = 0 \wedge \dots \wedge p_m = 0$ where $p_1,\dots,p_m$ are polynomials in the variables $x_1,\dots,x_n$ with coefficients from $\mathbb R$.
We allow terms in
polynomials to have degree zero.
 
\begin{lem}\label{lem:algebra}
Let $V \subseteq {\mathbb R}^n$ be a variety 
and let $L \subseteq {\mathbb R}^n$ be a line. If infinitely many points of $L$ are in $V$, then $L \subseteq V$. 
\end{lem}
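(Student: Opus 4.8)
The plan is to parametrize the line and reduce to a statement about univariate polynomials. Write $L = \{a + t v \;:\; t \in \mathbb R\}$ for some $a, v \in \mathbb R^n$ with $v \neq 0$, and suppose $V$ is defined by $p_1 = 0 \wedge \dots \wedge p_m = 0$. For each $j$, consider the univariate polynomial $q_j(t) := p_j(a + tv) \in \mathbb R[t]$, obtained by substituting the affine expressions $x_i = a_i + t v_i$ into $p_j$; this is a polynomial in the single variable $t$. A point $a + tv$ of $L$ lies in $V$ precisely when $q_1(t) = \dots = q_m(t) = 0$, i.e. when $t$ is a common root of the $q_j$.

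The key step is then the elementary fact that a nonzero polynomial in one variable over $\mathbb R$ (more generally over any field) has only finitely many roots, namely at most its degree. Since by hypothesis infinitely many points of $L$ lie in $V$, there are infinitely many values of $t$ that are common roots of all the $q_j$. In particular, for each fixed $j$, the polynomial $q_j$ has infinitely many roots, which forces $q_j$ to be the zero polynomial. Hence $q_j(t) = 0$ for every $t \in \mathbb R$ and every $j$, which says exactly that every point $a + tv$ of $L$ satisfies all the defining equations of $V$; that is, $L \subseteq V$.

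I do not expect any serious obstacle here — the argument is essentially a one-line reduction followed by the fundamental fact about roots of univariate polynomials. The only mild point of care is to note that substituting the degree-one expressions $a_i + t v_i$ for the $x_i$ into a polynomial $p_j(x_1,\dots,x_n)$ genuinely yields a polynomial in $t$ (of degree at most $\deg p_j$), rather than some more general function; this is immediate from the ring homomorphism property of evaluation. One should also remember that a line has infinitely many points, so "infinitely many points of $L$ in $V$" does translate to "infinitely many parameter values $t$", but this is automatic since $t \mapsto a + tv$ is injective when $v \neq 0$.
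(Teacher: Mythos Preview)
Your proof is correct and follows essentially the same approach as the paper: parametrize the line by univariate linear polynomials, substitute into each defining polynomial $p_j$ to obtain a univariate polynomial, and use the fact that a nonzero univariate polynomial has only finitely many roots. The paper's argument is slightly terser, but the content is identical; your added remarks about injectivity of the parametrization and the polynomial nature of the substitution are valid points of care that the paper leaves implicit.
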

\begin{proof}
Let $V$ be defined by 
$p_1(x_1,\ldots,x_n) = 0 \wedge \dots \wedge p_m(x_1,\ldots,x_n) = 0$,
and let $l_1,\dots,l_n$ be univariate linear polynomials 
such that $L=\{(l_1(x),\dots,l_n(x)) \; | \; x \in {\mathbb R}\}$. 
For each $p_i$, the univariate polynomial $p_i(l_1(x), \dots, l_n(x))$ equals 0 infinitely often.
So it is always 0, and it follows that every point on $L$ satisfies 
$p_1=0 \wedge \dots \wedge p_m = 0$. 
\end{proof}

\begin{thm}[Tarski and Seidenberg; Proposition 5.2.2 in~\cite{BochnakCosteRoy}]\label{thm:SA-QE}
Every first-order formula 
over $({\mathbb R};*,+,\leq)$ with parameters from $\mathbb R$ is equivalent to a quantifier-free formula with parameters from $\mathbb R$.
\end{thm}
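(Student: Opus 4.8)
The plan is to prove this by the classical method of \emph{eliminating one existential quantifier at a time}, the remaining work being reduced to real-root counting for univariate polynomials via Sturm sequences. It suffices to treat formulas of the form $\exists y.\,\psi(x_1,\dots,x_n,y)$ with $\psi$ quantifier-free: a general formula can be put in prenex form, its innermost quantifier removed after rewriting $\forall$ as $\lnot\exists\lnot$ and pushing the negation onto the matrix (quantifier-free formulas are closed under negation up to logical equivalence, since $\lnot(p=0)\equiv p>0\vee {-p}>0$ and $\lnot(p>0)\equiv p=0\vee {-p}>0$), and then the whole process iterated. Writing $\psi$ in disjunctive normal form and distributing $\exists y$ over the disjunction, we may moreover assume $\psi$ is a conjunction of atoms $p_j(\bar x,y)=0$ and $q_j(\bar x,y)>0$, where each $p_j,q_j$ is a polynomial in $y$ whose coefficients are polynomials in $\bar x$ (the real parameters of the original formula are carried along unchanged; think of them as further variables that are never quantified).

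The heart of the argument is \emph{sign determination}: for finitely many univariate real polynomials $g_1,\dots,g_r$, the set of sign patterns $\sigma\in\{-1,0,1\}^r$ that are \emph{realized} by a common $y\in\R$ (i.e.\ $\operatorname{sign}(g_i(y))=\sigma_i$ for all $i$) can be computed from the \emph{Tarski queries}
\[ \operatorname{TaQ}(h,g) \;:=\; \#\{\, y\in\R : g(y)=0,\ h(y)>0 \,\} - \#\{\, y\in\R : g(y)=0,\ h(y)<0 \,\} \]
with $g$ and $h$ ranging over suitable products of the $g_i$ and the $g_i'$: the integer vector of these queries is the image of the integer vector of cardinalities of the realized sign patterns under a fixed invertible matrix, so the nonempty sign patterns are recovered by linear algebra. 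In turn, the Sturm--Tarski theorem expresses $\operatorname{TaQ}(h,g)$ through the numbers of sign changes at $-\infty$ and at $+\infty$ in the \emph{signed remainder (Sturm) sequence} of $g$ and $g'h$; this sequence is produced by a gcd-like pseudo-division recursion, so after clearing denominators its entries are polynomials in the coefficients of $g$ and $h$, and the sign of each entry at $\pm\infty$ is determined by the sign of its leading coefficient and the parity of its degree. Feeding this machine with $g_1,\dots,g_r$ equal to the $p_j$, $q_j$ above (together with the constant polynomial $1$, which detects ``$\exists y$'' with no further constraint), the statement ``some $y\in\R$ realizes the sign pattern demanded by $\psi$'' becomes a Boolean combination of polynomial sign conditions in $\bar x$ (and the parameters), i.e.\ a quantifier-free formula over $(\R;*,+,\le)$, as required.

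The step I expect to be the real obstacle is the bookkeeping of \emph{degeneracies} rather than any single idea. The degree in $y$ of $p_j(\bar x,y)$ is not constant: it drops on the zero sets of its leading coefficients, which are themselves polynomials in $\bar x$. One must therefore branch on finitely many polynomial equalities and inequalities in $\bar x$ (each branch again quantifier-free), run the Sturm computation with the genuine degrees inside each branch, and verify that the remainder-sequence recursion and the sign-determination matrix remain coherent across these splits; one must also account for the contributions ``at infinity'' (the sign of $q_j(\bar x,y)$ for $y\to\pm\infty$, governed by the sign of its leading coefficient). A conceptually cleaner but less self-contained alternative avoids all of this: show that the theory of real closed ordered fields is model complete and that the ordered field of real algebraic numbers embeds into every one of its models, whence quantifier elimination follows from the standard substructure-completeness criterion; since $\R$ is a real closed field, the stated theorem is the special case.
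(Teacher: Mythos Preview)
The paper does not prove this theorem at all: it is stated with a citation to Proposition~5.2.2 of Bochnak--Coste--Roy and used as a black box throughout. So there is no ``paper's own proof'' to compare against; the authors take Tarski--Seidenberg as a known input.

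Your sketch is a fair outline of one classical route (the Sturm--Tarski sign-determination method, essentially the approach of Ben-Or--Kozen--Reif refined by Basu--Pollack--Roy), and you correctly flag the genuine technical burden, namely the case splits on vanishing leading coefficients. The alternative model-theoretic argument you mention at the end is also standard and correct. Since the paper merely cites the result, either of your two approaches would constitute \emph{more} than what the paper provides; for the purposes of this article no proof is expected, only the reference.
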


By an \emph{interval} we mean either an open, half-open, or closed
interval with more than one element.
An ordered structure $(D; \leq,\ldots)$ is
{\em o-minimal} (see~\cite{Marker}, Definition 3.1.18) if for any first-order definable $S \subseteq D$ 
with parameters from $D$
there are finitely many intervals $I_1,\ldots,I_m$ with
endpoints in $D \cup \{\pm \infty\}$ and
a finite set $D_0 \subseteq D$ such that
$S = D_0 \cup I_1 \cup \cdots \cup I_m$.
The following is an easy and well-known consequence of Theorem~\ref{thm:SA-QE}.

\begin{thm}[see e.g.~\cite{Marker}]\label{thm:o-min}
Let $R_1,\dots,R_n$ be semi-algebraic relations. Then $({\mathbb R};\leq, R_1,\dots,R_n)$ is o-minimal. 
\end{thm}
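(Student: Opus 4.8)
The plan is to derive this directly from the Tarski--Seidenberg quantifier elimination theorem (Theorem~\ref{thm:SA-QE}). Let $S \subseteq \mathbb R$ be first-order definable in $(\mathbb R; \leq, R_1,\dots,R_n)$ with parameters from $\mathbb R$, say by a formula $\psi(x)$. Since each relation $R_i$ is semi-algebraic, it has a first-order definition in $(\mathbb R; *, +)$ with parameters from $\mathbb R$; substituting these definitions for the atomic formulas of $\psi$ that involve the symbols $R_i$, we obtain a first-order formula $\psi'(x)$ over $(\mathbb R; *, +, \leq)$ with parameters from $\mathbb R$ that still defines $S$. By Theorem~\ref{thm:SA-QE}, $\psi'$ is equivalent over $(\mathbb R; *, +, \leq)$ to a quantifier-free formula $\phi(x)$ with parameters from $\mathbb R$, and hence $S$ is defined by $\phi$.

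Next I would analyze the shape of $\phi(x)$. Up to logical equivalence, $\phi$ is a Boolean combination of atomic formulas, each of which --- after moving all terms to one side, using $a = b \Leftrightarrow a - b = 0$ and $a \leq b \Leftrightarrow a - b \leq 0$ --- has the form $p(x) = 0$ or $p(x) \leq 0$ for a univariate polynomial $p$ with real coefficients (a term over $(\mathbb R; *, +)$ in the single variable $x$ is exactly such a polynomial). If $p$ is the zero polynomial, the set defined is $\emptyset$ or all of $\mathbb R$. Otherwise $p$ has a finite set $F$ of real roots, and on each of the finitely many open intervals into which $F$ partitions $\mathbb R$ the polynomial $p$ has constant sign; hence $\{x : p(x) = 0\} = F$ is finite, and $\{x : p(x) \leq 0\}$ is the union of a finite set of points and finitely many intervals with endpoints in $\mathbb R \cup \{\pm\infty\}$.

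Finally I would observe that the class $\mathcal C$ of subsets of $\mathbb R$ of the form $D_0 \cup I_1 \cup \cdots \cup I_m$, with $D_0 \subseteq \mathbb R$ finite and $I_1,\dots,I_m$ intervals with endpoints in $\mathbb R \cup \{\pm\infty\}$, contains $\emptyset$ and $\mathbb R$ and is closed under finite union (trivially), under complement (the complement of an interval is a union of at most two intervals together with finitely many points, and the complement of a finite set is a union of finitely many intervals), and therefore under finite intersection as well. By the previous step each atomic formula of $\phi$ defines a member of $\mathcal C$, and $\phi$ is a Boolean combination of such formulas, so $S \in \mathcal C$, which is exactly the defining condition for o-minimality. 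The only mildly delicate point is the bookkeeping in this last paragraph --- in particular, respecting the convention that an interval has more than one element, so that singleton intervals arising from the sign analysis and from complementation get absorbed into the finite part $D_0$ --- but this is routine, and I expect no genuine obstacle.
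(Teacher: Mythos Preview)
Your proposal is correct and follows exactly the route the paper indicates: the paper does not spell out a proof but simply states that the result is ``an easy and well-known consequence'' of Theorem~\ref{thm:SA-QE} and cites~\cite{Marker}. Your argument---reduce to a quantifier-free formula over $(\mathbb R;*,+,\leq)$ via Tarski--Seidenberg, then observe that each univariate polynomial (in)equality defines a finite union of points and intervals, and that this class is closed under Boolean combinations---is precisely the standard derivation the paper has in mind.
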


A set $S \subseteq \mathbb R^n$ is called \emph{semi-linear} if it has a first-order definition in $(\mathbb R; +,\leq)$ with parameters from $\mathbb Q$; we also call first-order formulas over $(\mathbb R; +,\leq)$
with parameters from $\mathbb Q$ \emph{semi-linear}.
It has been shown 
in~\cite{DumortierGyssensVardeurzenVanGucht,DumortierGyssensVardeurzenVanGuchtErr} that it is decidable whether a given first-order
formula over $(\mathbb R; *,+,\leq)$ with parameters from $\mathbb Q$ defines a semi-linear relation or not. 
A set $V \subseteq {\mathbb R}^n$
is called a \emph{linear set}
if it can be defined as a conjunction of the form $p_1 \geq 0 \wedge \dots \wedge p_m \geq 0$ where $p_1,\dots,p_m$ are linear polynomials in the variables 
$x_1,\dots,x_n$ with coefficients from $\mathbb Q$. 
It is not hard to see that every semi-linear relation $S$ can be viewed as a finite union
of linear sets. We also have quantifier elimination for
semi-linear relations.




\begin{thm}[Ferrante and Rackoff~\cite{FerranteRackoff}]\label{thm:LP-QE}
Every semi-linear relation has a quantifier-free definition over 
$({\mathbb R};+,-,\leq)$ with parameters from $\mathbb Q$.
\end{thm}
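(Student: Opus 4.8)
The plan is to prove the theorem by establishing quantifier elimination directly, removing one quantifier at a time via the classical ``test point'' construction. Since $\forall y\,.\,\psi$ is equivalent to $\neg\exists y\,.\,\neg\psi$, and Boolean combinations of quantifier-free formulas are quantifier-free, it suffices to show the following: if $\phi(y,\bar x)$ is quantifier-free over $(\R;+,-,\leq)$ with parameters from $\Q$, then $\exists y\,.\,\phi(y,\bar x)$ is equivalent over $\R$ to such a formula. Applying this repeatedly, by induction on quantifier rank, to an arbitrary first-order formula over $(\R;+,\leq)$ with parameters from $\Q$ --- subtraction being definable from addition, so that we may freely use $-$ --- then proves the theorem, since a semi-linear relation is by definition defined by such a formula.

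So fix a quantifier-free $\phi(y,\bar x)$. By moving terms across the relation symbol and multiplying by $\pm 1$, each atom of $\phi$ is equivalent over $\R$ to one of the form $c\cdot y \bowtie u(\bar x)$, where $c\in\Z$ is the net number of occurrences of $y$ (counted with sign), $u$ is a linear term in $\bar x$ with rational coefficients, and $\bowtie\ \in\{<,\leq,=,\neq\}$. Call $c$ the \emph{$y$-coefficient} of the atom, and when $c\neq 0$ call the rational-coefficient linear function $s(\bar x):=u(\bar x)/c$ its \emph{critical value}. Let $s_1,\dots,s_r$ list the critical values of the atoms of $\phi$ with nonzero $y$-coefficient. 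For $v$ equal to $-\infty$, to $+\infty$, or to a rational-coefficient linear term $w(\bar x)$, let $\phi[v]$ be obtained from $\phi$ by replacing every atom $A$ as follows: if $A$ has zero $y$-coefficient, leave $A$ unchanged; if $A$ has nonzero $y$-coefficient and $v=w$, substitute $w$ for $y$ in $A$ and rescale by a positive integer to clear denominators, obtaining a linear constraint over $\bar x$ with integer coefficients on the variables and a rational constant; if $A$ has nonzero $y$-coefficient and $v=\pm\infty$, replace $A$ by $\true$ or by $\neg\true$ according to the eventual truth value of $A$ as $y\to\pm\infty$ (which is determined by the sign of the $y$-coefficient and the relation symbol). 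In particular $\phi[v]$ is always quantifier-free over $(\R;+,-,\leq)$ with parameters from $\Q$.

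The heart of the argument is the equivalence
\[
\exists y\,.\,\phi(y,\bar x)\ \Longleftrightarrow\ \phi[-\infty]\,\vee\,\phi[+\infty]\,\vee\!\!\bigvee_{1\leq i\leq j\leq r}\!\!\phi\Bigl[\tfrac{s_i+s_j}{2}\Bigr].
\]
For the proof, fix $\bar x$ and consider $T:=\{y\in\R : \phi(y,\bar x)\}$. Each atom's solution set in $y$ is $\emptyset$, all of $\R$, an open or closed ray, a singleton, or a co-singleton, and any endpoint of it equals the atom's critical value at $\bar x$; hence every atom is constant on each of the finitely many maximal open intervals into which $s_1(\bar x),\dots,s_r(\bar x)$ partition $\R$, so $T$ is a union of some of these intervals together with a subset of $\{s_1(\bar x),\dots,s_r(\bar x)\}$. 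Consequently, if $T\neq\emptyset$ then $T$ contains a neighbourhood of $-\infty$, or a neighbourhood of $+\infty$, or one of the points $s_j(\bar x)$, or a bounded maximal open interval with endpoints $s_i(\bar x)<s_j(\bar x)$ and hence the midpoint $\tfrac{s_i(\bar x)+s_j(\bar x)}{2}$; in each case the corresponding disjunct on the right holds at $\bar x$, because $\phi[\pm\infty]$ computes the eventual value of $\phi$ and $\phi[w]$ evaluated at $\bar x$ equals $\phi(w(\bar x),\bar x)$. The converse is immediate, since every disjunct on the right asserts that a specific real number satisfies $\phi$. This establishes the equivalence, and since its right-hand side is quantifier-free over $(\R;+,-,\leq)$ with parameters from $\Q$, the elimination step --- and with it the theorem --- follows.

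I expect the main obstacle to be the structural claim about $T$: one has to verify that the only possible endpoints are the critical values, so that the critical values themselves, the midpoints between pairs of them, and the two infinite directions already form a complete set of test points --- and in particular that an isolated point of $T$ is always a critical value, which is why the disjunction is taken over $i\leq j$ rather than $i<j$. The remaining ingredients (normalising atoms, the induction on quantifier rank, definability of $-$ from $+$, and clearing denominators after substitution) are routine.
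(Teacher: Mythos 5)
Your argument is correct. The paper states this result as a citation to Ferrante and Rackoff and gives no proof, and what you have written is precisely the classical test-point elimination underlying that reference: normalising atoms to isolate $y$, observing that the truth of the matrix is constant on the open intervals cut out by the critical values, and substituting the critical values themselves ($i=j$), the midpoints of consecutive pairs, and the two infinite directions; the structural claim about $T$ that you flag as the main obstacle does hold, since every atom's solution set in $y$ is $\emptyset$, $\mathbb{R}$, a ray, a singleton, or a co-singleton with endpoint at its critical value, so no further verification is needed.
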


\subsection{Definability of Rational Expressions}
\label{sect:pp}

The following elementary lemma will be needed for the observation that the
feasibility problem for linear programs is polynomial-time equivalent
to $\Csp(\lingamma)$; it is also essential for the hardness proofs in Section~\ref{sect:hard} and
for proving the dichotomy result for metric temporal constraint reasoning.

\begin{lem} \label{lem:pp-rational}
Let $n_0,n_1,\ldots,n_l \in {\mathbb Q}$ be rational numbers.
Then the relation $\{(x_1,\ldots,x_l) \; | $
$n_1x_1 + \cdots + n_l x_l = n_0\}$ is pp-definable 
in $(\mathbb R; \{(x,y,z) \; | \; x+y=z\},\{1\})$. Furthermore, the pp-formula that
defines the relation can be computed in polynomial time. 
\end{lem}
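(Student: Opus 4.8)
The plan is to clear denominators to reduce to the case of integer coefficients, and then to write the desired pp-formula as a conjunction of a few small gadgets, being careful that multiplication by a coefficient presented in binary is implemented by repeated \emph{doubling} rather than by iterated addition, so that the formula stays of polynomial size.

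First I would write each $n_i$ as $p_i/q_i$ with $p_i,q_i \in \Z$ and put $Q := q_0 q_1 \cdots q_l$ and $m_i := n_i Q$. Since every $q_i$ is given in binary, $Q$ — and hence every $m_i$ — has only polynomially many bits, and $n_1 x_1 + \cdots + n_l x_l = n_0$ is equivalent to $m_1 x_1 + \cdots + m_l x_l = m_0$; so it suffices to pp-define the latter relation for integers $m_0,\dots,m_l$ of polynomial bit-length. In $(\R;A,U)$, where $A(x,y,z) \equiv x+y=z$ and $U(x) \equiv x=1$, I would record the basic gadgets: $A(z,z,z)$ defines the constant $0$; $\exists z\,(A(x,y,z) \wedge A(z,z,z))$ defines $y = -x$; and $A(x,x,y)$ defines $y = 2x$. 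Iterating the doubling gadget yields, for each $k \ge 0$, a pp-formula with variables $t_0,\dots,t_k$ and $k$ atoms expressing $t_0 = x \wedge \bigwedge_{j<k} t_{j+1} = 2 t_j$, hence $t_k = 2^k x$.

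Next, for a positive integer $m$ with binary expansion $m = \sum_{j\in S} 2^j$, $S \subseteq \{0,\dots,\lfloor \log_2 m \rfloor\}$, I would chain the doubling gadget up to exponent $\lfloor \log_2 m\rfloor$ and then sum the variables $\{t_j : j \in S\}$ with $|S|-1$ further applications of $A$, obtaining a pp-formula $\mu_m(x,y)$ of size $O(\log m)$ equivalent to $y = mx$; for negative $m$ I compose $\mu_{|m|}$ with the negation gadget, and for $m = 0$ I use the zero gadget. Applying $\mu_{m_0}$ to a variable $u$ with $U(u)$ produces a gadget defining a variable $c$ with $c = m_0$. The final formula then has free variables $x_1,\dots,x_l$, existentially quantifies $y_1,\dots,y_l$, $c$, partial sums $s_1,\dots,s_l$, and all internal gadget variables, and has as its matrix the conjunction of $\mu_{m_i}(x_i,y_i)$ for $1 \le i \le l$, the gadget forcing $c = m_0$, the atom $s_1 = y_1$, the atoms $A(s_j,y_{j+1},s_{j+1})$ for $1 \le j < l$, and $s_l = c$; if $m_i = 0$ for some $i$ I also add a trivial atom $A(x_i,z_0,x_i)$ (with $z_0$ the zero variable) so that $x_i$ still occurs. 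Correctness is immediate from the gadget semantics. For the complexity claim, the total number of atoms is $O\big(\sum_{i=0}^l \log|m_i| + l\big)$, which is polynomial in the binary input length, and writing the formula down only requires reading off the bits of the $m_i$, so it is computable in polynomial time. The one point to be careful about is precisely this size bound: implementing $y = mx$ by iterated addition would blow the formula up to size $\Theta(m)$, exponential in the input, which is why the doubling construction is needed.
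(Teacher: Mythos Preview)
Your proposal is correct and follows essentially the same approach as the paper: clear denominators to reduce to integer coefficients, then use repeated doubling (rather than iterated addition) to implement multiplication by an integer in size $O(\log m)$, and chain these gadgets together with addition atoms. The only cosmetic difference is that the paper organizes the construction by induction on $l$ (combining two variables at a time into an auxiliary variable), whereas you compute each $m_i x_i$ in parallel and then sum; both yield pp-formulas of the same polynomial size for the same reason.
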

\begin{proof}
We first note that we can assume that $n_0,n_1,\dots,n_l$ are  
integers. To see this, suppose that the rational coefficients $n_0,\ldots,n_l$ are represented as pairs 
of integers $(a_0,b_0),\ldots,(a_l,b_l)$ where
$a_i$ denotes the nominator and $b_i$ the denominator.
Let $c=\prod_{i=0}^l b_i$ and create a new sequence of coefficients
$n'_0,\ldots,n'_l = (a_0 \cdot c / b_0,1),\ldots,(a_l \cdot c / b_l,1)$. The resulting
equation is obviously equivalent. 
It is also clear that it only takes polynomial time
to compute such coefficients. 


Before the actual proof, we note that
$x=0$ is pp-definable by $x+x=x$, 
and we therefore freely use the terms $0$ and $1$ in
pp-definitions. Similarly, $x=-1$ is pp-definable by $x+1=0$.
The proof is by induction on $l$.
We first show how to express equations of the form $n_1 x_1 + n_2 x_2  
= x_3$. By setting $x_2$ to $-1$ and $x_3$ to $0$, this will solve  
the case $l=1$. 
For positive $n_1,n_2$,
the formula  $n_1 x_1 + n_2 x_2  
= x_3$ is equivalent to
\begin{align*}
\exists u_1,\dots,u_{n_1}, v_1,\dots,v_{n_2}. \quad & u_1=x_1 \wedge  
\bigwedge_{i=1}^{n_1-1}
x_1+u_i=u_{i+1} \\
\wedge \; & v_1=x_2 \wedge \bigwedge_{i=1}^{n_2-1} x_2 + v_i  = v_{i+1}  \\
\wedge \; & u_{n_1}+v_{n_2} = x_3 \; .
\end{align*}
However, this formula
is exponential in the representation size of $n_1$ and $n_2$,
and cannot be used in polynomial-time reductions.

Let ${\it bit}(n,i)$ denote the $i$-th lowest bit in the  
binary
representation of an integer $n$ and $1 \leq i \leq \lfloor \log n \rfloor + 1 
$. The following formula is equivalent to the previous one (we are still in the case that both $n_1$ and $n_2$ are positive)
and has polynomial length in the representation size
of $n_1$ and $n_2$. Write $m_1$ for 
$\lfloor \log n_1 \rfloor+1$ and $m_2$ for
$\lfloor \log n_2 \rfloor+1$.

\begin{align*}
\exists a_1,...,a_{m_1}, b_1,...,b_{m_2}, c_1,...,c_{m_1}, d_1,...,d_{m_2}. \quad
& a_1 = x_1 \wedge \bigwedge_{i=2}^{m_1}
a_{i-1}+a_{i-1}=a_{i}\\
 \wedge  \; & b_1=x_2 \wedge \bigwedge_{i=2}^{m_2}
b_{i-1} + b_{i-1}  = b_{i}  \\
\wedge \; & c_1= {\it bit}(n_1,1) a_1 \wedge \bigwedge_{i=2}^{m_1}  {\it bit}(n_1,i) a_{i} + c_{i-1} = c_i   \\
\wedge \; & d_1= {\it bit}(n_2,1) b_1 \wedge \bigwedge_{i=2}^{m_2}  {\it bit}(n_2,i) b_{i} + d_{i-1} = d_i  \\
\wedge \; & c_{m_1} + d_{m_2}
= x_3 
\end{align*}


If $l=2$, and $n_1=0$ or $n_2=0$, then the proof is similar.
If $n_1$ and $n_2$ have different signs, we replace the conjunct
$c_{m_1} + d_{m_2} = x_3$ 
in the  formula above appropriately 
by $c_{m_1} + x_3 =  
d_{m_2}$ or $d_{m_2} + x_3 =  
c_{m_1}$. 
If both $n_1$ and $n_2$ are negative,
then we use the pp-definition $\exists x_3'. (- n_1 x_1 - n_2 x_2 = 
x_3' \wedge x_3' + x_3 = 0)$.

Equalities of the form $n_1 x_1 + n_2 x_2 = n_0$ can be defined
by $\exists x_3. (n_1 x_1 + n_2 x_2 = x_3 \wedge x_3 = n_0)$.
Now suppose that $l>2$. By the inductive assumption,
there is a pp-definition $\phi_1(x_1,x_2,u)$ for
$n_1x_1 + n_2x_2 + u = n_0$ and a pp-definition $\phi_2(x_3,\dots,x_l,u)$ for $n_3x_3+ \cdots + n_l  
x_l = u$.
Then $\exists u. (\phi_1 \wedge \phi_2)$
is a pp-definition for $n_1 x_1 + \cdots + n_l x_l = n_0$.
It is clear that the pp-definition given above can be computed in time which
is polynomial in the number of bits needed to represent the input. 
\ignore{We next prove that the pp-definition can be computed in
time $O(p(m))$ time where $p$ is a fixed polynomial and $m$ denotes
the number of bits needed to represent the input.
Let $T(l,m)$ denote the maximum time needed to compute the definition
for the equations containing $l$ variables and where $m$ bits are needed
for representing the input.
We assume without loss of generality that $l \leq m$.
If $1 \leq l \leq 3$, then $T(l,m) \leq p'(m)$ for some
polynomial $p'$ since we can compute $bit(\cdot,\cdot)$ in polynomial time.
If $l>3$, then $T(l,m) \leq T(3,m)+T(l-2,m)+p''(m)$ for some
polynomial $p''$ by the inductive construction of the definition. 
Assume for simplicity that $l$ is even; the case when $l$ is odd can
be solved analogously. By unfolding the recursive
definition of $T$, we see that
\begin{align*} 
T(l,m) \leq \frac{l}{2}T(3,m)+ p'(m) + \sum_{j=1}^{l/2} p''(m)
\leq & \; l \cdot p'(m) + p'(m) + l \cdot p''(m) \\
 \leq & \; m \cdot p'(m) + p'(m) + m \cdot p''(m)
\end{align*}
which is a polynomial in $m$.
}
 \end{proof}

By extending the previous result to inequalities, we prove
that $\Csp(\lingamma)$ and linear program feasibility are
polynomial-time equivalent problems. The dichotomy for
metric temporal reasoning follows immediately
by combining this result and Theorem~\ref{LPresult}.

\begin{prop}\label{prop:lp-pp}
The linear program feasibility problem 
is polynomial-time equivalent to $\Csp(\lingamma)$.
\end{prop}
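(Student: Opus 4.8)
The plan is to prove polynomial-time reductions in both directions, the substantive part of the work being already packaged in Lemma~\ref{lem:pp-rational}.

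\emph{From $\Csp(\lingamma)$ to linear program feasibility.} An instance of $\Csp(\lingamma)$ is a pp-sentence $\exists x_1,\dots,x_n. (\psi_1 \wedge \cdots \wedge \psi_m)$ each of whose conjuncts $\psi_i$ is atomic, hence of the form $x+y=z$, $x\leq y$, $x=1$, or $x=y$ for variables among $x_1,\dots,x_n$. I would read each conjunct as a linear constraint, replacing every equality by the two non-strict inequalities it abbreviates (so $x+y=z$ becomes $x+y\leq z$ and $z\leq x+y$, and similarly for $x=1$ and $x=y$). This yields, in polynomial time, a finite system of non-strict linear inequalities in the $n$ variables with coefficients in $\{-1,0,1\}$; since the pp-sentence has no free variables, it holds in $\lingamma$ if and only if that system has a solution over $\mathbb R$. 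By the equivalence of linear inequality feasibility with linear program feasibility recalled in the introduction, this is the desired reduction.

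\emph{From linear program feasibility to $\Csp(\lingamma)$.} Here I would start from a variable set $V=\{x_1,\dots,x_k\}$ and inequalities $\sum_{i=1}^k a^{(j)}_i x_i \leq a^{(j)}_0$, $j=1,\dots,m$, with rational coefficients represented in binary. I rewrite the $j$-th inequality as $\exists s_j.\big(\sum_{i=1}^k a^{(j)}_i x_i + s_j = a^{(j)}_0 \wedge 0 \leq s_j\big)$ with a fresh slack variable $s_j$. The equality is a relation of the form treated in Lemma~\ref{lem:pp-rational} (with the constant $a^{(j)}_0$ moved to the right-hand side, which is handled inside that lemma via an auxiliary existentially quantified variable pinned to the value $a^{(j)}_0$), so it is pp-definable in $(\mathbb R;\{(x,y,z)\mid x+y=z\},\{1\})$, and the constant $0$ occurring in $0\leq s_j$ is pp-definable by $s+s=s$. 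Conjoining the resulting pp-formulas for all $j$ over the shared variables $x_1,\dots,x_k$ and disjoint fresh variables, and existentially quantifying everything, produces a pp-sentence $\Phi$ over $\lingamma$. An assignment to $x_1,\dots,x_k$ satisfies all inequalities iff it extends to a satisfying assignment of the quantifier-free part of $\Phi$, so $\Phi$ is true in $\lingamma$ iff the linear program is feasible, and by the ``furthermore'' clause of Lemma~\ref{lem:pp-rational} the sentence $\Phi$ is computable in polynomial time.

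I do not expect a genuine obstacle here: the real difficulty — encoding binary-represented rational coefficients by a logarithmic-size repeated-doubling pp-formula rather than the naive exponential unary expansion — is precisely what Lemma~\ref{lem:pp-rational} already resolves. The only points needing care are keeping the encoding of the constant terms $a^{(j)}_0$ and of $0$ within the signature of $\lingamma$, which is exactly what the opening remarks of the proof of Lemma~\ref{lem:pp-rational} supply, and observing that replacing equalities by pairs of inequalities (in the first direction) causes at most a factor-two blow-up. One could alternatively invoke Lemma~\ref{lem:pp} after noting that each relation $\{(x_1,\dots,x_k)\mid \sum_i a_i x_i \leq a_0\}$ is pp-definable over $\lingamma$; but because the coefficients range over the input, the direct construction above is more transparent.
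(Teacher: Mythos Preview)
Your proof is correct and follows essentially the same approach as the paper's: both directions are handled by viewing the atomic constraints of $\lingamma$ as linear inequalities in one direction, and by invoking Lemma~\ref{lem:pp-rational} to encode rational-coefficient linear equalities (plus a fresh variable to absorb the inequality) in the other. The only cosmetic difference is that the paper introduces a variable $y$ with $c_1x_1+\cdots+c_lx_l-y=0$ and then imposes $y\leq c_0$, whereas you use a slack variable $s_j$ with $\sum_i a_i x_i + s_j = a_0$ and $0\leq s_j$; these are interchangeable reformulations of the same reduction.
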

\begin{proof}
It is clear that an instance of $\Csp(\lingamma)$ can be seen as a linear program feasibility problem,
since the three different relations in the constraint language, 
$x+y=z$, $x=1$, $x \leq y$, are linear.

For the opposite direction, let $\Phi$ be an arbitrary instance of 
the linear program feasibility problem. 
Given a linear equality $L(x_1,\dots,x_k) \equiv c_1x_1 + \cdots + c_k x_k = c_0$, let $\phi_{L(x_1,\dots,x_k)}$ 
denote the pp-definition of $L(x_1,\dots,x_k)$ 
in $(\mathbb R; \{(x,y,z) \; | \; x+y=z\},\{1\})$
obtained in Lemma~\ref{lem:pp-rational}.
Construct an instance $\Psi$ of $\Csp(\lingamma)$ by replacing each occurrence
of a linear inequality constraint $c_1x_1+ \cdots c_lx_l \leq c_0$ 
in $\Phi$ by a 
$\phi_{c_1x_1+\dots+c_lx_l-y=0} \wedge y \leq c_0$; use
fresh variables for $y$ and for the existentially quantified variables
introduced by $\phi_L$.
The resulting formula $\Psi$ can be rewritten as a primitive positive sentence over $\Gamma$ without increasing its length and,
by Lemma~\ref{lem:pp-rational}, the length of $\Psi$ is
polynomial in the length of $\Phi$. Since $\Phi$ is satisfiable if
and only if $\Psi$ is satisfiable, this shows that the problems
are polynomial-time equivalent. 
\end{proof}

\section{Hardness}\label{sect:hard}
We consider relations that give rise to NP-hard CSPs
in this section. We first need some definitions:
a relation $S \subseteq {\mathbb R}^k$ is {\em convex} if
for all $p,q \in S$, $S$ contains all points on the line
segment between $p$ and $q$.
We say that a relation $S \subseteq {\mathbb R}^k$ {\em excludes an interval} if there are
$p,q \in S$ and real numbers $0 < \delta_1 < \delta_2 < 1$ such that
$p+(q-p)y \not\in S$ whenever $\delta_1 \leq y \leq \delta_2$.
Note that we can assume that
$\delta_1,\delta_2$ are rational numbers, since we can choose
any two distinct rational numbers $\gamma_1<\gamma_2$ between $\delta_1$ and $\delta_2$
instead of $\delta_1$ and $\delta_2$.

\begin{defi}
We say that $S \subseteq \R^n$ is \emph{essentially convex} if for all $p,q \in S$ there are only finitely many points on
the line segment between $p$ and $q$ that are not in $S$. 
\end{defi}

If $S$ is \emph{not} essentially convex, and if
 $p$ and $q$ are such that there are infinitely many points on the line
 segment between $p$ and $q$ that are not in $S$, 
 then $p$ and $q$ {\em witness} that $S$ 
 is not essentially convex.
The following is a direct consequence of Theorem~\ref{thm:o-min},
and we will use it in the following without further reference.

\begin{cor}\label{cor:excludes}
If $S$ is a semi-algebraic relation that is not essentially convex,
then $S$ excludes an interval. If $S$ is an essentially convex semi-algebraic relation, 
and $a,b$ are two distinct points from $S$, then the line segment between $a$ and $b$ contains
an interval $I$ with $I \subseteq S$.
\end{cor}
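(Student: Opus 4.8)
## Proof Proposal for Corollary~\ref{cor:excludes}

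The plan is to derive both statements from o-minimality (Theorem~\ref{thm:o-min}) applied to a suitable one-dimensional restriction of $S$. The key idea is that a line segment, parametrized affinely, pulls the semi-algebraic set $S$ back to a semi-algebraic subset of $\mathbb R$, to which the structural description of o-minimal definable sets applies directly.

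First I would fix $p, q \in S$ with $p \neq q$ and consider the affine map $f \colon \mathbb R \to \mathbb R^n$ given by $f(y) = p + (q-p)y$, so that $f([0,1])$ is the line segment between $p$ and $q$. The preimage $T := \{ y \in [0,1] \mid f(y) \in S\}$ is first-order definable (with real parameters) in $(\mathbb R; \leq, S)$, hence by Theorem~\ref{thm:o-min} and the definition of o-minimality, $T$ is a finite union $D_0 \cup I_1 \cup \cdots \cup I_m$ of a finite set $D_0$ and finitely many intervals. Note $0,1 \in T$ since $p, q \in S$.

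For the first claim: suppose $S$ is not essentially convex, witnessed by $p$ and $q$, so infinitely many points of the segment are \emph{not} in $S$; that is, $[0,1] \setminus T$ is infinite. Since $T$ is a finite union of a finite set and finitely many intervals, its complement in $[0,1]$ is also a finite union of points and intervals, and being infinite it must contain an interval $(\alpha,\beta)$ with $\alpha < \beta$. Choosing rational $\delta_1, \delta_2$ with $\alpha < \delta_1 < \delta_2 < \beta$ (and noting we may assume $0 < \delta_1$, $\delta_2 < 1$ by shrinking, since $0, 1 \in T$), we get $p + (q-p)y \notin S$ for all $\delta_1 \leq y \leq \delta_2$, which is exactly the statement that $S$ excludes an interval. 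For the second claim: if $S$ is essentially convex and $a, b \in S$ are distinct, then applying the same analysis with $p = a$, $q = b$, the complement $[0,1]\setminus T$ is finite, so $T$ itself contains an interval; mapping this interval forward under $f$ yields an interval $I \subseteq S$ on the segment between $a$ and $b$.

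I do not expect a serious obstacle here — the corollary is essentially a translation of o-minimality into the language of line segments. The only point requiring a little care is making sure the interval found in the complement can be taken with rational endpoints strictly inside $(0,1)$, but this is immediate from the remark preceding the corollary (any two rationals between $\delta_1$ and $\delta_2$ work) together with the fact that the endpoints $0$ and $1$ lie in $T$, so the excluded interval is bounded away from them.
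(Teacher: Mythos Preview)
Your approach is correct and is exactly what the paper intends: the corollary is stated there as a direct consequence of Theorem~\ref{thm:o-min} with no further proof, and your parametrization of the segment followed by an appeal to o-minimality is the natural way to spell this out. One small technical correction: $T$ is not literally first-order definable in $(\mathbb R;\leq,S)$, since writing $p+(q-p)y$ requires $+$ and $*$; the clean fix is to observe that $T$ is itself a semi-algebraic subset of $\mathbb R$ (being definable in $(\mathbb R;*,+,\leq)$ with real parameters), so Theorem~\ref{thm:o-min} applied to $(\mathbb R;\leq,T)$ --- or equivalently to $(\mathbb R;\leq,S,+,*)$ --- gives the required decomposition.
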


The next proposition will be used several times in the sequel; it
clarifies the relation
between finite unions of varieties and essentially convex relations.

\begin{prop}\label{prop:indep}
Let $W$ be a finite union of varieties $V_1,\dots,V_k \subseteq {\mathbb R}^n$, and let $C \subseteq W$ be essentially convex.
Then, there is an $i \leq k$ such that $C\subseteq V_i$.
\end{prop}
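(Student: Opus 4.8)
The plan is to argue by contradiction: suppose $C \not\subseteq V_i$ for every $i \le k$. The goal is to produce two points $p, q \in C$ witnessing that $C$ is not essentially convex, i.e., such that the open line segment between them meets $C$ only finitely often (in fact we will get it meeting $C$ in at most one point, contradicting Corollary~\ref{cor:excludes}, which says an essentially convex semi-algebraic set contains a whole subinterval of any segment between two of its points). The natural way to find such $p,q$ is to choose them in "general position" with respect to the varieties $V_1,\dots,V_k$ so that the line through them is contained in none of the $V_i$; then by Lemma~\ref{lem:algebra} each $V_i$ meets that line in only finitely many points, hence $W = \bigcup_i V_i$ does too, and since $C \subseteq W$ the segment between $p$ and $q$ lies in $C$ only at finitely many points.

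The key step is therefore the general-position argument. First I would discard any $V_i$ with $V_i \supseteq C$; by assumption there is at least one remaining index, and if there are none left we are already done since then $C$ would have to be empty or a single point, both of which are trivially contained in some $V_i$ — more carefully, if every $V_i$ either contains $C$ or we have discarded it, there is nothing to discard, contradiction, so at least one $V_i$ with $C \not\subseteq V_i$ exists; relabel so these are $V_1,\dots,V_m$ with $m \ge 1$ and $C \setminus V_i \ne \emptyset$ for $i \le m$. For each such $i$ pick a point $c_i \in C \setminus V_i$. Now I want a single point $p \in C$ lying outside all of $V_1,\dots,V_m$ simultaneously, and then a second point $q \in C$ so that the line $\overline{pq}$ is contained in no $V_i$. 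To get $p$: the set $C \setminus (V_1 \cup \dots \cup V_m)$ is nonempty because $C$ is essentially convex and hence, by Corollary~\ref{cor:excludes}, contains subintervals of segments between its points, whereas each $V_i \cap C$ — being contained in a proper subvariety relative to $C$ — has dimension strictly less than $\dim C$; using Lemma~\ref{lem:dim} (a finite union of lower-dimensional sets cannot cover $C$) we conclude $C \not\subseteq \bigcup_{i\le m} V_i$, so such a $p$ exists. To handle the line: fix $p$ as above and consider, for a second variable point $q \in C$, whether the line $\overline{pq} \subseteq V_i$. Since $p \notin V_i$, the line $\overline{pq}$ is never contained in $V_i$ — a line contained in $V_i$ through no point of $V_i$ is impossible. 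Hence for every $q \in C$ (with $q \ne p$) the line $\overline{pq}$ lies in no $V_i$, so by Lemma~\ref{lem:algebra} it meets each $V_i$, and therefore $W$, in only finitely many points.

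Finally, pick any $q \in C$ with $q \ne p$ (possible unless $C = \{p\}$, in which case $C \subseteq V_1$ trivially and we are done). The segment between $p$ and $q$ lies inside $W$ except possibly at points not in any $V_i$ — but $C \subseteq W$, so in fact the segment lies in $C$ only at the finitely many points where it meets $W$. Yet $C$ is essentially convex and semi-algebraic (it is semi-algebraic as a subset cut out inside the semi-algebraic $W$; if $C$ is not a priori known to be semi-algebraic, note that essential convexity already forbids a segment between two of its points from lying in $C$ only finitely often — that is the definition), so infinitely many points of the segment between $p$ and $q$ must lie in $C$, a contradiction. Hence $C \subseteq V_i$ for some $i$. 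The main obstacle I expect is the dimension-counting step establishing $C \not\subseteq \bigcup_{i \le m} V_i$: one must be careful that "$V_i \cap C$ has dimension $< \dim C$" really follows from $C \not\subseteq V_i$, which uses that $C$ is essentially convex (so it is not, e.g., a weird set all of whose points happen to sit on the lower-dimensional variety while still being "spread out"); the clean route is to invoke that essentially convex semi-algebraic sets contain intervals on every chord, pick such an interval avoiding the finitely-many-intersection points with each $V_i$, and read off a point $p$ from it.
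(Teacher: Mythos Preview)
Your strategy has a fatal gap at the step where you try to find a point $p \in C \setminus (V_1 \cup \dots \cup V_m)$. Under the contradiction hypothesis no $V_i$ contains $C$, so your discarding step removes nothing and $m=k$; but then $V_1 \cup \dots \cup V_m = W$, and since $C \subseteq W$ by hypothesis, the set $C \setminus \bigcup_{i \le m} V_i$ is \emph{empty}. Your dimension argument cannot rescue this: if each $C \cap V_i$ really had dimension strictly less than $\dim C$, then by Lemma~\ref{lem:dim} their union would too, yet that union equals $C \cap W = C$, giving $\dim C < \dim C$. So the premise ``$\dim(C \cap V_i) < \dim C$ for all $i$'' is simply false in general, and no amount of care will make the step go through. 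The ``clean route'' you sketch at the end is circular: to know a chord meets each $V_i$ only finitely often you need the line through it to lie in no $V_i$, which is exactly what you are trying to arrange.

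The paper avoids this by never looking for a point outside all the varieties. Instead it takes a \emph{minimal} subset $J \subseteq \{1,\dots,k\}$ with $C \subseteq \bigcup_{i \in J} V_i$, and if $|J| \ge 2$ uses minimality to pick $a \in C$ lying in exactly one $V_i$ (for some $i \in J$) and $b \in C$ lying in exactly one $V_j$ (for some $j \in J$, $j \neq i$). Essential convexity then gives an interval $I \subseteq C$ on the segment $\overline{ab}$; since $I \subseteq C \subseteq \bigcup_{l \in J} V_l$, some $V_l$ contains infinitely many points of $I$, so by Lemma~\ref{lem:algebra} the whole line through $a$ and $b$ lies in $V_l$, forcing both $a \in V_l$ and $b \in V_l$, contradicting their choice. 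The key difference from your approach is that the contradiction is obtained from two points each lying in a \emph{unique} variety of the cover, rather than from a point lying in \emph{none} of them.
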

\begin{proof}
Let $J \subseteq \{1,\dots,k\}$ be minimal such that 
$C \subseteq \bigcup_{i \in J} V_i$. If $|J|=1$, then there is nothing to show.
So suppose for contradiction that  there are distinct $i,j \in J$.
Then there must be points $a,b \in C$ such that $a \in V_i$
and $a \notin V_l$ for all $l \in J \setminus \{i\}$,
and $b \in V_j$ and $b \notin V_l$ for all $l \in J \setminus \{j\}$.
By essential convexity of $C$ and Corollary~\ref{cor:excludes}, 
the line segment $L$ between $a$ and $b$ must contain an interval $I$ that lies in $C$. 
Since $J$ is finite, there must be $l \in J$ such that infinitely many points on $I$ are from $V_l$.
By Lemma~\ref{lem:algebra}, all points on the line through $a$ 
and $b$ are from $V_l$; this contradicts the choice of $a$ and $b$.
 \end{proof}

The rest of the section is divided into two parts. We first prove that
if $S \subseteq {\mathbb R}^k$ is a semi-algebraic relation that
is not essentially convex and this
is witnessed by two rational points $p$ and $q$, then
$\Csp((\lingamma, S))$ is NP-hard. 
In the second part, we prove that if
$S \subseteq {\mathbb R}^k$ is a \emph{semi-linear} relation that
is not essentially convex, then this is witnessed by rational points and,
consequently, $\Csp((\lingamma,S))$ is NP-hard. 

\subsection{Semialgebraic relations and rational witnesses}

We begin with the special case when $S$
is a unary relation.
The hardness proof is by a reduction from $\Csp((\{0,1\}; R_{1/3}))$ 
where $$R_{1/3}=\{(1,0,0),(0,1,0),(0,0,1)\} \; .$$
This NP-complete problem is also called 
{\sc Positive One-In-Three 3Sat} \cite[LO4]{GareyJohnson}, which is the variant of
{\sc One-In-Three 3Sat} where we have the extra requirement that in all input instances of the problem, no clause contains a negated literal.

\begin{lem} \label{zero-one-rel}
Let $S \subseteq {\mathbb R}$ be a unary relation.
If $S$ excludes an interval and this is witnessed by rational 
points $p$ and $q$, 
then $\Csp((\lingamma, S))$ is
NP-hard.
\end{lem}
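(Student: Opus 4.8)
The plan is to reduce $\Csp((\{0,1\};R_{1/3}))$ (Positive One-In-Three 3Sat) to $\Csp((\lingamma,S))$. First I would \emph{normalise}: since $S$ excludes an interval witnessed by rationals there are rationals $p<q$ in $S$ and rationals $0<\delta_1<\delta_2<1$ with $p+(q-p)y\notin S$ for $\delta_1\le y\le\delta_2$. Conjoining with $S$ the constraint $y=(x-p)/(q-p)$ (pp-definable over $\lingamma$ by Lemma~\ref{lem:pp-rational}) and intersecting with $[0,1]$ (the constants $0$ and $1$ and $\leq$ are available), I may assume $0,1\in S$, $S\subseteq[0,1]$, and $(a,b)\cap S=\emptyset$ for rationals $0<a<b<1$ with $a,b\notin S$. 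By Theorem~\ref{thm:o-min} and o-minimality $S$ is now a finite union of points and intervals, with a genuine gap $(a,b)$ separating its ``$0$-part'' from its ``$1$-part''.

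The core step is to pp-define from $(\lingamma,S)$ a \emph{finite} unary relation $E$ with $\{0,1\}\subseteq E\subseteq[0,1]$ admitting a map $\pi\colon E\to\{0,1\}$ with $\pi(0)=0$, $\pi(1)=1$ such that $e_1+e_2+e_3=1$ with $e_1,e_2,e_3\in E$ always forces $(\pi e_1,\pi e_2,\pi e_3)\in R_{1/3}$. The only tool is to conjoin constraints of the form $S(\lambda x+\mu)$ for rational $\lambda,\mu$: such a constraint keeps an element $x$ iff $\lambda x+\mu\in S$, it keeps $0$ iff $\mu\in S$ and keeps $1$ iff $\lambda+\mu\in S$, and it destroys every $x$ whose image $\lambda x+\mu$ falls in the gap $(a,b)$ or outside $[0,1]$. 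Choosing $\lambda,\mu$ so that $0$ and $1$ are preserved while some unwanted interval component of the current relation is mapped into $(a,b)$, one shaves off (or shortens) that component; a potential such as the total length of the interval components strictly decreases, so after finitely many steps the relation is finite, and a few further steps of the same kind can be used to remove any rational point other than $0$ and $1$ (and, when available, to pass to the reflection $1-E$ to kill symmetric configurations). For the resulting $E$ one then checks that a valid $\pi$ exists: since $E\cap\Q=\{0,1\}$, an equation $e_1+e_2+e_3=1$ forces the coordinates with respect to any irrational element of a $\Q$-basis of $\mathrm{span}_\Q(E)$ to cancel, which — together with $e_i\in[0,1]$ — restricts the admissible multisets $\{e_1,e_2,e_3\}$ enough to define $\pi$ consistently; in the simplest case $E=\{0,1\}$ and $\pi$ is the identity. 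I expect this step — the affine-composition bookkeeping, in particular interval components that contain $0$ or $1$, together with the existence of $\pi$ — to be the main obstacle, and to need a careful case analysis on the shape of $S$ (cf.\ Corollary~\ref{cor:excludes}).

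Given $E$ and $\pi$, the reduction is immediate: a Positive One-In-Three 3Sat instance with variables $v_1,\dots,v_n$ and clauses $c_1,\dots,c_m$ is mapped to the pp-sentence
\[
\exists x_1,\dots,x_n.\ \bigwedge_{i=1}^n E(x_i)\ \wedge\ \bigwedge_{\{v_i,v_j,v_k\}\in\{c_1,\dots,c_m\}}\bigl(x_i+x_j+x_k=1\bigr),
\]
where $E(\cdot)$ is replaced by its pp-definition over $(\lingamma,S)$ and each equation $x_i+x_j+x_k=1$ by its pp-definition over $\lingamma$ from Lemma~\ref{lem:pp-rational}; this can be computed in polynomial time. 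A one-in-three assignment $\beta$ gives the $\{0,1\}$-valued (hence $E$-valued) solution $x_i=\beta(v_i)$, with each clause summing to $1$; conversely, if $x^*$ is any $E$-valued solution then each clause equation holds, so $v_i\mapsto\pi(x^*_i)$ is a one-in-three assignment. Hence $\Csp((\{0,1\};R_{1/3}))$ reduces in polynomial time to $\Csp((\lingamma,S))$, and by Lemma~\ref{lem:pp} (applied to the pp-definable relations used) the latter is NP-hard.
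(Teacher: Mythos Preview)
Your plan diverges substantially from the paper's proof, and the divergence is at the ``core step'', which is where the real difficulty lies.

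The paper does \emph{not} try to manufacture a finite set $E$ or an exact encoding via $x_i+x_j+x_k=1$. Instead it performs a \emph{single} rational affine change of variable. By first choosing rationals $\delta_1<\delta_2$ so that the excluded subinterval $[\delta_1,\delta_2]$ is close to maximal (specifically, within a factor so that $S$ meets both $[\delta_1-d,\delta_1]$ and $[\delta_2,\delta_2+d]$ for $d=(\delta_2-\delta_1)/5$), one affine map yields a pp-definable $U\subseteq[0,1]$ that contains a point in $[0,1/7]$, a point in $[6/7,1]$, and nothing in $[1/7,6/7]$. The clause encoding then uses the \emph{inequalities} $6/7\le x_i+x_j+x_k\le 11/7$, not an equation; any $U$-valued assignment satisfying these is forced to put exactly one variable in the ``high'' block, and conversely a one-in-three assignment realised by the two chosen points satisfies them. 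The whole construction is one affine map plus a numerical check.

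Your approach, by contrast, has two genuine gaps. First, the shrinking of $S$ to a finite $E$ via constraints $S(\lambda x+\mu)$ is not justified: to keep $0$ and $1$ you need $\mu$ and $\lambda+\mu$ in $S$, and since $\lambda,\mu$ must be rational this forces $\mu,\lambda+\mu\in S\cap\Q$. If $S\cap\Q=\{0,1\}$ (which is perfectly possible for a semi-algebraic $S$), the only admissible non-trivial map is $x\mapsto 1-x$; when $S$ is in addition symmetric under this reflection you cannot shrink it at all, so $E=S$. ``Total length strictly decreases'' also does not by itself give termination.

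Second, and more seriously, $E\cap\Q=\{0,1\}$ does \emph{not} guarantee a map $\pi$ with the required property. Take irrationals $\alpha,\beta$ with $1,\alpha,\beta$ $\Q$-independent and let
\[
E=\{0,1,\alpha,\beta,\;1-\alpha-\beta,\;1-2\alpha,\;1-2\beta,\;2\alpha+2\beta-1\}\subseteq(0,1)\cup\{0,1\}.
\]
Then $(\alpha,\alpha,1-2\alpha)$, $(\beta,\beta,1-2\beta)$ and $(1-\alpha-\beta,1-\alpha-\beta,2\alpha+2\beta-1)$ each sum to $1$, forcing $\pi(\alpha)=\pi(\beta)=\pi(1-\alpha-\beta)=0$; but $(\alpha,\beta,1-\alpha-\beta)$ also sums to $1$ and now has no coordinate with $\pi$-value $1$. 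So the ``irrational coordinates cancel'' heuristic is not enough to produce $\pi$, and you would still have to argue that such bad $E$ can always be avoided --- which, given the first gap, you cannot do with the tools you list.

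The fix is exactly the idea you are missing: relax the clause constraint from an equation to a pair of inequalities, so that ``boolean'' values need only lie near $0$ or near $1$. Then a single well-chosen affine map suffices and the whole $\pi$/finiteness machinery becomes unnecessary.
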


\begin{proof}
We know that there are rational numbers $0 < \delta_1 < \delta_2 < 1$ such that
$p+(q-p)y \not\in S$ whenever $\delta_1 \leq y \leq \delta_2$.
Let

\[a=\sup \{\delta_2-\delta_1 \; | \; \mbox{$0 < \delta_1 < \delta_2 < 1$ and $[p+(q-p)\delta_1,
p+(q-p)\delta_2] \cap S = \emptyset$}\},\]

\noindent
i.e., the least upper bound 
 on the length (scaled to the interval [0,1]) of excluded 
intervals between
$p$ and $q$.
Choose rational numbers $\delta_1,\delta_2$ such that

\begin{iteMize}{$\bullet$}
\item
there exists $y \in [\delta_1-d,\delta_1]$ such that $p+(q-p)y \in S$; and

\item
there exists $y \in [\delta_2,\delta_2+d]$ such that $p+(q-p)y \in S$.

\item
$S \cap [p+(q-p)\delta_1,p+(q-p)\delta_2] = \emptyset$.
\end{iteMize}

\noindent
where $d=(\delta_2-\delta_1)/5$.
It is easy to see that such $\delta_1,\delta_2$ exist;
we simply need to find $\delta_1,\delta_2$ such that $S \cap [p+(q-p)\delta_1,p+(q-p)\delta_2] = \emptyset$
and $\delta_2-\delta_1$ is sufficiently close to $a$. Clearly, for any $\epsilon > 0$,
there exist suitable $\delta_1,\delta_2$ such that $a-(\delta_2-\delta_1) < \epsilon$.

Now, define $p'=p+(q-p)(\delta_1-d)$, $q'=p+(q-p)(\delta_2+d)$, and
\begin{align*}
U(y) \equiv \exists z.
(z = p' + (q'-p') y \; \wedge \; S(z) \; \wedge \; 0 \leq y \leq 1).
\end{align*}

\noindent
Observe that $U$ is pp-definable in $\lingamma \cup \{S\}$ 
by Lemma~\ref{lem:pp-rational} combined by the fact that $p'$ and $q'$ are rational numbers.
We claim that
$U$ contains at least one point in the interval
$[0,d']$, at least one point in the interval $[1-d',1]$,
and no points in the interval $[d',1-d']$ where $d'=1/7$.
Let us consider the interval $[0,d']$. The point (expressed in $p$ and $q$) corresponding
to $y=0$ is $p'$ (which equals $p+(q-p)(\delta_1-d)$) while the point
corresponding to $y=1/7$ is

\begin{align*}
p'+\frac{(q'-p')}{7} & =p+(q-p)(\delta_1-d)+\frac{p+(q-p)(\delta_2+d)-p-(q-p)(\delta_1-d)}{7}\\
& =p+(q-p)(\delta_1-d)+\frac{(q-p)((\delta_2+d)-(\delta_1-d))}{7}\\
& =p+(q-p)(\delta_1-d)+\frac{(q-p)(\delta_2-\delta_1+2d)}{7}\\
& =p+(q-p)(\delta_1-d)+\frac{(q-p)(5d+2d)}{7}\\
& =p+(q-p)(\delta_1-d)+(q-p)d\\
& =p+(q-p)\delta_1
\end{align*}

We know that the choice of $\delta_1$ and $\delta_2$ implies that 
there is at least one point in $S$ on the
line segment between
$p+(q-p)(\delta_1-d)$ and $p+(q-p)\delta_1$. The other two cases can
be proved similarly.

We 
show NP-hardness by a
polynomial-time reduction from $\Csp((\{0,1\}; R_{1/3}))$.
Let $\phi$ be an arbitrary instance of this problem and let $V$ denote the set
of variables appearing in $\phi$.
Construct a formula 
\[\psi \equiv \bigwedge_{v \in V} U(v) \wedge \bigwedge_{R_{1/3}(v_i,v_j,v_k) \in \phi}
v_i+v_j+v_k \geq \frac{6}{7} \; \wedge \bigwedge_{R_{1/3}(v_i,v_j,v_k) \in \phi} v_i+v_j+v_k \leq \frac{11}{7}.\]
Lemma~\ref{lem:pp-rational} implies that $\psi$ is pp-definable in $(\mathbb R; \{(x,y,z) \; | \; x+y=z\},\{1\},\leq,U)$
(and, consequently, pp-definable in $(\mathbb R; \{(x,y,z) \; | \; x+y=z\},\{1\},\leq,S)$)
and the formula can be constructed in polynomial time.
We now verify that the formula $\psi$
has a solution if and only if $\phi$ has a solution.

Assume that there exists a satisfying truth assignment $f \colon V \rightarrow \{0,1\}$ to the
formula $\phi$. We construct a solution $g$ for $\psi$ as follows: arbitrarily choose
a point $t_0 \in [0,d']$ such that $t_0 \in U$ and a point
$t_1 \in [1-d',1]$ such that $t_1 \in U$. Let $g(v)=t_0$ if $f(v)=0$ and
$g(v)=t_1$, otherwise. Clearly, this assignment satisfies every literal of
the type $U(v)$.
Each literal $v_i+v_j+v_k \geq 6/7$ is satisfied, too, since $g(v_i)+g(v_j)+g(v_k)=2 \cdot t_0+t_1 \geq 2 \cdot 0 + (1-d') = 1-d'=6/7$. Similarly, each literal $v_i+v_j+v_k \leq 11/7$
is also satisfied:  $g(v_i)+g(v_j)+g(v_k)=2 \cdot t_0+t_1 \leq 2 \cdot d'+1 = 9/7$.

Assume now instead that there exists a satisfying assignment $g \colon V \rightarrow {\mathbb R}$
for the formula $\psi$. Each variable obtains a value that is in either the
interval $[0,d']$ or in the interval $[1-d',1]$.
If a variable is assigned a value in $[0,d']$, then we consider
this variable `false', i.e., having the truth value 0; analogously,
variables assigned values in $[1-d',1]$ are considered `true'.

We continue by looking at an arbitrary literal
$R_{1/3}(v_i,v_j,v_k)$ in $\phi$ and its corresponding
inequalities (1) $v_i+v_j+v_k \geq 6/7$ and (2) $v_i+v_j+v_k \leq 11/7$.
If all three variables are assigned values within $[0,d']$, then their sum
is at most $3d'=3/7$ which violates inequality (1).
If two of the variables appear within $[1-d',1]$, then their sum
is a least $0+2(1-d')=12/7$ which violates inequality (2); naturally, this inequality
is violated if all three variables appear within $[1-d',1]$, too.
If exactly one variable appears within $[1-d',1]$, then
the sum of the variables is at least $1-d'=6/7$ and at most $1+2d'=9/7$.
We see that both inequality (1) and (2) are satisfied.
Hence, we can define a satisfying assignment $f \colon V \rightarrow \{0,1\}$ for $\phi$:

\[ f(v) = \left\{ \begin{array}{ll}
  0 & \mbox{\; \;if $0 \leq g(v) \leq d'$} \\
  1 & \mbox{\; \; otherwise}
\end{array}
\right. \]
\noindent
This concludes the proof.
\end{proof}

It is now straightforward to lift Lemma~\ref{zero-one-rel} to
relations with arbitrary arities.

\begin{lem} \label{lem:semialghard}
Let $S \subseteq {\mathbb R}^k$ be a semi-algebraic relation that
is not essentially convex, and this
is witnessed by two rational points $p=(p_1,\ldots,p_k)$ and $q=(q_1,\ldots,q_k)$. Let
$\Gamma$ be the structure $(\lingamma, S)$. Then, $\Csp(\Gamma)$ is NP-hard. 
\end{lem}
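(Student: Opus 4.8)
The plan is to reduce the general $k$-ary case to the unary case already handled in Lemma~\ref{zero-one-rel}. Given the semi-algebraic relation $S \subseteq {\mathbb R}^k$ that is not essentially convex, witnessed by rational points $p,q \in S$, the key idea is to parametrize the line through $p$ and $q$ by a single real variable and pull back $S$ to a unary relation on that parameter. Concretely, I would define
\[
T \; := \; \{ y \in {\mathbb R} \; | \; p + (q-p) y \in S \} \; \subseteq \; {\mathbb R}.
\]
Since $S$ is semi-algebraic and the map $y \mapsto p+(q-p)y$ is semi-algebraic (indeed affine with rational coefficients, as $p,q$ are rational), $T$ is again semi-algebraic. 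Moreover $0 \in T$ and $1 \in T$ because $p,q \in S$, and by assumption there are infinitely many points on the segment between $p$ and $q$ that miss $S$; these correspond to infinitely many $y \in (0,1)$ with $y \notin T$. Hence $T$ is a unary semi-algebraic relation that is not essentially convex, and the pair of rational witnesses $0,1$ witnesses this. By Corollary~\ref{cor:excludes} (or directly), $T$ excludes an interval, and this is witnessed by the rational points $0$ and $1$.

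Next I would argue that $T$ is pp-definable in $\Gamma = (\lingamma, S)$. Using Lemma~\ref{lem:pp-rational}, for each coordinate $i \le k$ the affine equation $z_i = p_i + (q_i - p_i) y$ is pp-definable in $(\mathbb R; \{(x,y,z)\mid x+y=z\},\{1\})$, hence in $\lingamma$; conjoining these $k$ definitions with the single atom $S(z_1,\dots,z_k)$ and existentially quantifying the $z_i$ yields a pp-definition of $T$ over $\Gamma$. Thus $(\Gamma, T)$ has the same CSP complexity as $\Gamma$ up to polynomial-time equivalence by Lemma~\ref{lem:pp}.

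Finally, I would apply Lemma~\ref{zero-one-rel} to $T$: since $T$ excludes an interval witnessed by rational points, $\Csp((\lingamma, T))$ is NP-hard. Because $\lingamma \subseteq \Gamma$ and $T$ is pp-definable in $\Gamma$, the problem $\Csp((\lingamma, T))$ reduces in polynomial time to $\Csp(\Gamma)$ (formally, $(\lingamma,T)$ is a reduct of $(\Gamma,T)$, and $\Csp((\Gamma,T))$ is polynomial-time equivalent to $\Csp(\Gamma)$ by Lemma~\ref{lem:pp}; one may also just note that every instance of $\Csp((\lingamma,T))$ can be rewritten as an instance of $\Csp(\Gamma)$ by substituting the pp-definition of $T$). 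Therefore $\Csp(\Gamma)$ is NP-hard.

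I do not expect a serious obstacle here; the construction is the natural ``restrict to a rational line'' trick, and every ingredient — closure of semi-algebraic sets under affine preimages, pp-definability of rational affine equations (Lemma~\ref{lem:pp-rational}), the pp-definability transfer lemma (Lemma~\ref{lem:pp}), and the unary hardness result (Lemma~\ref{zero-one-rel}) — is already available. The only point requiring a little care is verifying that the infinitely many excluded points on the segment between the rational witnesses $p$ and $q$ genuinely translate into infinitely many excluded values of the parameter $y$, but this is immediate since $y \mapsto p+(q-p)y$ is a bijection from ${\mathbb R}$ onto the line through $p$ and $q$.
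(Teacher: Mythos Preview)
Your proposal is correct and follows essentially the same approach as the paper: parametrize the rational line through $p$ and $q$, pp-define the resulting unary relation in $\Gamma$ via Lemma~\ref{lem:pp-rational}, and invoke Lemma~\ref{zero-one-rel}. The only cosmetic difference is that the paper additionally conjoins $0 \leq y \leq 1$ in its definition of the unary relation, which is harmless and does not change the argument.
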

\begin{proof}
Define 
\begin{align*}
U(y) \quad \equiv \quad \exists \bar z. \; 
\bigwedge_{i=1}^k z_i = p_i + (q_i-p_i) y \; \wedge \; S({\bar z}) \; \wedge \; 0 \leq y \leq 1
\end{align*}
where ${\bar z}=(z_1,\ldots,z_k)$.
By Corollary~\ref{cor:excludes}, $U$
excludes an interval and $\Csp(\Gamma)$ is NP-hard by 
Lemma~\ref{zero-one-rel} since $U$ is pp-definable in $\Gamma$.
 \end{proof}

\begin{rem}
If $S$ is not essentially convex and this is witnessed by non-rational points
only, then the problem $\Csp(\Gamma)$ for $\Gamma = ({\mathbb R}; \{(x,y,z) \; | \; x+y=z\},\{1\},\leq,S)$ might
still be solvable in polynomial time. Consider for instance 
the binary relation $$S = \{(x,y) \; \big | \; (|x+y| \leq 1) \wedge (y = \sqrt{2} x \rightarrow |x+y| = 1) \} \; .$$
Clearly, $S$ is not essentially convex; however, the only witnesses are
$(\sqrt{2} - 1,2 - \sqrt{2})$ and $(-\sqrt{2}+1, -2 + \sqrt{2})$ (see Figure~\ref{fig:example}).

\begin{figure}[h]
\begin{center}
\includegraphics[scale=0.6]{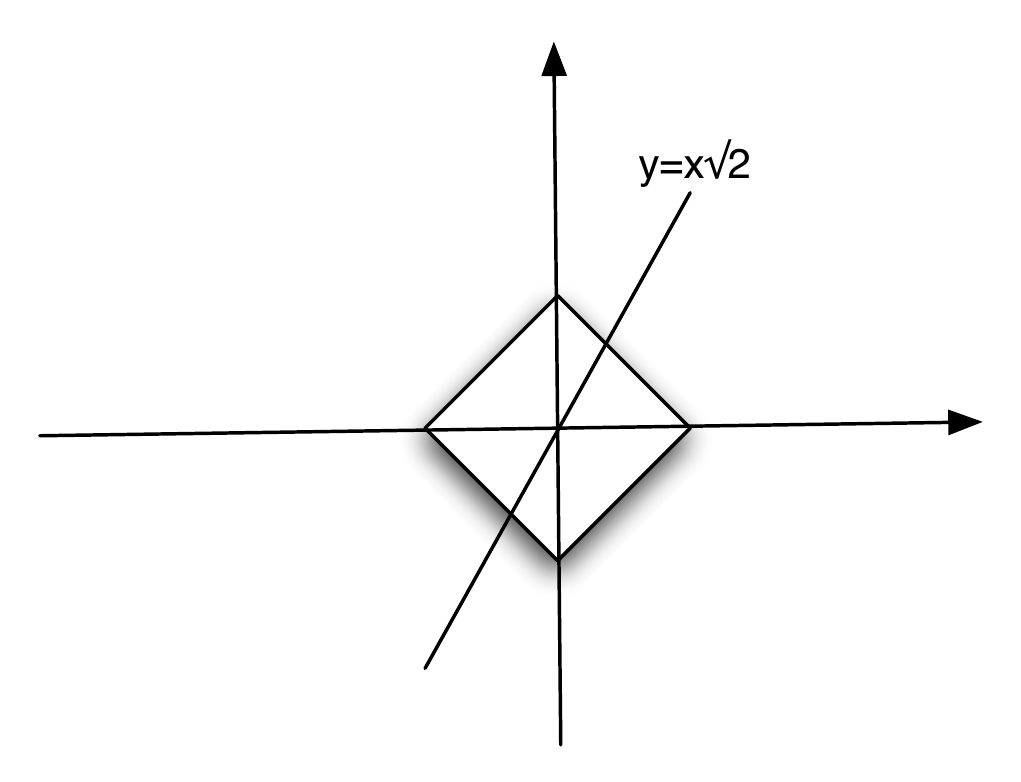}
\end{center}
\caption{Illustration of relation $S$}
\label{fig:example}
\end{figure}

We show that $\Csp(\Gamma)$ can be solved in polynomial time. To see this,
define 
\[S'=\{(x,y) \; | \; (|x+y| \leq 1) \wedge (x \neq 0 \vee y \neq 0)\}\] 
and
$\Delta = ({\mathbb R}; \{(x,y,z) \; | \; x+y=z\},\{1\},\leq,S'\}$.
We first prove that
a primitive positive sentence is true in $\Gamma$ if and only if it is true in $\Delta$.
Clearly, if a primitive positive sentence is true in $\Gamma$, then it is also true in $\Delta$, since the relations in $\Delta$ are supersets of the corresponding relations in $\Gamma$. 
Conversely, suppose that $\Phi$ is primitive positive and
true in $\Delta$. Let $\alpha$ be an assignment of the variables
of $\Phi$ that satisfies all conjuncts in $\Phi$. 
Since $\Delta$ is semi-linear, we can assume that $\alpha$ is rational
(see Lemma~\ref{alwaysrational}).
The only relations that are different in $\Gamma$ and in $\Delta$ are the relations $S$ and $S'$. Since $S' \setminus S$ contains irrational points only, the assignment $\alpha$ shows that $\Phi$ is also
true in $\Delta$.
Finally, $\Csp(\Delta)$ can be solved in polynomial time by the results in Section~\ref{sect:semilinear} (note that
the constraint $|x+y| \leq 1$ is equivalent to a conjunction of four linear inequalities). 
\end{rem}

\subsection{Semilinear relations}


In the previous section, we showed that there exists a semi-algebraic relation $S$ that is not essentially convex, but  $\Csp(({\mathbb R}; \{(x,y,z) \; | \; x+y = z\},\{1\},\leq,S))$ is polynomial-time solvable. 
If we restrict ourselves to semi-linear relations $S$, then this phenomenon cannot occur: 
indeed, in this section we prove that if a semi-linear relation $S$ is not essentially convex, then this is witnessed by rational points (Lemma~\ref{semilinhard}), 
and $\Csp(({\mathbb R}; \{(x,y,z) \; | \; x+y = z\},\{1\},\leq,S))$ is NP-hard.

\begin{lem} \label{alwaysrational}
Every non-empty semi-linear relation $S$ contains at least one
rational point.
\end{lem}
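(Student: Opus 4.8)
The plan is to use quantifier elimination for semi-linear relations (Theorem~\ref{thm:LP-QE}) to reduce the problem to finding a rational point in a set defined by a quantifier-free formula over $(\mathbb R;+,-,\leq)$ with rational parameters. First I would observe that every semi-linear relation $S$ can be written as a finite union of linear sets, as remarked just before Theorem~\ref{thm:LP-QE}; hence if $S$ is non-empty, at least one of these linear sets $V$ is non-empty, and it suffices to find a rational point in $V$. So I reduce to the case where $S = \{x \in \mathbb R^n \mid p_1(x) \geq 0 \wedge \dots \wedge p_m(x) \geq 0\}$ for linear polynomials $p_i$ with rational coefficients.

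Next I would argue that a non-empty such polyhedron $V$ contains a rational point. The cleanest route is to pick a point $x_0 \in V$ whose number of \emph{tight} constraints (those $p_i$ with $p_i(x_0)=0$) is maximal. If this maximal set of tight constraints pins down $x_0$ uniquely, then $x_0$ is the unique solution of a linear system with rational coefficients, hence $x_0$ is rational and we are done. Otherwise, the tight constraints leave a line (an affine subspace of dimension at least $1$) through $x_0$ contained in the affine subspace they define; moving along this line in one of the two directions, some currently-slack constraint $p_j$ must eventually become tight (otherwise $V$ contains a whole ray, but then it also contains rational points along a ray with rational direction emanating from... — actually to avoid this subtlety I would instead note that if no constraint ever becomes tight in either direction then $V$ contains an entire line with rational direction vector and a point on it, and such a line manifestly contains rational points, e.g.\ perturb $x_0$ along the rational direction to rationalize coordinates one at a time). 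Reaching a new tight constraint contradicts maximality, so the maximal case must be the unique-solution case.

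The main obstacle I anticipate is handling the degenerate subcase cleanly: when the tight constraints do not determine $x_0$ uniquely \emph{and} no slack constraint ever becomes tight along the free directions — i.e.\ when $V$ literally contains an affine line. One must check that the direction vector of such a line can be taken rational (it lies in the kernel of the matrix of tight constraints, which has rational entries, so yes), and then rationalize: given a point $x_0$ on a rational-direction line inside $V$, one can choose a rational $t$ with $x_0 + t v$ arbitrarily close to $x_0$, but this alone does not make all coordinates rational. Instead I would iterate: since $V$ is defined by finitely many rational linear inequalities and contains a line with rational direction, we can intersect $V$ with rational hyperplanes to cut down the dimension of the "free" part while keeping the polyhedron non-empty, until we are in the unique-solution case. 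Formally this is the same maximality argument applied more carefully, so the real content is just the standard fact that a non-empty rational polyhedron has a rational point; I would present it via the maximal-tight-constraints argument and dispatch the line case by the kernel-rationality observation.

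Alternatively — and this may be the shortest write-up — I would invoke the fact that a non-empty polyhedron defined over $\mathbb Q$ either contains a vertex (which is then rational, being the solution of a rational linear system) or, if it has no vertices, contains a translate of a rational subspace and hence, by projecting out that subspace, reduces to a lower-dimensional rational polyhedron; induction on dimension then yields a rational point. Since all the linear-programming machinery here is classical, I expect the proof to be short, with the only care needed being the bookkeeping in the vertex-free case.
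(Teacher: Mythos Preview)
Your approach is sound and genuinely different from the paper's. The paper never reduces to rational polyhedra; instead it inducts on the arity. For arity~$1$ it uses o-minimality (Theorem~\ref{thm:o-min}): a unary semi-linear set avoiding $\mathbb Q$ cannot contain an interval, hence is finite; then the singleton $\{\min(S)\}$ is pp-definable in $(\lingamma,S)$ with rational parameters, and analysing its quantifier-free definition from Theorem~\ref{thm:LP-QE} forces $\min(S)\in\mathbb Q$, a contradiction. For arity $d>1$ the paper picks a point of $S$ with the \emph{maximum number of rational coordinates}, fixes those rational coordinates, and projects to obtain a non-empty unary semi-linear relation disjoint from~$\mathbb Q$, contradicting the base case. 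Your polyhedral route is more elementary---it avoids o-minimality and appeals only to standard facts about rational polyhedra---while the paper's route stays entirely within the definability framework already set up for the rest of the article.

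One caveat worth noting: the remark you cite, that every semi-linear relation is a finite union of \emph{linear sets}, is imprecise as stated, since linear sets in the paper's sense are closed and e.g.\ the open interval $(0,1)$ is semi-linear but not a finite union of closed sets. The fix is immediate: by Theorem~\ref{thm:LP-QE} write $S$ in disjunctive normal form where each disjunct is a conjunction of constraints $p_i\ge 0$ and $q_j>0$ with rational linear $p_i,q_j$; choose a non-empty disjunct, pick any point $x_0$ in it, and for each strict constraint pick a rational $\epsilon_j$ with $0<\epsilon_j<q_j(x_0)$. The closed rational polyhedron $\{x:p_i(x)\ge 0,\ q_j(x)\ge\epsilon_j\}$ then contains $x_0$ and lies inside $S$, and your vertex/lineality-space argument applies to it directly.
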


\begin{proof}
Assume first that $S$ is a non-empty unary relation such that $S \cap {\mathbb Q}=\emptyset$.
If $S$ contains infinitely many points, then it also contains an interval due to $o$-minimality of $S$;
this contradicts that $S \cap {\mathbb Q} = \emptyset$. So we assume that
$S$ contains a finite number of points. Consider
the unary relation $S'= \{\min(S)\}$ and note that it can be pp-defined in 
$(\lingamma,S)$ by
$S'(x) \equiv S(x) \wedge x \leq p$ where $p$ denotes a suitably chosen rational number.
By Theorem~\ref{thm:LP-QE}, $S'$ has a quantifier-free definition $\phi$ over $({\mathbb R};+,-,\leq)$ with parameters from $\mathbb Q$,
and we can without loss of generality assume that $\phi$ is in disjunctive normal
form, and contains a single disjunct since $|S'|=1$. 
Assume without loss of generality that every conjunct of this disjunct of $\phi$ 
is of one of the following forms: $x \geq c$, $x \leq c$, or $x \neq c$ (where $c$ denotes some rational number).
Let $a=\max\{c \; | \; (x \geq c) \in \phi\}$ and
$b=\min\{c \; | \; (x \leq c) \in \phi\}$.
If $a=b$ then $S'=\{a\}$ and we have a contradiction since
$a$ is a rational number.
If $a<b$, then $S'$ contains an infinite number of points
(regardless of the number of disequality constraints in $\phi$)
and we have a contradiction once again. 

Assume now that $ar(S)=d>1$.
Arbitrarily choose
a point $s=(s_1,\ldots,s_d) \in S$ with a maximum number of
rational components. 
Assume without loss of generality that $s_1,\ldots,s_{k'}$, $k' < k$ are rational
points and consider the unary relation
\[U(x_k) \equiv \exists x_1,\ldots,x_{k-1}. (S(x_1,\ldots,x_k) \; \wedge \; x_1=s_1 \wedge \cdots \wedge x_{k'}=s_{k'}).\]
We get a contradiction since $U \cap {\mathbb Q} = \emptyset$, $U$ is non-empty, and $U$ is semi-linear. 
\end{proof}

\begin{cor} \label{rationalball}
Let $S \subseteq {\mathbb R}^k$ be a semi-linear relation and let $s \in S$
be arbitrary.
Then, every open $k$-dimensional ball $B$ around $s$ of radius $\epsilon > 0$ contains a rational point in $S$.
\end{cor}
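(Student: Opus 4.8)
The plan is to deduce this from Lemma~\ref{alwaysrational}. One cannot apply that lemma directly to $S \cap B$, since the ball $B$ is not semi-linear (it is cut out by a quadratic inequality) and its centre $s$ may be irrational. Instead I would sandwich a box $P$ with rational corners between $s$ and $B$; then $S \cap P$ is semi-linear and non-empty, hence contains a rational point, and this point lies in $B$.

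Concretely, using density of $\mathbb Q$ in $\mathbb R$, for each coordinate $i \in \{1,\dots,k\}$ choose rationals $a_i < s_i < b_i$ with $b_i - a_i < \epsilon / \sqrt k$, and set $P = \{x \in \mathbb R^k \mid a_i \le x_i \le b_i \text{ for all } i\}$. Then $s \in P$, and since the diameter of $P$, which equals $\sqrt{\sum_i (b_i - a_i)^2}$, is strictly less than $\sqrt{k}\cdot \epsilon/\sqrt k = \epsilon$ while $s \in P$, every point of $P$ lies within distance $\epsilon$ of $s$; hence $P \subseteq B$. Now $P$ is a linear set with rational coefficients, so if $\phi$ is a first-order definition of $S$ over $(\mathbb R;+,\le)$ with parameters from $\mathbb Q$, then $\phi \wedge \bigwedge_i (a_i \le x_i \wedge x_i \le b_i)$ is a semi-linear definition of $S \cap P$. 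Since $s \in S \cap P$, this relation is non-empty, so by Lemma~\ref{alwaysrational} it contains a rational point $r$; then $r \in S$ and $r \in P \subseteq B$, as desired.

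This is essentially a packaging argument, and I do not anticipate a genuine obstacle. The only points requiring care are that the box must have honestly rational corners (so that $S \cap P$ stays within the scope of Lemma~\ref{alwaysrational}, rather than just being ``close to $s$''), and that its diameter must be chosen small enough to fit inside the Euclidean ball $B$.
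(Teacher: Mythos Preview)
Your argument is correct and follows essentially the same route as the paper: both proofs place a linear set $P$ with rational coefficients around $s$ inside $B$ and then apply Lemma~\ref{alwaysrational} to the non-empty semi-linear relation $S \cap P$. You are simply more explicit than the paper about constructing $P$ (as a small rational box) and about why it fits inside the Euclidean ball.
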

\begin{proof}
If there is an $\epsilon$ such that $B$ does not contain any rational point in $S$, then there is
a linear set $P$ within $B$ such that $S \cap P$
only contains irrational points. This contradicts Lemma~\ref{alwaysrational}.
\end{proof}

A {\em hyperplane} is a set $V=\{x \in {\mathbb R}^k \; | \; p(x)=0\}$ 
where $p$ is a linear term such that $\emptyset \subset V \subset {\mathbb R}^k$ (this makes sure that the degree of $p$ is one). 
We do not require that the coefficients in $p$ are rational; it is important to 
note that
this differs from the definition of a linear set. 
If all coefficients appearing in $p$ are rational, then we say that the hyperplane is \emph{rational}.

\begin{lem} \label{semilinhard}
If $T$ is a semi-linear relation that is not essentially convex,
then this is witnessed by rational points, and 
$\Csp((\Gamma_{lin},T))$ is NP-hard.
\end{lem}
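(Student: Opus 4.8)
The plan is to reduce everything to the case already handled in Lemma~\ref{lem:semialghard}: it suffices to show that if a semi-linear relation $T$ is not essentially convex, then the witnessing pair of points can be taken to be rational; NP-hardness of $\Csp((\lingamma,T))$ then follows immediately from Lemma~\ref{lem:semialghard}, since semi-linear relations are in particular semi-algebraic. So the whole content is the ``rational witness'' claim.

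First I would use Corollary~\ref{cor:excludes}: since $T$ is semi-linear (hence semi-algebraic) and not essentially convex, $T$ excludes an interval, so there are points $a,b \in T$ and rational $0<\delta_1<\delta_2<1$ with $a+(b-a)y \notin T$ for all $y \in [\delta_1,\delta_2]$. The goal is to wiggle $a$ and $b$ to nearby \emph{rational} points $a',b' \in T$ while keeping a (possibly smaller) excluded subinterval on the segment between them. The key structural fact to invoke is Theorem~\ref{thm:LP-QE}: $T$ has a quantifier-free definition over $({\mathbb R};+,-,\leq)$ with rational parameters, so $T$ is a finite union of linear sets, and its complement is cut out by finitely many rational hyperplanes together with the (rational) hyperplanes bounding those linear sets. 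The excluded segment $[a+(b-a)\delta_1,\,a+(b-a)\delta_2]$ lies in the complement of $T$; I would fix one interior point $m$ of this excluded segment and note that $m$ lies in an open cell of the arrangement of these finitely many rational hyperplanes, or, if $m$ happens to lie on one of them, shrink $[\delta_1,\delta_2]$ slightly (using Lemma~\ref{lem:algebra}/o-minimality as in Corollary~\ref{cor:excludes}, a line meeting the complement in an interval cannot be contained in any single hyperplane that meets it only finitely) so that it does. Thus there is an open ball $B$ around $m$ that misses $T$ entirely.

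Next I would perturb the endpoints. Using Corollary~\ref{rationalball}, pick rational points $a' \in T$ arbitrarily close to $a$ and $b' \in T$ arbitrarily close to $b$. For perturbations small enough, the segment from $a'$ to $b'$ still passes through the ball $B$ (continuity of the map $(a,b,y) \mapsto a+(b-a)y$ and compactness of $[\delta_1,\delta_2]$), so it contains a point of $\mathbb R^n \setminus T$, namely a point near $m$; by o-minimality of $(\mathbb R;\leq,T)$ (Theorem~\ref{thm:o-min}) applied to the one-variable definable set $\{y : a'+(b'-a')y \in T\}$, a single omitted point forces an entire omitted subinterval, so $a',b'$ witness that $T$ is not essentially convex. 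Hence the witness can be taken rational, and Lemma~\ref{lem:semialghard} yields NP-hardness of $\Csp((\lingamma,T))$.

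The main obstacle I expect is the degenerate situation where the omitted point $m$ lies \emph{on} one of the finitely many rational hyperplanes defining $T$ (so that no open ball around $m$ avoids $T$, and an arbitrarily small perturbation of the segment might slip back into $T$). The remedy is exactly the shrinking step above: because $T$ excludes an interval, the whole subsegment between $\delta_1$ and $\delta_2$ misses $T$, and this subsegment cannot be contained in any one of the finitely many hyperplanes unless that hyperplane contains the segment — which it cannot, since a hyperplane contributing to $T$'s boundary that contains the segment would force the nearby points $a,b \in T$ to be separated from the segment only by \emph{other} hyperplanes, reducing the number of ``active'' constraints; iterating, one reaches a subinterval whose midpoint lies in the open complement of the arrangement, and then the perturbation argument goes through cleanly.
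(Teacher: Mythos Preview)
Your reduction to finding rational witnesses and invoking Lemma~\ref{lem:semialghard} is correct, and the perturbation via Corollary~\ref{rationalball} is the right tool when it applies. But the central step --- that some point $m$ on the excluded segment has an open ball $B$ with $B \cap T = \emptyset$ --- can fail, and your shrinking/iterating fix does not repair it. Take $T = \{(x,y) : x \neq 0\} \cup \{(0,y) : y \leq 0\} \cup \{(0,y) : y \geq 2\}$: this is semi-linear and not essentially convex, and one may start from irrational witnesses such as $a=(0,-\pi)$, $b=(0,2+\pi)$. The entire excluded segment $\{(0,t) : 0<t<2\}$ lies on the hyperplane $x=0$, and every open ball around any of its points meets $T$ (since it contains points with $x\neq 0$). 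No shrinking of $[\delta_1,\delta_2]$ moves the midpoint into an open cell of the arrangement, because the whole segment sits on a face. What your ``reducing the number of active constraints'' really gestures at is restricting $T$ to the hyperplane $x=0$ and arguing in one lower dimension --- an induction you have not set up. (A side remark: your o-minimality step is also misstated --- a single omitted point need not force an omitted interval --- though this is harmless here, since passing through $B$, when $B$ exists, already yields an omitted open subsegment directly.)

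The paper supplies exactly this missing dichotomy and induction. Working with a minimal-arity counterexample $T$, it shows (contrapositively, using precisely your perturbation idea) that if no rational witnesses exist then the ball argument must fail at every point of the excluded interval $I$, forcing $I \subseteq \bar{T}\setminus T$; and separately that $\bar{T}\setminus T$ is covered by the finitely many rational hyperplanes read off from a quantifier-free definition of $T$. Proposition~\ref{prop:indep} then traps $I$ inside a single rational hyperplane $H$. Combined with a separate case analysis for relations already covered by finitely many rational hyperplanes (project to lower arity when one hyperplane suffices; otherwise pick rational points in distinct hyperplanes via Lemma~\ref{alwaysrational} and use Lemma~\ref{lem:algebra} to see that the segment between them meets $T$ only finitely often), this closes the argument. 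The missing idea in your plan is thus the dichotomy: either the ball exists and perturbation wins, or $I \subseteq \bar{T}\setminus T$ and one must pass to a rational hyperplane and induct on arity.
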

\begin{proof}
If there are rational witnesses of the fact that $T$ is not
essentially convex, 
then NP-hardness follows from
Lemma~\ref{lem:semialghard} and we are done.

Assume now that there exists a relation $T$ that is
not essentially convex but $T$ lacks rational witnesses.
Arbitrarily choose such a $T$ with minimal arity $k$.
We first consider the case when $k=1$.
Arbitrarily choose witnesses $p,q \in T$. 
By $o$-minimality, 
there are finitely many intervals $I_1,\ldots,I_m$ with
endpoints in ${\mathbb R} \cup \{\pm \infty\}$ and
a finite set $D_0 \subseteq {\mathbb R}$ such that
$T = D_0 \cup \bigcup_{i=1}^m I_i$.
Now, apply the following process to $D_0$ and $I_1,\ldots,I_m$.

\begin{iteMize}{$\bullet$}

\item
if there is a point $d \in D_0$ and an interval $I_j$, $1 \leq j \leq m$,
such that $d$ is in $\overline{I_j}$, then
remove $d$ from $D_0$ and replace $I_j$ with $I_j \cup \{d\}$;

\item
repeat until $D_0$ is not changed.
\end{iteMize}

After these modifications, the sets $I_1,\ldots,I_m$ are
still (open, half-open, or closed) intervals, and for every point $d \in D_0$,
there exists an $\epsilon_d > 0$ such that 
$[d-\epsilon_d,d+\epsilon_d] \cap T = \{d\}$.

Assume without loss of generality that $p \not\in {\mathbb Q}$.
If $p \in D_0$, then choose rational numbers $p^-,p^+$ such that
$p-\epsilon_p < p^- < p < p^+ < p+\epsilon_p$; this is always possible
since the rationals are a dense subset of the reals. Consider
the semi-linear relation

\[T'(x) \equiv T(x) \wedge p^- \leq x \leq p^+\]
\noindent
and note $T'=\{p\}$. However, $p$ is not a rational number which
contradicts Lemma~\ref{alwaysrational}. We may thus assume that
$p \not\in D_0$ and that $p$ is a member of
an interval $I \in \{I_1,\ldots,I_m\}$. Arbitrarily choose
one rational point $p' \in I$;
once again, this is possible since the rationals are a dense subset of the reals.
Note that $p',q$ witness that $T$ is not essentially convex.
If $q \in {\mathbb Q}$, then we are done so we assume that $q \not\in {\mathbb Q}$.
We see that $q \not\in D_0$ by reasoning as above.
Consequently, $q$ is a member of
an interval $J \in \{I_1,\ldots,I_m\}$ and $I \neq J$. Finally choose
a rational point $q' \in J$ and note that $p',q'$ are rational points
witnessing that $T$ is not essentially convex.

\medskip

Assume instead that $k > 1$.
Let ${\mathcal S}_k$ denote the set of relations 
$S$ that satisfy 1, 2, and 3:

\begin{enumerate}
\item
$S$ is a semi-linear relation of arity $k$,

\item
$S$ is not essentially convex, and

\item
for every
pair of witnesses that $S$ is not essentially convex,
at least one 
is irrational.
\end{enumerate}

We now conclude the proof by considering two different cases.

\medskip

\noindent
{\bf Case 1:} There exists an $S \in {\mathcal S}_k$ and 
a finite set of rational hyperplanes 
$H_1,\dots,H_h$  such that $S \subseteq \bigcup_{j=1}^h H_j$.
Choose the hyperplanes such that $h$ is minimal.
Let $v,w \in S$
be arbitrarily chosen witnesses for the fact that
$S$ excludes an interval, and let $I$ denote this interval.

Suppose first that $h=1$, i.e., that there is a single 
hyperplane $H$ such that $S \subseteq H$.
Obviously,
$x=(x_1,\ldots,x_k) \in H \Leftrightarrow c_1x_1 + \cdots + c_kx_k = c_0$ for
some rational constants $c_0,\ldots,c_k$.
We assume without loss of generality that at least
one $c_i$, say $c_k$, is non-zero.
Define the relation $S'$ by

\[S'(x_1,\ldots,x_{k-1}) \equiv \exists y. (S(x_1,\ldots,x_{k-1},y) \wedge y=\frac{c_0-c_1x_1 - \ldots - c_{k-1}x_{k-1}}{c_k}) \; .\]

\noindent
Let $v'=(v_1,\ldots,v_{k-1})$ and $w'=(w_1,\ldots,w_{k-1})$, and
note that $v',w'$ are witnesses of an excluded interval in $S'$.
If $S'$ lacks rational witnesses of essential non-convexity,
then the fact that $S'$ has arity $k-1$ 
contradicts the choice of $T$.
Hence, $S'$ has two rational witnesses $t=(t_1,\ldots,t_{k-1})$ and $u=(u_1,\ldots,u_{k-1})$. This implies that
\[t'=\left(t_1,\ldots,t_{k-1},\frac{c_0-c_1t_1 - \ldots - c_{k-1}t_{k-1}}{c_k} \right)\]
and 
\[u'=\left(u_1,\ldots,u_{k-1},\frac{c_0-c_1u_1 - \ldots - c_{k-1}u_{k-1}}{c_k} \right)\]
are rational witnesses for $S$, which leads to a contradiction.

Next, suppose that $h \geq 2$.
Let $H_1'=S \cap (H_1 \setminus \bigcup_{j=2}^h H_j)$ and
$H_2'=S \cap (H_2 \setminus \bigcup_{j \in \{1,3,\ldots,h\}} H_j)$.
By the minimal choice of $h$, $H'_1$ and $H'_2$
are non-empty.
Furthermore, they are semi-linear so we can choose rational points
$p_i \in H'_i$, $1 \leq i \leq 2$, by Lemma~\ref{alwaysrational}.
We now claim that at most a finite number of points on
the line segment between $p_1$ and $p_2$ lie in $S$.
Suppose to the contrary that infinitely many points lie on the line segment. 
Then, there must be one $H_i$, $i \geq 1$, such that infinitely many points from
$H_i$ lie on the line segment. Hence, $H_i$ (since it is a variety)
must contain the entire line by Lemma~\ref{lem:algebra}. This leads to
 a contradiction since $p_1$ and $p_2$ are chosen so that
$|\{p_1,p_2\} \cap H_j| \leq 1$, $1 \leq j \leq h$.
Thus, we have found rational witnesses for essential non-convexity of $S$
and obtained a contradiction since $S \in {\mathcal S}_k$.

\medskip

\noindent
{\bf Case 2:} 
There is no $S \in {\mathcal S}_k$ such that there exists 
a finite set of rational hyperplanes
$H_1,\dots,H_h$ and $S \subseteq \bigcup_{j=1}^h H_j$.
Arbitrarily choose $S \in {\mathcal S}_k$, let $v,w \in S$
be arbitrarily chosen witnesses for the fact that
$S$ excludes an interval, and let $I$ denote such an interval.

If there exists a rational hyperplane $H$ 
such that $\{v,w\} \subseteq S \cap H$, then the semi-linear relation
\[S'(x_1,\ldots,x_k) \equiv S(x_1,\ldots,x_k) \wedge H(x_1,\ldots,x_k)\]
excludes an interval and this is witnessed by $v$ and $w$.
Obviously, $S' \in {\mathcal S}_k$ and $S' \subseteq H$.
This contradicts the assumptions for this case so
we assume that $\{v,w\}$ (and consequently $I$) do not lie on any
rational hyperplane.

Next, we prove a couple of facts.

\medskip

\noindent
{\em Fact 1:} $I \subseteq \bar{S} \setminus S$. We show that
there is no point $e \in I$ and an $\epsilon > 0$ such that
the open $k$-dimensional ball $B$ around $e$ with radius $\epsilon$ satisfies
$B \cap S = \emptyset$.
Assume to the contrary that there is a point $e \in I$
satisfying this condition.
By Corollary~\ref{rationalball}, there exist rational points in $S$ arbitrary close
to $v$ and $w$. Thus, one can find rational points $v',w' \in S$
such that the line segment $L$ between $v'$ and $w'$ passes through $B$
and $L'=L \cap B$ has non-zero length.
In other words, $v'$ and $w'$ are rational witnesses of an excluded interval and we have
obtained a contradiction.

\medskip

\noindent
{\em Fact 2:} There exists a finite set $\{H_1,\ldots,H_h\}$ of rational
hyperplanes such that
$\bar{S} \setminus S \subseteq \bigcup_{i=1}^h H_i$.
Let $\phi$ be a first-order definition of $S$ and let
$\psi = D_1 \vee \cdots \vee D_n$ 
be a quantifier-free definition of $S$ in disjunctive
normal form; such a $\psi$ exists due to Theorem~\ref{thm:LP-QE}. 
Note that every parameter appearing in $\psi$
is rational: initially, every parameter in $\phi$ is rational, the quantifier
elimination does not add any irrational parameters, and the conversion
to disjunctive normal form does not introduce any new parameters.
Let $l_1,\ldots,l_m$ denote the literals appearing
in $\phi$. 
For each 
$l_i \equiv p(x_1,\ldots,x_k) \: r \: 0$ (where $r \in \{\leq,<,=,\neq,>,\geq\}$), create
a hyperplane $H_i=\{(x_1,\ldots,x_k) \subseteq {\mathbb R}^k \; | \; p(x_1,\ldots,x_k)=0\}$. In other words, we let the boundary of the 
subspace defined by $l_i$ define
a hyperplane $H_i$.
It is now easy to see that $\bar{S} \setminus S \subseteq \partial S \subseteq \bigcup_{i=1}^m H_i$.
Furthermore, every hyperplane $H_1,\ldots,H_m$ is rational.
\medskip

\medskip

\noindent
We are now ready to prove the second case of the proof. By Fact 1, $I \subseteq \bar{S} \setminus S$. The set $\bar{S} \setminus S$ is a subset
of $\bigcup_{i=1}^h H_i$ where $H_1,\ldots,H_h$ are rational hyperplanes by Fact 2.
Hence,
$I$ is a subset of some $H_i$
by Proposition~\ref{prop:indep}, a contradiction. 
\end{proof}

\section{Essentially Convex Relations}
\label{sect:main}
Before we present a logical characterization of essentially convex semi-algebraic relations, we give examples that show that
two more naive syntactic restrictions of first-order formulas
are not powerful enough for defining all essentially convex semi-algebraic relations. Both of those restrictions are motivated by classes of
essential convex semi-linear relations that have appeared in the literature, cf.~\cite{JonssonBaeckstroem}.
When $S$ is a subset of $\mathbb R^n$, we write $\neg S$ for the
complement of $S$, i.e., for $\mathbb R^n \setminus S$.

We start with an example that shows that not every essentially convex semi-algebraic relation can be defined by 
conjunctions of first-order formulas of the form 
$$ p_1 \neq 0 \vee \dots \vee p_k \neq 0 \vee \phi $$
where $p_1,\dots,p_k$ are polynomials with coefficients from $\mathbb R$, and where $\phi$ defines a convex set.
It is easy to see that every relation that can be defined by such a 
conjunction is essentially convex.

See Figure~\ref{fig:examples}, left side. The figure shows a
1-dimensional variety $C \subseteq {\mathbb R}^2$, given as $\{ (p(t),q(t)) \; | \; t \in \mathbb R\}$ for polynomials $p$ and $q$.
The figure also shows 
two marked segments $S_1, S_2$ on the curve $C$. 
The marked segments are chosen such that
one end point is contained in interior of 
the convex hull of the other three end points of the segments. 

Let $S$ be the set $\neg C \cup S_1 \cup S_2$. Clearly, 
$S$ is essentially convex. Now, suppose for 
contradiction that $S$ has a definition $\psi$ as described above.
Let $H$ be the convex hull of $S_1 \cup S_2$.
The crucial observation is that the set $G := 
(H \cap C) \setminus (S_1 \cup S_2)$ is infinite.
Since no point from $G$ is in $S$, there must
be a conjunct  $p_1 \neq 0 \vee \dots \vee p_k \neq 0 \vee \phi$
in $\psi$ that excludes infinitely many points from $G$.
In particular, the variety $V$ defined by $p_1=\dots=p_k=0$ contains infinitely many points from $C$.
As in the proof of Lemma~\ref{lem:algebra}, one can see that $V$ must contain $C$.
Hence, all points in $S_1 \cup S_2$ must satisfy $\phi$;
but in this case, all points in $G$ satisfy
$p_1 \neq 0 \vee \dots \vee p_k \neq 0 \vee \phi$,
a contradiction.

\begin{figure}[h]
\begin{center}
\begin{tabular}{lp{4cm}r}
\includegraphics[scale=0.6]{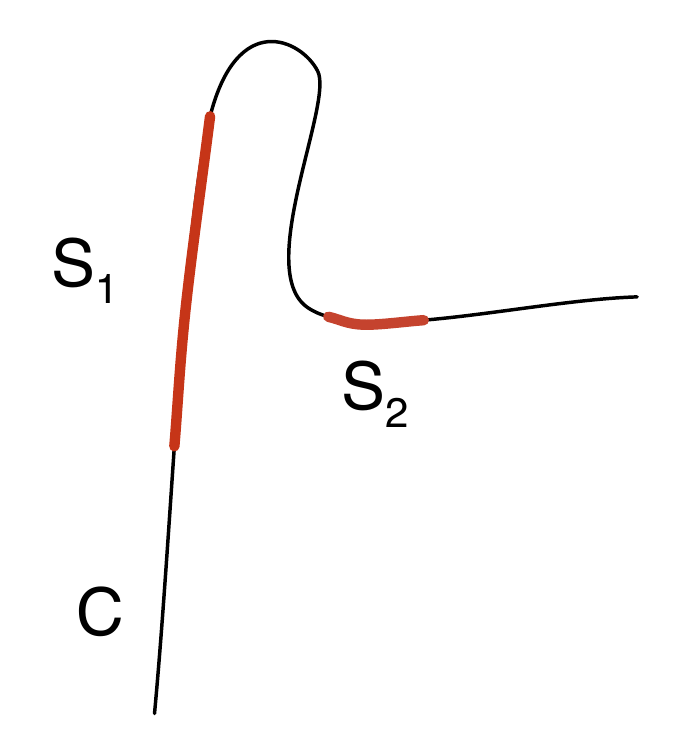}
& &
\includegraphics[scale=0.6]{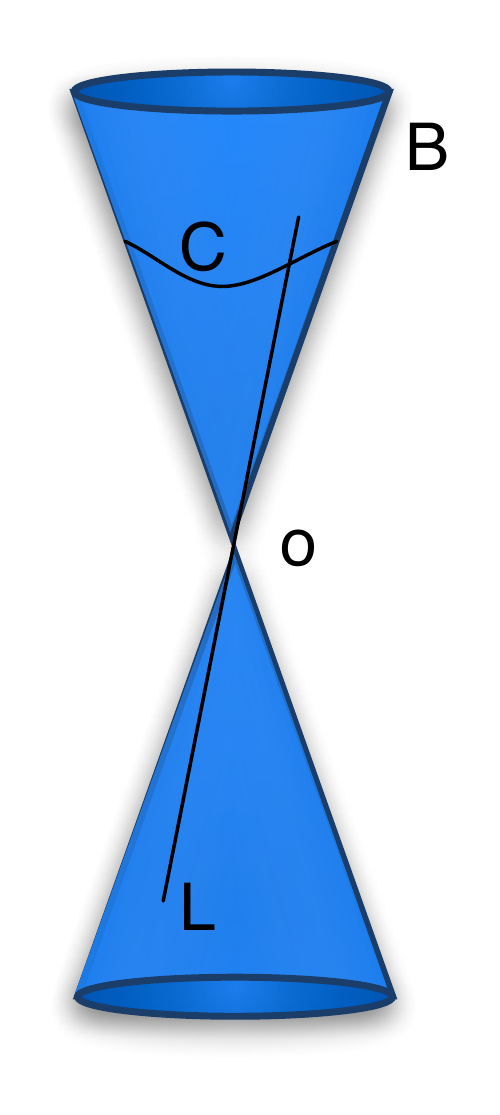}
\end{tabular}
\end{center}
\caption{Examples of essentially convex relations.}
\label{fig:examples}
\end{figure}

Our first example might motivate the following notion of definability:
we consider conjunctions of formulas of the form $p_1 \neq 0 \vee \dots \vee p_k \neq 0 \vee \phi$ such that for every conjunction of linear equalities $\psi$ that implies $p_i = 0$ for all $i \leq k$, the set defined
by $\phi \wedge \psi$ is convex. 
The set described above can indeed be defined by such a 
formula. Similarly as before, it is also easy to see that all relations that can be defined
in such a way are essentially convex. However, we again
have an example of a semi-algebraic essentially convex relation
that cannot be defined by such a conjunction.

See Figure~\ref{fig:examples}, right side. The figure shows
the boundary $B$ of a doubly infinite cone with apex $o$. 
On the boundary, there is a straight line segment $L$ through $o$,
and a circle $C$ that cuts $L$. Let $S$ be the set
$\neg B \cup C \cup (L \setminus \{o\})$. It can be verified that
$S$ is essentially convex. However, we claim that
there is no conjunction as described above that defines $S$.
The reason is that when 
$p_1 \neq 0 \vee \dots \vee p_k \neq 0 \vee \phi$ is such that
$p_1 = 0 \wedge \dots \wedge p_k = 0$ describes $B$,
and if $\phi$ contains $C \cup (L \setminus \{o\})$,
then it must also contain $o$ in order that $p_1 \neq 0 \vee \dots \vee p_k \neq 0 \vee \phi$ meets the required condition.
However, the set $\neg B \cup C \cup L$ is not essentially
convex since $o$ and points from the cycle exclude an interval.

The correct definition of formulas that correspond to essentially convex
sets has to take these examples into account. We call these formulas 
\emph{convex Horn formulas}. 
Basically, a convex Horn formula is a conjunction of
implications $p_1= \cdots =p_k = 0 \rightarrow \phi$ such that the premise defines a
variety $V$, and the formula $\phi$ is again convex Horn when
restricted to any convex subset of $V$. 
Formally, we have the following definition.

\begin{defi}
The set of convex Horn formulas  
is the smallest set of first-order formulas such that 
\begin{iteMize}{$\bullet$}
\item all formulas defining 
convex closed semi-algebraic relations over 
$(\mathbb R; +,*,\leq)$ with parameters from $\mathbb R$ are convex Horn;
\item Suppose that $p_1,\dots,p_k$ are polynomials, 
$\phi$ is a first-order formula
that defines a set $U \subseteq {\mathbb R}^n$,
and for every semi-algebraic convex set $C$ contained in the set defined by $p_1=\dots=p_k=0$,
the set $C \cap U$ can be defined by a convex Horn formula, and 
has strictly smaller dimension than the set defined by
$\psi \equiv (p_1 \neq 0 \vee \dots \vee p_k \neq 0 \vee \phi)$.
Then $\psi$ is also convex Horn.
\item Finite conjunctions of convex Horn formulas are convex Horn.
\end{iteMize}
\end{defi}

For example, the formula $(x^2 - y \neq 0) \wedge (y \geq 1)$,
describing
the half-plane above $y = 1$ with exception of the standard parabola,  is convex Horn. Every convex set $C$ contained in the set defined by $x^2-y=0$ consists of at most one point, and hence is $0$-dimensional and can be defined
by a convex Horn formula.


We can prove properties about the set of all convex Horn formulas
by induction over the \emph{level} of a convex Horn formula,
which is defined as follows. The level of a formula that defines
a convex closed semi-algebraic relation is $0$. 
Now, suppose we have already defined convex Horn formulas
of level smaller than $i$, and let $\psi$ be a convex Horn
formula that does not have level smaller than $i$.
Then $\psi$ has level $i$ if it is the finite conjunction
of formulas 
$\psi' \equiv (p_1 \neq 0 \vee \dots \vee p_k \neq 0 \vee \phi)$ 
such that 
for every semi-algebraic convex set $C$ contained in the set 
defined by $p_1=\dots=p_k=0$, the intersection of $C$
with the set defined by $\phi$ is convex Horn,
has level at most $i-1$, and strictly smaller dimension than the set defined by $\psi'$.
Since the intersection of sets of dimension $n$
has at most dimension $n$, it follows directly
from the definition of convex Horn formulas that
the level of a convex Horn formula $\phi$ is bounded
by the dimension of the set defined by $\phi$.

Looking back at the formula $(x^2 - y \neq 0) \vee (y \geq 1)$, we claim
that it is
convex Horn of level one: every convex set contained in the
parabola consists of at most one point, and can hence be described by a
convex Horn formula of level zero. For another example, 
consider $(x^2 - y \neq 0)
\vee (z \geq 0)$, i.e., the same parabola in three dimensions on one side
side of the $x$-$y$-plane. Each convex subset of $x^2 - y = 0$ is a point, a
straight line, or a line segment in the $z$ direction and can again be
described by a level zero convex Horn formula.

We are now ready to logically define essentially convex semi-algebraic
sets via convex Horn formulas. This is done in two steps;
we first prove (in Proposition~\ref{prop:linear-horn-convex})
that every set defined by a semi-algebraic convex Horn formula
is essentially convex. 
The rest of the section is devoted to proving the 
other direction---the final result can be found
in Theorem~\ref{thm:main}.

\begin{prop}\label{prop:linear-horn-convex}
Any set $S$ defined by a convex Horn formula $\psi$ over $(\mathbb R; *,+,\leq)$ is essentially convex.
\end{prop}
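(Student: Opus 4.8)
The plan is to prove the statement by induction on the level of the convex Horn formula $\psi$. In the base case, $\psi$ has level $0$, so $S$ is a convex closed semi-algebraic set; then $S$ is even convex, hence trivially essentially convex. For the inductive step, because finite conjunctions of essentially convex sets are again essentially convex (an easy observation: if finitely many points of the segment $[p,q]$ miss each of finitely many sets, then finitely many miss their intersection), it suffices to treat a single conjunct $\psi' \equiv (p_1 \neq 0 \vee \dots \vee p_k \neq 0 \vee \phi)$ of level $i$, where $\phi$ defines a set $U$, and for every semi-algebraic convex set $C$ contained in the variety $V = \{p_1 = \dots = p_k = 0\}$, the set $C \cap U$ is defined by a convex Horn formula of level at most $i-1$ and of dimension strictly smaller than $\dim(S')$, where $S'$ is the set defined by $\psi'$.

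So let $S'$ be the set defined by $\psi'$, take $a, b \in S'$, and suppose for contradiction that infinitely many points of the segment $L$ between $a$ and $b$ are \emph{not} in $S'$. Any such missing point fails $p_1 \neq 0 \vee \dots \vee p_k \neq 0$, so it lies in $V$. Since $L$ contains infinitely many points of the variety $V$, Lemma~\ref{lem:algebra} gives $L \subseteq V$; in particular $a, b \in V$. Now let $C$ be the (semi-algebraic, convex) segment $L$ itself — it is contained in $V$, so by hypothesis $C \cap U = L \cap U$ is defined by a convex Horn formula $\chi$ of level at most $i-1$. Since $a \in S' \cap V$, we must have $a \in U$, and likewise $b \in U$; thus $a, b \in L \cap U$. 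By the induction hypothesis applied to $\chi$, the set $L \cap U$ is essentially convex, so only finitely many points of the segment between $a$ and $b$ — which is all of $L$ — lie outside $L \cap U$, i.e.\ outside $U$. But every point of $L$ that is missing from $S'$ lies in $V$ and fails $\phi$, hence lies outside $U$; so only finitely many points of $L$ are missing from $S'$, contradicting our assumption. This completes the induction.

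I expect the main subtlety to be the bookkeeping that makes the induction legitimate: one must be careful that the convex set $C$ used in the argument (here the segment $L$) is indeed semi-algebraic and convex so that the defining clause of convex Horn formulas applies, and that the level really drops so the induction hypothesis is available. The dimension-decrease clause in the definition is not actually needed for this particular proposition (it will be needed for the converse, Theorem~\ref{thm:main}); here only the level bound matters. A minor point to state cleanly is that a segment $L \subseteq \mathbb{R}^n$ is a semi-algebraic convex set, and that membership of the endpoints $a,b$ in $U$ follows because $a,b \in S' \cap V$ forces the disjunct $\phi$ to hold at $a$ and $b$. With those observations in place the argument is short and essentially a direct unwinding of the definitions.
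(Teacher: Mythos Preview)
Your proposal is correct and follows essentially the same argument as the paper's proof: induction on the level, with the base case being convex closed sets, the conjunction case handled by the observation that finite intersections of essentially convex sets are essentially convex, and the single-conjunct case handled by taking $C = L$ and invoking Lemma~\ref{lem:algebra} to place $L$ inside $V$, then applying the induction hypothesis to $L \cap U$. Your remark that the dimension-decrease clause is unused here is also accurate; the paper's proof likewise relies only on the level bound.
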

\begin{proof}
Our proof is by induction over the level of $\psi$.
Let $m$ denote the number of free variables in $\psi$.
If the level of $\psi$ is $0$, 
then $S=\{x \in {\mathbb R}^m \; | \; \psi(x)\}$ is a closed convex set
and, in particular, essentially convex. 

Assume that all relations defined by convex Horn formulas of level $< i$
are essentially convex. 
Arbitrarily choose a convex Horn formula 
$\psi \equiv p_1 \neq 0 \vee \dots \vee p_k \neq 0 \vee \phi$ 
with level $i$. Define $S = \{x \in {\mathbb R}^m \; | \; \psi(x)\}$,
$V=\{x \in {\mathbb R}^m \; | \; p_1(x)= \cdots =p_k(x)=0\}$, and
$U=\{x \in {\mathbb R}^m \; | \; \phi(x)\}$.
Since $\psi$ is level-$i$ convex Horn, we know that for every semi-algebraic
convex set $C$ such that $C \subseteq V$, the set $C \cap U$ 
can be defined by a convex Horn formula of level smaller than $i$ and
$\dim(C \cap U) < \dim(S)$.

Suppose for contradiction
that there are $a,b \in S$ 
and an infinite set $I$ of points on the line segment $L$ between $a$ and $b$ that is not contained in $S$.
In particular, $I \subseteq V$ since 
$S= \neg V \cup U$. By Lemma~\ref{lem:algebra}, 
all points on the line through $a$ and $b$ are in $V$.
However, $a$ and $b$ are in $S$ and therefore in $U$ so
$L \cap U$ is not essentially convex. 
We now note that $L$ is a semi-algebraic convex set that is a subset
of $V$ so $L \cap U$ can be defined by a convex Horn
formula of level smaller than $i$.
Consequently, $L \cap U$ is essentially convex by the inductive assumption
which leads to a contradiction. 

Finally, suppose that $\psi$ is a finite conjunction of convex
Horn formulas of level at most $i$. 
Since the intersection of finitely many 
essentially convex relations is essentially convex,
we are done also in this case.
\end{proof}

Next, we need some preparations for the proof of the converse implication (Theorem~\ref{thm:main}): 
we show that semi-algebraic relations
can be defined by a special type of formula (Lemma~\ref{lem:standard-def}),
and that the closure $\bar{S}$ of an essentially convex relation $S$ is 
convex (Lemma~\ref{lem:closure-convex}).

\begin{defi} \label{def:convexhorn}
Let $S$ be a semi-algebraic relation. 
We say that a first-order formula 
$\phi$ is a \emph{standard defi} of $S$ 
if 
\begin{iteMize}{$\bullet$}
\item $\phi(x_1,\dots,x_k)$ defines $S \subseteq {\mathbb R}^k$ 
over $({\mathbb R};*,+,-,\leq)$ with parameters from $\mathbb R$;
\item $\phi$ is in quantifier-free conjunctive normal form; 
\item if we remove any literal from $\phi$, then the resulting
formula is not equivalent to $\phi$; and
\item all literals are of the form $t \leq 0$ or $t \neq 0$.
\end{iteMize}
\end{defi}

\begin{lem}\label{lem:standard-def}
Every semi-algebraic relation $S$ has a standard definition. If $S$ is even semi-linear, then it has a standard definition $\phi$ that does not involve the function symbol for multiplication and irrational parameters.
\end{lem}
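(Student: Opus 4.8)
The plan is to start from an arbitrary quantifier-free defining formula of $S$ and massage it into the required normal form in a few routine steps, the only delicate point being the control of irrational parameters and the multiplication symbol in the semi-linear case. First, by Theorem~\ref{thm:SA-QE} the relation $S$ has a quantifier-free first-order definition over $(\mathbb R;*,+,\leq)$ with parameters from $\mathbb R$; if $S$ is moreover semi-linear, then by Theorem~\ref{thm:LP-QE} it has a quantifier-free definition over $(\mathbb R;+,-,\leq)$ with parameters from $\mathbb Q$, so in that case no multiplication symbol and no irrational parameter occurs from the outset. In either case I would next rewrite every atomic subformula into the prescribed shape: an atom $t_1 = t_2$ becomes $t_1 - t_2 \leq 0 \wedge t_2 - t_1 \leq 0$ (introducing the subtraction symbol, which is allowed), an atom $t_1 < t_2$ is logically equivalent to $t_1 - t_2 \leq 0 \wedge t_1 - t_2 \neq 0$, and an atom $t_1 \leq t_2$ becomes $t_1 - t_2 \leq 0$; negations of such atoms are handled symmetrically, e.g.\ $\neg(t \leq 0)$ is equivalent to $-t \leq 0 \wedge t \neq 0$, and $\neg(t \neq 0)$ is $t \leq 0 \wedge -t \leq 0$. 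After pushing all negations down to the atoms in this way, every literal has the form $t \leq 0$ or $t \neq 0$, as required by the last bullet, and none of these transformations introduces multiplication or new parameters, so the semi-linear refinement is preserved.

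Second, I would put the resulting negation-free boolean combination of such literals into conjunctive normal form by the usual distributivity laws; this is a purely propositional manipulation that again touches neither the terms nor the parameters. At this point $\phi$ satisfies all the requirements of a standard definition except possibly the irredundancy condition (the third bullet): some literal might be removable without changing the defined relation. To fix this, I would simply iterate: while there is a literal whose deletion yields an equivalent formula, delete it; since $\phi$ has finitely many literals this process terminates, and the formula it terminates at is, by construction, one from which no literal can be removed. Equivalence of semi-algebraic formulas is well-defined (and, by Tarski--Seidenberg, even decidable, though we do not need decidability here), so this step is sound. Deleting literals obviously does not reintroduce multiplication or irrational parameters, so in the semi-linear case the final formula still lives over $(\mathbb R;+,-,\leq)$ with rational parameters.

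The only point requiring a moment's care—and hence the main obstacle, such as it is—is the interaction between the CNF conversion and the irredundancy pruning: distributing to reach CNF can duplicate literals and blow up the clause structure, so one should perform the redundancy elimination \emph{after} the CNF conversion (as above) rather than before, and observe that the pruned formula is still in CNF because removing an entire literal from a clause (or removing an empty-of-content clause) preserves conjunctive normal form. One should also note a trivial edge case: if $S = \mathbb R^k$ then the empty conjunction serves as a standard definition, and if $S = \emptyset$ then a single literal such as $1 \leq 0$ (or $x_1 - x_1 + 1 \leq 0$, to make the parameter $1$ explicit and rational) works; neither case causes trouble. Collecting these observations gives a formula $\phi$ with all four properties, which proves the lemma, with the semi-linear strengthening following because every step was parameter- and multiplication-preserving.
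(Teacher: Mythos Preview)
Your proof is correct and follows essentially the same approach as the paper: invoke quantifier elimination (Theorems~\ref{thm:SA-QE} and~\ref{thm:LP-QE}), normalize all literals into the forms $t\leq 0$ and $t\neq 0$, bring the result into CNF, and then prune redundant literals. The only cosmetic difference is the order of operations---the paper first converts to CNF and then eliminates $<$ and $=$ by splitting clauses, whereas you normalize literals first and then distribute---but the underlying argument is the same.
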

\begin{proof}
By Theorem~\ref{thm:SA-QE}, we know that $S$ has a quantifier-free definition over $(\mathbb R; *,+,\leq)$ with parameters from $\mathbb R$, and it is clear
that such a definition can be rewritten in conjunctive normal form $\phi$. Replace a clause $\alpha$ in $\phi$ with a literal of the form $a < b$ by two clauses $\alpha_1$ and $\alpha_2$ obtained from $\alpha$ by replacing $a < b$ by $a \leq b$ and by $a \neq b$, respectively.
In the same way we can eliminate occurrences of $a = b$ from $\phi$
using $\leq$. Literals of the form $a \leq b$ ($a \neq b$) can then be replaced by $a-b \leq 0$ (and $a-b \neq 0$, respectively).
Finally, we remove literals from $\phi$ as long as the resulting formula is equivalent to the original formula. 

If $S$ is semi-linear, then by Theorem~\ref{thm:LP-QE} we know that
$S$ has a quantifier-free definition over $(\mathbb R;+,-,\leq)$ with parameters from $\mathbb Q$,
and it is then clear that the formula constructed from $\phi$ as above
will be a standard definition of $S$ without the function symbol for multiplication and irrational parameters.
\end{proof}

\begin{lem}\label{lem:closure-convex}
The closure $\bar S$ of an essentially convex relation $S$ is
convex.
\end{lem}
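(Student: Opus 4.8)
The plan is to argue by contradiction: suppose $\bar S$ is not convex, so there are $a, b \in \bar S$ and a point $c$ strictly between $a$ and $b$ with $c \notin \bar S$. Since $\bar S$ is closed (Proposition~\ref{prop:semi-algebraic-closure} tells us it is semi-algebraic, but we only need closedness here), its complement is open, so there is an open ball $B$ around $c$ of some radius $\epsilon > 0$ that is disjoint from $\bar S$, hence disjoint from $S$. The idea is to produce two points of $S$ whose connecting segment passes through the "forbidden" ball $B$, and in fact crosses an entire sub-interval of $B$; this contradicts essential convexity of $S$, because an open interval on a line contains infinitely many points, none of which lie in $S$.

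The key step is getting genuine points of $S$ close enough to $a$ and $b$. Since $a \in \bar S$, every open ball around $a$ meets $S$; likewise for $b$. I would choose small radii $\delta > 0$ (with $\delta$ much smaller than both $\epsilon$ and $|a-b|$, and smaller than the distance from $c$ to the segment endpoints' relevant geometry) and pick $a' \in S$ with $|a'-a| < \delta$ and $b' \in S$ with $|b'-b| < \delta$. A straightforward continuity/perturbation estimate shows that the line segment $L'$ between $a'$ and $b'$ stays within distance $O(\delta)$ of the original segment $L=[a,b]$ along its whole length; in particular, near the parameter value corresponding to $c$, the segment $L'$ runs through the ball $B$. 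Choosing $\delta$ small enough, $L' \cap B$ is a nonempty open sub-segment of $L'$, so it contains infinitely many points, and by construction all of these points lie in $B$ and hence outside $S$. Thus $a', b' \in S$ but infinitely many points of the segment between them are not in $S$, contradicting essential convexity.

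The main obstacle — really the only nontrivial point — is making the perturbation estimate precise enough to guarantee that $L'$ actually enters $B$ rather than merely passing near $c$; one must be a little careful because the parametrizations of $L$ and $L'$ differ. A clean way to handle this is to parametrize $L$ as $\{a + t(b-a) : t \in [0,1]\}$ with $c = a + t_0(b-a)$ for some $t_0 \in (0,1)$, and $L'$ as $\{a' + t(b'-a') : t \in [0,1]\}$; then the point of $L'$ at parameter $t_0$ differs from $c$ by at most $|a'-a| + t_0|b'-b - (a'-a)| \le 2\delta$, so for $\delta < \epsilon/2$ this point lies in $B$, and by openness of $B$ a whole neighborhood of parameter $t_0$ maps into $B$ as well. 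Everything else is routine, and no semi-algebraicity of $S$ is needed beyond what guarantees $\bar S$ makes sense.
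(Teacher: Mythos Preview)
Your proposal is correct and follows essentially the same approach as the paper: approximate $a,b\in\bar S$ by nearby points $a',b'\in S$ and use essential convexity on the perturbed segment. The only cosmetic difference is that the paper argues directly (for every $\epsilon>0$, among the infinitely many points of $[a',b']$ within $\epsilon$ of $c$, all but finitely many lie in $S$, so some $c'\in S$ is $\epsilon$-close to $c$), whereas you argue by contradiction via a forbidden ball; your perturbation estimate is also spelled out more explicitly than the paper's ``it is clear''.
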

\begin{proof}
Let $a,b \in \bar S$. We will show that all points $c$ on the line
segment between $a$ and $b$ are in $\bar S$. We have to
show that for every $\epsilon > 0$ there is a point $c'$ in $S$
such that the distance between $c$ and $c'$ is smaller than $\epsilon$. Since $a \in \bar S$ and $b \in \bar S$, there are points
$a' \in S$ and $b' \in S$ that are closer than $\epsilon/2$ to $a$ and $b$, respectively. Let $L$ be the line from $a'$ to $b'$. It is clear
that there are infinitely many points on $L$ that are
at distance less than $\epsilon$ from $c$. Hence, since $a'$ and $b'$ 
are in $S$ and $S$ is essentially convex, there must be one such
point in $S$, and we are done. 
\end{proof}

\ignore{
\begin{lem}\label{lem:convex}
Let $\phi(x_1,\dots,x_k)$ be a standard definition of an essentially convex
semi-algebraic relation $S \subseteq {\mathbb R}^k$ and suppose that $\phi$ does not contain  literals of the form $x \neq y$. Then 
$S$ is convex. 
\end{lem}
\begin{proof}
We show that if $S$ is not convex, then it is also not essentially convex. If $S$ is not convex, there are points $a,b \in S$ such there is a point $c \notin S$
on the line segment between $a$ and $b$. Then there must
be a clause $\psi$ in $\phi$ that is violated by $c$. 
The set defined by $\neg \psi$ is an open set (since it is the intersection of a finite number of sets defined by strict polynomial inequalities). Hence, there is an open ball centered at $c$
that satisfies $\neg \psi$, which shows that $S$ is in fact
not essentially convex.
 \end{proof}
}

\begin{thm}\label{thm:main}
A semi-algebraic relation $S \subseteq \R^n$ is essentially convex if and only if it has a convex Horn definition. 
Moreover, when $S$ is even semi-linear then $S$ has a
semi-linear convex Horn definition.
\end{thm}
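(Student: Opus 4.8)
The forward implication is exactly Proposition~\ref{prop:linear-horn-convex}, and a semi-linear convex Horn formula defines a semi-linear relation by definition, so the content of the theorem is the converse: every essentially convex semi-algebraic relation $S\subseteq\R^n$ has a convex Horn definition. The plan is to prove this by strong induction on $\dim(S)$. In the base case $\dim(S)\le 0$ the set $S$ is finite, and an essentially convex finite set has at most one element; hence $S$ is closed and convex and has a convex Horn definition of level~$0$ (and the same applies to $S=\emptyset$).

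For the inductive step, assume $\dim(S)=d\ge 1$. By Proposition~\ref{prop:semi-algebraic-closure} the closure $\bar S$ is semi-algebraic and by Lemma~\ref{lem:closure-convex} it is convex, so $\bar S$ is a closed convex semi-algebraic set and thus has a convex Horn definition $\psi_{\bar S}$ of level~$0$. Put $N:=\bar S\setminus S$; by Lemma~\ref{lem:dim} we have $\dim(N)<d$. Let $Z$ be the Zariski closure of $N$, i.e.\ the smallest variety containing $N$; by Hilbert's basis theorem $Z=\{p_1=0\wedge\dots\wedge p_k=0\}$ for suitable polynomials, and $\dim(Z)=\dim(N)<d\le n$ (a standard fact about semi-algebraic sets), so $Z\subsetneq\R^n$ and $\neg Z$ is $n$-dimensional. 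Fix any first-order definition $\phi_Z$ of the semi-algebraic set $W:=Z\cap S$, and set
\[
  \psi_S \;:=\; \psi_{\bar S}\;\wedge\;\bigl(p_1\neq 0\vee\dots\vee p_k\neq 0\vee\phi_Z\bigr).
\]
The claim is that $\psi_S$ is a convex Horn definition of $S$.

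That $\psi_S$ defines $S$ is immediate: a point outside $\bar S$ fails $\psi_{\bar S}$; a point of $\bar S\setminus Z$ lies in $\bar S\setminus N=S$ and satisfies both conjuncts; and on $Z$ the second conjunct reduces to $\phi_Z$, which holds exactly on $W=Z\cap S$. To see that $\psi_S$ is convex Horn, note that $\psi_{\bar S}$ is convex Horn and conjunctions of convex Horn formulas are convex Horn, so it suffices to check that $\psi':=(p_1\neq 0\vee\dots\vee p_k\neq 0\vee\phi_Z)$ is convex Horn. Its premise defines $Z$, so let $C\subseteq Z$ be any semi-algebraic convex set; then $C\cap W=C\cap(Z\cap S)=C\cap S$. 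The crucial observation is that $C\cap S$ is essentially convex: given $a,b\in C\cap S$, the line segment between them lies in $C$ by convexity, and its points outside $C\cap S$ are precisely its points outside $S$, of which there are finitely many because $a,b\in S$ and $S$ is essentially convex. Moreover $C\cap S$ is semi-algebraic with $\dim(C\cap S)\le\dim(Z)<d$, so by the induction hypothesis it has a convex Horn definition, and its dimension is strictly smaller than $\dim(\neg Z\cup W)=n$, the dimension of the set defined by $\psi'$. Hence $\psi'$, and therefore $\psi_S$, is convex Horn.

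The point that requires care is the choice of what to recurse on. Describing $S$ as $\bar S$ with $N$ deleted is the naive attempt, but $C\setminus N$ need not be essentially convex for a convex $C$, and so has no convex Horn definition; taking the conclusion formula $\phi_Z$ to define $Z\cap S$ rather than $Z\setminus N$ fixes this, since the restriction to a convex $C\subseteq Z$ then collapses to $C\cap S$, which is essentially convex automatically. Equally important is that the premise variety be the Zariski closure of $N$ --- whose dimension is $\dim(N)<\dim(S)$ --- rather than the hypersurfaces coming from a quantifier-free definition of $S$, which can have dimension $\ge\dim(S)$, so that the induction actually terminates. For the semi-linear strengthening one runs the identical construction and checks that each ingredient stays semi-linear: the closure of a semi-linear set is semi-linear (using Theorem~\ref{thm:LP-QE}), the Zariski closure of a semi-linear set is cut out by linear polynomials with rational coefficients (the affine hulls of its finitely many linear pieces), $Z\cap S$ is semi-linear, and the semi-linear induction hypothesis supplies a semi-linear convex Horn definition of $C\cap S$.
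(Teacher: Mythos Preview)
Your argument for the semi-algebraic statement is correct and takes a cleaner, more geometric route than the paper. The paper works syntactically: it fixes a minimal CNF standard definition of $S$, groups the clauses into those containing a $\neq$-literal and those without, and shows that the latter cut out $\bar S$ while each of the former contributes a convex Horn clause (minimality of the standard definition is used to force $S \not\subseteq V$ for each premise variety $V$, which in turn yields the dimension drop). Your approach bypasses the CNF manipulation entirely by passing directly to $\bar S$ and the Zariski closure of $\bar S \setminus S$; the dimension drop then comes for free from $\dim(\bar S \setminus S) < \dim(S)$ together with the equality of semi-algebraic and Zariski-closure dimensions. This is shorter and conceptually more transparent, and it produces a single second conjunct rather than one per clause.

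The semi-linear strengthening, however, has a genuine gap. You assert that the Zariski closure $Z$ of a semi-linear set $N$ is cut out by \emph{linear} polynomials with rational coefficients. This is false: $Z$ is indeed the union of the affine hulls of the linear pieces of $N$, but a union of affine subspaces is generally not a linear variety. Concretely, take $S = \{(x,y) : x \neq 0 \wedge y \neq 0\}$, which is semi-linear and essentially convex; here $N = \bar S \setminus S$ is the union of the two coordinate axes, $Z = \{xy = 0\}$, and your $\psi_S$ reduces to $xy \neq 0$, which is not a semi-linear formula. The paper avoids this because it starts from a semi-linear standard definition (Lemma~\ref{lem:standard-def}), so every polynomial appearing as a premise is automatically linear with rational coefficients. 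The natural repair of your argument is to replace the single Zariski closure by the finitely many rational affine hulls $A_1,\dots,A_m$ of the linear pieces of $N$: each $A_i$ is cut out by rational linear equations, each satisfies $\dim A_i < d$, and $\psi_{\bar S} \wedge \bigwedge_i \bigl(\neg A_i \vee \phi_{A_i \cap S}\bigr)$ defines $S$ and is semi-linear convex Horn by the same verification you already gave.
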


\begin{proof}
We have already seen in Proposition~\ref{prop:linear-horn-convex} that
every relation defined by a convex Horn formula is essentially convex. We now show the
more difficult implication of the statement.
Let $\phi$ be a standard definition of $S$. 
The proof is by induction on the dimension $d$ of $S$.
For $d=0$, the set $S$ consists of at most one point, and
the statement is trival.
Otherwise, if $|S| \geq 2$, then by essential convexity, Corollary~\ref{cor:excludes},
and Lemma~\ref{lem:dim} we have that $\dim(S) \geq 1$.

For $d>0$, we will construct two formulas $\phi_1,\phi_2$ 
such that $\phi$ is equivalent to $\phi_1 \wedge \phi_2$.
Thereafter, we will show that $\phi_1$ is equivalent to a conjunction of
convex Horn formulas, and that $\phi_2$ defines a closed convex relation (and consequently is convex Horn).
Since finite conjunctions of convex Horn formulas are also convex Horn,
$\phi$ is then equivalent to a convex Horn formula.

We begin by writing
all clauses of $\phi$ as $\alpha \rightarrow \beta$ where $\alpha$ 
is either `true' or a conjunction of polynomial equalities and $\beta$ is either `false' or a disjunction of inequalities. 
This is always possible since a clause 
\[(p_1 \leq 0 \vee \cdots \vee p_k \leq 0 \vee q_1 \neq 0 \vee \cdots \vee q_m \neq 0)\]
is logically equivalent to
\[(q_1=0 \wedge \cdots \wedge q_m=0) \rightarrow (p_1 \leq 0 \vee \cdots \vee p_k \leq 0).\]
Next, we rewrite all clauses $\alpha \rightarrow \beta$ where $\alpha$ is not 
equal to `true', as $(\alpha \rightarrow (\beta \wedge \phi))$. 
Let $\phi_1$ be the conjunction of all the implications of the type 
$(\alpha \rightarrow (\beta \wedge \phi))$ and $\phi_2$ the conjunction 
of the remaining implications, i.e., those
of the type $({\rm true} \rightarrow \beta)$.
The formula $\phi_1 \wedge \phi_2$ is clearly equivalent to the formula $\phi$.

We begin by studying the formula $\phi_1$.
Let
$\alpha \rightarrow (\beta \wedge \phi)$ be a clause from $\phi_1$, 
let $V$ be the variety defined by $\alpha$, and let $U$ be the set defined
by $\beta \wedge \phi$. Observe that $U \subseteq S$.
We now show that
the intersection of the set $U$ with a semi-algebraic convex set $C \subseteq V$ can be defined by a convex Horn formula. We make two claims about the set $U \cap C$:

\medskip

\noindent
{\em Claim 1.} $U \cap C$ is essentially convex. 
For arbitrary points
 $a,b \in U \cap C$, let $L_{ab}$ denote the line
segment from $a$ to $b$, and $X_{ab}=\{x \in L_{ab} \; | \; x \not\in U \cap C\}$.
 Suppose for contradiction that
there exist $a,b \in U \cap C$ such that $X_{ab}$ is an infinite set.
  Since $a,b \in C$ and $C$ is convex, 
$L_{ab} \subseteq C$ which implies that 
$X_{ab}=\{x \in L_{ab} \; | \; x \not\in U\}$.
  Moreover, $C \subseteq V$ so $X_{ab} \subseteq V$.
Now recall that $a,b \in S$ since $U \cap C \subseteq U \subseteq S$: thus, there are
 infinitely many points (those that are in $X_{ab}$) 
between $a,b \in S$ that are in $V$ but not in $U$. 
This shows that no point in $X_{ab}$ satisfies
$\alpha \rightarrow (\beta \wedge \phi)$, and $X_{ab} \cap S = \emptyset$.
This fact
contradicts the essential convexity of $S$.

\medskip

\noindent
{\em Claim 2.} $U \cap C$ has smaller dimension than $S$. 
Let $T$ be the set $S \setminus (U \cap C)$.
It suffices to show that $U \cap C$
is a subset of $\bar T \setminus T$, because Lemma~\ref{lem:dim}
asserts that $\dim(\bar T \setminus T) < \dim(T) \leq \dim(S)$.

The set $S$ must contain a point $p$ that is not in $V$,
because if $S \subseteq V$ then we could replace 
the clause of $\phi$ that was re-written to 
$\alpha \rightarrow (\beta \wedge \phi)$ 
by $\beta$ and obtain a formula that is equivalent to $\phi$;
this contradicts the assumption 
that $\phi$ is a standard definition of $S$. 

To show that $(U \cap C) \subseteq \bar T \setminus T$, 
let $x$ be an arbitrary point in $U \cap C$.
Only finitely many points on the line segment 
between $p$ and $x$ can be
from $(U \cap C) \subseteq V$, because otherwise 
Proposition~\ref{prop:indep} implies that $V$ must contain the entire
line between $x$ and $p$, including $p$, a contradiction.
Also the set $S$ contains all but finitely many points on the line segment between $p$ and $x$:
this is by essential convexity of $S$, since $x \in U \cap C \subseteq S$ and 
$p \in S$. Hence, we can choose a sequence of points from $T = S \setminus (U \cap C)$ on this line segment
that approaches $x$, which shows that $x \in \bar T$.

\medskip

Since $U \cap C$ is semi-algebraic, essentially convex, and has smaller dimension
than $S$, it follows by the inductive assumption that it can be defined by a convex Horn formula.
Thus, $\phi_1$ is equivalent to a finite conjunction of convex Horn formulas.

We claim that $\phi_2$
defines a closed convex set $D$. 
This follows from Lemma~\ref{lem:closure-convex}, since $D$ is in fact the closure of $S$. 
To see this, observe
that $D$ is clearly a closed set, $D$ contains $S$, and hence $\bar S \subseteq \bar D = D$. To prove that $D \subseteq \bar S$,
let $y$ be from $D \setminus S$. 
Consider the clauses
$\alpha_1 \rightarrow (\beta_1 \wedge \phi),\ldots,\alpha_l \rightarrow (\beta_l \wedge \phi)$ 
of $\phi_1$, and let $V_i$, for $1 \leq i \leq l$, be the variety 
$\{x \in {\mathbb R}^k \; | \; \mbox{$x$ satisfies $\alpha_i$}\}$. 
There must be a point $q$ in $S$ that is not contained in the set $W = \bigcup_{i \leq l} V_i$;
otherwise, Proposition~\ref{prop:indep} implies that there exists an $i \leq l$ such that $S \subseteq V_i$. In other words, all points in $S$ satisfy 
$\alpha_i$. This is in contradiction to the assumption
that $\phi$ is a standard definition of $S$, since in this case the formula
$\phi$ is equivalent to the formula where the clause of $\phi$
that has been rewritten to $\alpha_i \rightarrow (\beta_i \wedge \phi)$
is replaced by $\beta_i$.
Only finitely many points on the line segment $L$ between $q$ and $y$ can be
from $W$, because otherwise Lemma~\ref{lem:algebra} implies that $W$ 
contain the entire
line between $y$ and $q$, including $q$, a contradiction.
Hence, $y \in \bar S$.

Finally, consider the case that $S$ is semi-linear. By Lemma~\ref{lem:standard-def}, we can choose
$\phi$ to be a standard definition which is semi-linear (and only uses parameters in $\mathbb Q$).
Then the proof above leads to a semi-linear convex Horn definition of $S$.
 \end{proof}

\section{Applications}
\subsection{Semi-linear constraint languages}\label{sect:semilinear}

We will now show that a finite semi-linear expansion $\Gamma$ of $\lingamma$ has a polynomial-time tractable constraint satisfaction problem if and only if all relations of $\Gamma$ are essentially convex (unless P = NP).  Recall that a relation is semi-linear if it has a first-order
definition in $({\mathbb R}; +,1,\leq)$.
A quantifier-free first-order formula in CNF is called \emph{Horn-DLR}~\cite{JonssonBaeckstroem} (where `DLR' stands for {\em disjunctive linear relations})
if its clauses are of the form $$p_1 \neq 0 \vee \dots \vee p_k \neq 0$$
or of the form $$p_1 \neq 0 \vee \dots \vee p_k \neq 0 \vee 
p_0 \leq 0$$ 
where $p_0, p_1, \dots, p_k$ are 
linear terms with rational coefficients.
A semi-linear relation is called 
\emph{Horn-DLR} if it can be defined by a Horn-DLR formula.

\begin{thm}[see~\cite{Disj,JonssonBaeckstroem,Koubarakis}]\label{tractability}
Let $\Gamma$ be a structure with domain ${\mathbb R}$ whose relations are Horn-DLR.
Then $\Csp(\Gamma)$ is in P.
\end{thm}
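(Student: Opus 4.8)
Theorem~\ref{tractability} asserts that if $\Gamma$ is a relational structure on $\mathbb R$ all of whose relations are Horn-DLR, then $\Csp(\Gamma)$ is in P.

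The plan is to reduce the satisfiability question for a primitive positive $\Gamma$-sentence $\Phi$ to a polynomial number of linear program feasibility checks, exploiting the Horn structure of the clauses. First I would unfold the constraints: an instance $\Phi$ of $\Csp(\Gamma)$ is a conjunction of atomic formulas $S(x_{i_1},\dots,x_{i_k})$, and since each relation $S$ of $\Gamma$ is defined by a fixed Horn-DLR formula in CNF, substituting the relevant variables yields a conjunction $C_1 \wedge \dots \wedge C_M$ of Horn-DLR clauses over the variables of $\Phi$. Here the number $M$ and the sizes of the clauses are bounded by a constant (depending only on $\Gamma$) times the length of $\Phi$, because $\Gamma$ is fixed and finite; this is exactly the point where working with a fixed constraint language pays off. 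So it suffices to decide feasibility of a conjunction of Horn-DLR clauses, each of the form $p_1 \neq 0 \vee \dots \vee p_k \neq 0$ (call these \emph{negative clauses}) or $p_1 \neq 0 \vee \dots \vee p_k \neq 0 \vee p_0 \leq 0$ (call these \emph{Horn clauses}), with all $p_j$ linear with rational coefficients.

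The key algorithmic idea is the standard one for Horn-like satisfiability: maintain a growing system $E$ of linear equalities and non-strict inequalities that every solution of $\Phi$ must satisfy, and close it under forced consequences. Concretely I would proceed as follows. Start with $E$ empty. Repeat: for each Horn clause $p_1 \neq 0 \vee \dots \vee p_k \neq 0 \vee p_0 \leq 0$, check (by a single linear program, using that linear programming is in P) whether $E \cup \{p_1 = 0, \dots, p_k = 0\}$ together with $p_0 > 0$ is infeasible; more precisely, whether every point satisfying the linear system $E$ and the equalities $p_i = 0$ also satisfies $p_0 \leq 0$ — this is a linear feasibility test since $p_0 > 0$ can be probed via $p_0 \geq \epsilon$ for the LP, or via a direct test whether $\sup p_0$ over $E \cup \{p_i=0\}$ exceeds $0$. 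If so, the inequality $p_0 \leq 0$ is forced whenever the $p_i = 0$ are forced; but in the closure procedure one instead does the following simpler dance. Actually the cleanest formulation: we nondeterministically are not allowed, so we use the \emph{disequality-elimination} trick. A negative clause $\bigvee p_j \neq 0$ is equivalent to saying ``not all $p_j$ vanish''. The algorithm repeatedly: (i) computes, by linear programming, the affine subspace $A$ of common solutions of the equalities currently forced together with all the $p_0 \leq 0$ inequalities forced so far; (ii) for each negative or Horn clause, checks whether all its disequality literals $p_j \neq 0$ are violated on all of $A$ (i.e. each $p_j$ is identically zero on $A$); if so, the clause's remaining part (nothing, resp. $p_0 \leq 0$) must hold, so we add $p_0 \leq 0$ to the forced set, or report unsatisfiable if there is no remaining part; (iii) iterate until no change. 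At the end, the forced linear system $E^\ast$ is feasible iff $\Phi$ is satisfiable: one direction is immediate since everything added to $E^\ast$ is a logical consequence; for the converse, given a point $x^\ast$ in the relative interior of the solution set of $E^\ast$ (which exists and is computable by LP), one argues that $x^\ast$ satisfies every original clause, because any $p_j$ that does not vanish identically on the solution set of $E^\ast$ is nonzero at the relative-interior point, hence the disequality literal $p_j \neq 0$ is satisfied there.

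The termination and polynomiality count: each iteration that changes $E^\ast$ either adds a new equality (an affine-dimension drop, at most $n$ times, where $n$ is the number of variables) or adds a new inequality $p_0 \leq 0$ coming from one of the $O(M)$ clauses, so at most $n + M$ rounds, each round doing $O(M)$ linear programs of polynomial size; this is polynomial overall since linear programming is in P. The main obstacle — and the step that deserves the most care — is the correctness of the final extraction: proving that a relative-interior point of $E^\ast$ satisfies all original clauses requires the observation that the procedure has been run to a fixed point, so for every clause at least one disequality literal has a polynomial $p_j$ that is \emph{not} identically zero on $\mathrm{Sol}(E^\ast)$ unless the clause's inequality part was already forced; establishing that invariant cleanly, together with the genericity argument (a linear polynomial not vanishing on an affine set is nonzero at its relative interior, and finitely many such can be avoided simultaneously at a single relative-interior point), is where the real content lies. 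Since this is exactly the classical Horn-DLR algorithm of Jonsson and Bäckström (and Koubarakis, and the disjunctive-constraints literature), I would cite \cite{Disj,JonssonBaeckstroem,Koubarakis} for the detailed verification rather than reproduce it in full.
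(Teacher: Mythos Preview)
The paper does not prove Theorem~\ref{tractability} at all: it is stated as a known result and attributed to the cited references~\cite{Disj,JonssonBaeckstroem,Koubarakis}. Your proposal therefore goes beyond what the paper does; what you sketch is essentially the classical Horn-DLR satisfiability algorithm from those references (unit-propagation over linear programs, followed by extraction of a witness from the relative interior of the resulting polyhedron), and at that level your outline is correct and matches the literature.

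Two small points worth tightening if you write this out. First, the solution set of the forced system $E^\ast$ is a polyhedron, not an affine subspace; you conflate the two in step~(i), though your subsequent use (``$p_j$ identically zero on $A$'', checkable by two LPs maximizing and minimizing $p_j$ over $E^\ast$) works perfectly well for a polyhedron. Second, the genericity step should be stated as: each $p_j$ that is not identically zero on $\mathrm{Sol}(E^\ast)$ vanishes only on a proper face, hence on a set of strictly lower dimension than $\mathrm{Sol}(E^\ast)$; since there are finitely many such $p_j$, a point of the relative interior of $\mathrm{Sol}(E^\ast)$ avoiding all these lower-dimensional faces exists (and can be found in polynomial time, e.g.\ by a standard LP trick). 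Your termination bookkeeping is slightly loose (the clean bound is simply ``at most $M$ clauses can fire, each at most once''), but this does not affect the polynomial-time conclusion.
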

%

In this section, we show the following.

\begin{thm} \label{LPresult}
Let $\Gamma=(\lingamma,S_1,\dots,S_l)$ be a constraint language such that $S_1,\dots,S_l$ are semi-linear relations.
Then, either each relation $S_1,\dots,S_l$ is Horn-DLR
and $\Csp(\Gamma)$ is in P, or $\Csp(\Gamma)$ is
NP-complete.
\end{thm}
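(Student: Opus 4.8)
The plan is to establish the dichotomy by combining the hardness results of Section~\ref{sect:hard} with the tractability result Theorem~\ref{tractability}. The key is a clean equivalence: for a semi-linear relation $S$, being essentially convex is the same as being Horn-DLR. Granting this, the argument is immediate. If every $S_i$ is Horn-DLR, then $\Gamma$ itself is a structure whose relations are all Horn-DLR (note that the three relations of $\lingamma$ are clearly Horn-DLR: $x+y=z$ is $x+y-z \leq 0 \wedge -(x+y-z) \leq 0$, $\{1\}$ is $x-1 \leq 0 \wedge 1-x \leq 0$, and $\leq$ is a single linear inequality), so $\Csp(\Gamma)$ is in P by Theorem~\ref{tractability}. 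Otherwise some $S_i$ is not Horn-DLR, hence not essentially convex; by Lemma~\ref{semilinhard} this non-essential-convexity is witnessed by rational points, and $\Csp((\lingamma,S_i))$ is NP-hard. Since $\lingamma,S_i$ are among the relations of $\Gamma$, Lemma~\ref{lem:pp} (or a trivial restriction argument) gives that $\Csp(\Gamma)$ is NP-hard as well. Membership in NP for semi-linear CSPs follows from the existence of rational solutions of polynomial size (Lemma~\ref{alwaysrational} gives rationality; polynomial size follows from Fourier--Motzkin / linear-programming bounds applied to the linear sets involved), so $\Csp(\Gamma)$ is NP-complete.

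The substantive step, then, is the equivalence ``semi-linear $S$ is essentially convex $\iff$ $S$ is Horn-DLR.'' The direction Horn-DLR $\Rightarrow$ essentially convex is easy: a Horn-DLR clause $p_1 \neq 0 \vee \dots \vee p_k \neq 0$ defines a set whose complement is a rational hyperplane intersection (an affine subspace of lower dimension), and a clause $p_1 \neq 0 \vee \dots \vee p_k \neq 0 \vee p_0 \leq 0$ defines, on the affine subspace $\{p_1=\dots=p_k=0\}$, the halfspace $p_0 \leq 0$, which is convex; so each clause is essentially convex, and a finite intersection of essentially convex sets is essentially convex. For the converse, I would invoke Theorem~\ref{thm:main}: an essentially convex semi-linear $S$ has a \emph{semi-linear} convex Horn definition. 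One then argues that in the semi-linear setting a convex Horn formula can be rewritten as a Horn-DLR formula, by induction on the level. At level $0$ one has a closed convex semi-linear set, i.e., a finite intersection of rational halfspaces, which is visibly Horn-DLR. For the inductive step, consider a clause $p_1 \neq 0 \vee \dots \vee p_k \neq 0 \vee \phi$; here the $p_i$ are linear with rational coefficients, and for every convex semi-linear $C$ inside the rational affine subspace $V = \{p_1 = \dots = p_k = 0\}$, $C \cap U$ is convex Horn of lower level. Taking $C = V$ itself, $V \cap U$ is Horn-DLR by induction; intersecting the Horn-DLR description of $V \cap U$ with the implication structure $\bigwedge_i p_i \neq 0 \vee (\cdot)$ and using that $V$ has a Horn-DLR (indeed, linear-equality) description, one collects everything into Horn-DLR clauses.

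The main obstacle I anticipate is precisely this last bookkeeping: showing that the recursive structure of a semi-linear convex Horn formula collapses to the \emph{flat}, two-clause-shape Horn-DLR syntax, in particular that the ``$\beta$'' part $\phi$ restricted to $V$ never needs more than a single non-strict linear inequality per clause beyond the disequalities. One has to be careful that the inductive hypothesis is applied to $C = V$ (which is convex and semi-linear, being a rational affine subspace), that the parameters stay rational throughout (guaranteed by the semi-linear clause in Theorem~\ref{thm:main} and by Theorem~\ref{thm:LP-QE}), and that converting a conjunction of linear inequalities on $V$ back to clauses over the ambient space only introduces disequality literals $p_i \neq 0$ as guards — which is exactly the Horn-DLR format. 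Once this normalization lemma is in hand, the theorem follows as sketched above. I would also remark, as the paper does in the introduction, that this yields the promised consequence for metric temporal reasoning: combined with Proposition~\ref{prop:lp-pp}, a semi-linear metric temporal language is tractable precisely when it is a subclass of Horn-DLR, and NP-hard otherwise.
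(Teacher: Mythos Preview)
Your plan is essentially the paper's: the equivalence you isolate is exactly Theorem~\ref{thm:semi-linear-main}, and your inductive argument for the hard direction (take $C=V$, apply the inductive hypothesis to $V\cap U$, then guard each resulting Horn-DLR clause by $p_1\neq 0\vee\dots\vee p_k\neq 0$) matches the paper's proof of that theorem verbatim; the level-$0$ base case (closed convex semi-linear $\Rightarrow$ finite intersection of rational halfspaces) is not as obvious as you suggest and is isolated as Lemma~\ref{lem:polytope}, proved via an envelope-of-polyhedra argument.

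The one genuine difference is NP membership. You propose certifying a rational solution of polynomial bit-size; this works but requires an auxiliary bound not supplied by Lemma~\ref{alwaysrational}. The paper instead uses a cleaner guess-and-check: write each $S_i$ in quantifier-free CNF over $(\mathbb R;+,-,\leq)$ with rational parameters (Theorem~\ref{thm:LP-QE}), non-deterministically select one literal per clause, and verify the resulting conjunction of linear (in)equalities in polynomial time via Theorem~\ref{tractability}. This avoids any appeal to solution-size bounds and stays entirely within tools already established in the paper.
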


In order to prove Theorem~\ref{LPresult}, we need to
characterize convex and essentially convex semi-linear relations.
This is done in Lemma~\ref{lem:polytope} and Theorem~\ref{thm:semi-linear-main},
respectively.

Let $P_1,\ldots,P_n$ be (possibly unbounded) polyhedra defined such that
$P_i=\{x \in {\mathbb R}^k \; | \; A_ix \leq b_i\}$.
Bemporad et al.~\cite{Bemporad} define the {\em envelope} of $P_1,\ldots,P_n$ ($env(P_1,\ldots,P_n)$) 
as the polyhedron

\[\{x \in {\mathbb R}^k \; | \; A_1'x \leq b_1',\ldots,A_n'x \leq b_n'\}\]
where $A_i'x \leq b_i'$ is the subsystem of $A_i x \leq b_i$
obtained by removing all the inequalities not valid for the
other polyhedrons $P_1,\ldots,P_{i-1},P_{i+1},P_n$.  
We note that if $P_1,\ldots,P_n$ are defined with coefficients
from ${\mathbb Q}$, then $env(P_1,\ldots,P_n)$ can be described
by rational coefficients, too.
By combining Theorem~3 with Remark~1 in~\cite{Bemporad}, it follows
that $\bigcup_{i=1}^n P_i$ is convex if and only if
$\bigcup_{i=1}^n P_i=env(P_1,\ldots,P_n)$.

\begin{lem}\label{lem:polytope}
A closed semi-linear relation $S$ is convex if and only if it has a primitive
positive definition in $({\mathbb R}; +,1,\leq)$.
\end{lem}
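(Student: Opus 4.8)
The plan is to prove both directions of the equivalence, with the harder work being the "only if" direction. For the "if" direction, suppose $S$ has a primitive positive definition $\exists y_1,\dots,y_m.\,\bigwedge_j \psi_j$ in $({\mathbb R};+,1,\leq)$. Each atomic conjunct $\psi_j$ is either an equation built from $+$ and $1$, which defines a (rational) affine subspace, or an inequality $x \leq x'$, which defines a closed halfspace. Thus the set defined by the quantifier-free part is a closed convex polyhedron in ${\mathbb R}^{k+m}$; projecting out the existentially quantified variables $y_1,\dots,y_m$ preserves convexity (the image of a convex set under a linear map is convex), so $S$ is convex. That $S$ is closed is given by hypothesis, so nothing more is needed there. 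I would also remark that by Lemma~\ref{lem:pp-rational} the equations available in a pp-definition over $({\mathbb R};+,1,\leq)$ are exactly the rational linear equations, which will match up with the rational data produced in the other direction.

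For the "only if" direction, assume $S$ is closed, semi-linear, and convex. First write $S$ as a finite union of linear sets (polyhedra with rational coefficients), $S = P_1 \cup \dots \cup P_n$, using the structure of semi-linear relations recalled in Section~\ref{sect:prelims} together with Theorem~\ref{thm:LP-QE}; one may assume each $P_i$ is nonempty and contained in $S$. Now invoke the result of Bemporad et al.\ quoted just before the lemma: since $\bigcup_i P_i$ is convex, it equals $env(P_1,\dots,P_n)$, which is a single polyhedron $\{x \mid A'x \leq b'\}$ with rational $A',b'$. So $S$ is the solution set of a finite system of non-strict linear inequalities with rational coefficients. Finally, each such inequality $a_1x_1+\dots+a_kx_k \leq a_0$ with rational $a_i$ is pp-definable in $({\mathbb R};+,1,\leq)$ by introducing a fresh variable $t$, writing $a_1x_1+\dots+a_kx_k = t$ as a pp-formula via Lemma~\ref{lem:pp-rational} (recall $\{(x,y,z)\mid x+y=z\}$ and $\{1\}$ are pp-definable from $+$ and $1$, or are literally present), and conjoining $t \leq a_0$ (also pp-definable, since $\{a_0\}$ is pp-definable from $1$ and $+$). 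Conjoining these pp-formulas and existentially quantifying the auxiliary variables yields a pp-definition of $S$ in $({\mathbb R};+,1,\leq)$.

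The main obstacle is the step that turns convexity of a union of polyhedra into a single polyhedral description; this is precisely what the cited envelope characterization of Bemporad et al.\ buys us, so the real work is setting up the hypotheses of that theorem correctly — ensuring the $P_i$ are genuine polyhedra with rational data and that their union is exactly $S$ — and then being careful that the envelope inherits rational coefficients (noted in the paragraph preceding the lemma). A secondary, routine point is to double-check that closedness of $S$ is actually used: it guarantees that the defining system can be taken with only non-strict inequalities, which is what makes the final pp-definability step go through, since strict inequalities are not pp-definable in $({\mathbb R};+,1,\leq)$. Everything after the envelope step is a bookkeeping exercise reducing rational linear inequalities to pp-formulas via Lemma~\ref{lem:pp-rational}.
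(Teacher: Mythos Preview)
Your overall strategy matches the paper's: handle the ``if'' direction by noting that atoms define convex sets and that convexity survives intersection and projection, and handle the ``only if'' direction via the Bemporad et al.\ envelope characterization followed by Lemma~\ref{lem:pp-rational}. Where your argument has a genuine gap is the very first step of the converse, where you write $S = P_1 \cup \dots \cup P_n$ with each $P_i$ a \emph{closed} rational polyhedron, citing the remark in Section~\ref{sect:prelims} that every semi-linear relation is a finite union of linear sets. That remark is imprecise as stated (the open interval $(0,1)$ is semi-linear but is not a finite union of closed polyhedra), and the Bemporad et al.\ result you invoke genuinely requires its inputs to be polyhedra given by non-strict inequalities. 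So this step is not routine, and it is precisely where closedness of $S$ does its work --- not, as you suggest, at the final pp-definability stage (after the envelope step the description already uses only non-strict inequalities).

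The paper fills this gap explicitly. Starting from a quantifier-free DNF definition with rational parameters, one rewrites each disjunct so that its literals are of the form $p \leq 0$ or $p < 0$ only (splitting $\neq$ into two strict cases, expanding $=$ into two $\leq$'s). Let $D_1,\dots,D_m$ be the sets defined by the disjuncts; each $D_i$ has the form $\{x : A_i x \leq b_i,\; C_i x < d_i\}$ and is in general not closed. Now closedness of $S$ and the fact that closure commutes with finite unions give
\[
S \;=\; D_1 \cup \cdots \cup D_m \;\subseteq\; \bar D_1 \cup \cdots \cup \bar D_m \;=\; \overline{D_1 \cup \cdots \cup D_m} \;=\; \bar S \;=\; S,
\]
so $S = \bar D_1 \cup \cdots \cup \bar D_m$. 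Finally one checks (the paper cites Goberna et al.) that for nonempty $D_i$ the closure $\bar D_i$ is obtained by relaxing the strict inequalities to non-strict ones, producing a genuine rational polyhedron. Only at this point is the envelope argument applicable. Once you supply this closure step, the remainder of your argument is correct and coincides with the paper's.
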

\begin{proof}
It is straightforward to verify that relations with a primitive positive
definition in $({\mathbb R};+,1,\leq)$ are convex; each relation defines
a convex set and the intersection of convex sets is convex itself.

For the converse, let $\phi=\psi_1 \vee \cdots \vee \psi_m$ be a quantifier-free definition of $S$
over the structure $({\mathbb R};+,-,\leq)$ with parameters from ${\mathbb Q}$,
written in disjunctive normal form.
If there is a disjunct $\psi_i$ that contains a literal $p \neq q$, for linear terms $p$ and $q$, then
split the disjunct into two; one containing $p-q < 0$ and one
containing $p-q > 0$. By repeating this process, every literal
$p \neq q$ can be removed. Similarly, every literal $p=q$ 
can be replaced by $p-q \leq 0 \; \wedge \; q-p \leq 0$.
Thus, we may assume that every literal appearing in the $\psi_i$
is of the type $p \leq 0$ or $p < 0$, for a linear term $p$.
 Let $D_1,\dots,D_m$ be the sets defined by $\psi_1,\dots,\psi_m$, respectively.

Now recall that the topological closure operator preserves
finite unions, i.e., $\bar D_1 \cup \cdots \cup \bar D_m = \overline{D_1 \cup \cdots \cup D_m}$.
Hence, 

\[S=D_1 \cup \cdots \cup D_m \subseteq \bar D_1 \cup \cdots \cup \bar D_m = \overline{D_1 \cup \cdots \cup D_m}=\bar{S}=S \; \]
and $S=\bar D_1 \cup \cdots \cup \bar D_m$.
We now note that if
$P=\{x \in {\mathbb R}^k \; | \; Ax \leq b, \; Cx < d\}$ and $P \neq \emptyset$, then
$\bar P=\{x \in {\mathbb R}^k \; | \; Ax \leq b, \; Cx \leq d\}$,
cf. Case (i) of Proposition 1.1 in~\cite{Goberna:etal}.
Thus, each
$\bar D_i$ equals $\{x \in {\mathbb R}^k\; | \; A_i x \leq b_i\}$ for some rational $A_i,b_i$.
Furthermore, $S=\bigcup_{i=1}^m \bar D_i$ is convex so
$S=env(\bar D_1,\ldots,\bar D_m)$. 
This implies that
$S=\{x \in {\mathbb R}^k \; | \; C_1 x \leq d_1,\ldots,C_m x \leq d_m\}$ for some rational matrices $C_1,\ldots,C_m$ and rational vectors $d_1,\ldots,d_m$.
It is easy to see that each $C_i x \leq d_i$ is pp-definable in $({\mathbb R}; +,1,\leq)$ by the same technique as in the proof of Lemma~\ref{lem:pp-rational}, and
this concludes the proof.
 \end{proof}

\begin{thm}\label{thm:semi-linear-main}
A semi-linear relation $S$ is essentially convex if and only if $S$ is Horn-DLR.
\end{thm}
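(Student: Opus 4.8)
The plan is to prove both directions by combining the general characterization of essentially convex semi-algebraic sets (Theorem~\ref{thm:main}) with the special structure of Horn-DLR formulas and the convexity criterion for semi-linear sets (Lemma~\ref{lem:polytope}). The easy direction is that every Horn-DLR relation is essentially convex: a Horn-DLR formula is a conjunction of clauses, each of the form $p_1 \neq 0 \vee \dots \vee p_k \neq 0$ or $p_1 \neq 0 \vee \dots \vee p_k \neq 0 \vee p_0 \leq 0$. For a fixed clause, the set of points failing all the disequalities is contained in the variety $V = \{p_1 = \dots = p_k = 0\}$; off $V$ the clause is satisfied. Hence on any line $L$, the clause can be violated only at points of $L \cap V$, and by Lemma~\ref{lem:algebra} this is either all of $L$ (but then $p_0 \leq 0$ is a single linear inequality restricted to $L$, hence violated on at most a ray or half-line \hyp{} wait, that could be infinite) \hyp{} so I need to argue more carefully. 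In fact it is cleanest to invoke that a Horn-DLR formula is a special case of a semi-linear convex Horn formula: each clause $p_1 \neq 0 \vee \dots \vee p_k \neq 0 \vee p_0 \leq 0$ has the property that on any convex subset $C$ of $\{p_1=\dots=p_k=0\}$, either $C$ is a single point, or $C$ has positive dimension but then either $p_0$ vanishes on the affine hull of $C$ (so the restriction is all of $C$, dimension unchanged \hyp{} problematic) \ldots{}. The honest route is to check directly that intersecting a line with a Horn-DLR set leaves out only finitely many points, handling the degenerate case where $L \subseteq V$ for the variety of some clause by noting that then that clause contributes the single linear constraint $p_0 \leq 0$ on $L$, but \emph{other} clauses active on $L$ combine with it, and the global set must still be checked; simplest is to appeal to Proposition~\ref{prop:linear-horn-convex} after observing (via the examples/level bookkeeping) that every Horn-DLR formula is convex Horn.

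For the converse \hyp{} every essentially convex semi-linear $S$ is Horn-DLR \hyp{} I would argue by induction on $\dim(S)$, paralleling the proof of Theorem~\ref{thm:main} but tracking that the pieces stay semi-linear and Horn-DLR. Start from a semi-linear standard definition $\phi$ of $S$ (Lemma~\ref{lem:standard-def}), which uses only $+,-,\leq$ and rational parameters, with clauses rewritten as $\alpha \rightarrow \beta$ where $\alpha$ is a conjunction of linear equalities and $\beta$ a disjunction of non-strict linear inequalities. Split $\phi = \phi_1 \wedge \phi_2$ as in Theorem~\ref{thm:main}, where $\phi_2$ collects the clauses with $\alpha = \mathrm{true}$. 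By Lemma~\ref{lem:closure-convex} the closure $\bar S$ is convex, and as in Theorem~\ref{thm:main} one shows $\phi_2$ defines $\bar S$; being a closed convex semi-linear set, Lemma~\ref{lem:polytope} gives that $\bar S$ is a finite intersection of rational half-spaces, i.e., definable by a conjunction of clauses $p_0 \leq 0$ \hyp{} which is Horn-DLR. For $\phi_1$: each clause $\alpha \rightarrow (\beta \wedge \phi)$ has premise defining a rational affine variety $V$; the argument in Theorem~\ref{thm:main} (Claims 1 and 2) shows that for any semi-linear convex $C \subseteq V$, the set $C \cap U$ (where $U$ is defined by $\beta \wedge \phi$) is semi-linear, essentially convex, and of strictly smaller dimension than $S$, hence by induction is Horn-DLR. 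The remaining point is to assemble: a clause $\alpha \rightarrow (\beta \wedge \phi)$ together with the inductive Horn-DLR descriptions of its restrictions to affine subspaces of $V$ must be repackaged as a conjunction of Horn-DLR clauses.

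The main obstacle I anticipate is exactly this last assembly step \hyp{} turning the ``convex Horn on every convex subset of $V$'' recursion into an \emph{honest} Horn-DLR CNF. Concretely: given a clause $\bigwedge_j(q_j = 0) \rightarrow \psi$ where $\psi$ is (inductively) a Horn-DLR formula valid on the variety $V = \{q_j = 0\}$, I want to show the original clause is equivalent on $S$ to a conjunction of clauses of the form $q'_1 \neq 0 \vee \dots \vee q'_m \neq 0 \vee p_0 \leq 0$ or $q'_1 \neq 0 \vee \dots \vee q'_m \neq 0$. The natural move is: writing $\psi = \bigwedge_t C_t$ in Horn-DLR CNF, distribute the premise over the conjunction to get $\bigwedge_t \big( (\bigvee_j q_j \neq 0) \vee C_t \big)$; since $C_t$ itself is $\bigvee_l r_l \neq 0$ or $\bigvee_l r_l \neq 0 \vee p_0 \leq 0$, each resulting clause is again of Horn-DLR shape, because disjoining more $\neq 0$ atoms to a Horn-DLR clause keeps it Horn-DLR. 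So the only genuine work is verifying that the inductive hypothesis can be applied to obtain $\psi$ as a \emph{single} Horn-DLR formula defining $U$ relative to $V$ (rather than merely piecewise on each convex $C \subseteq V$), which in turn reduces to the observation that $U \cap V$ is itself a semi-linear essentially convex set of dimension $< \dim S$ when $S \not\subseteq V$ (forced, as in Theorem~\ref{thm:main}, by minimality of the standard definition) \hyp{} and then apply induction directly to $U \cap V$ rather than to each piece. I would also double-check the base case $\dim S = 0$: then $S$ is a single rational point (Lemma~\ref{alwaysrational}) or empty, both trivially Horn-DLR. Finally I would state the conclusion of Theorem~\ref{LPresult} as the immediate corollary: tractability via Theorem~\ref{tractability} when all $S_i$ are Horn-DLR (equivalently, essentially convex), and NP-hardness via Lemma~\ref{semilinhard} otherwise, with NP-membership of the semi-linear CSP being routine.
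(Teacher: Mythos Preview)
Your backward direction is essentially the paper's argument. The paper packages it slightly differently: it first invokes Theorem~\ref{thm:main} to obtain a semi-linear convex Horn definition and then inducts on the \emph{level} of that formula rather than on $\dim S$, but the crux is identical --- since the premises $p_j=0$ are linear, the variety $V$ is itself convex, so one may take $C:=V$ and apply the inductive hypothesis directly to $U\cap V$, then distribute $\bigvee_j p_j\neq 0$ over the resulting Horn-DLR conjunction. Your assembly step (disjoining the extra $\neq 0$ atoms into each clause) is exactly what the paper does, and your use of Lemma~\ref{lem:polytope} for the closed convex part $\phi_2$ is the paper's level-$0$ base case.

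The forward direction, however, is left unresolved in your proposal. You correctly isolate the delicate case --- a clause $p_1\neq 0\vee\dots\vee p_k\neq 0\vee p_0\leq 0$ whose variety $V$ contains the whole line through $a$ and $b$ --- but you do not finish it, and your fallback to ``Horn-DLR is convex Horn, then apply Proposition~\ref{prop:linear-horn-convex}'' is also not carried out (your worry about the dimension condition there is in fact unfounded: if some $p_j$ is nontrivial then $V$ is a proper affine subspace, so any $C\subseteq V$ has dimension $<n$, while the clause defines a set of full dimension $n$). The clean resolution of the degenerate case is immediate and is what the paper does: if $L\subseteq V$ and $a,b\in S$, then $a,b$ satisfy the clause, hence (being in $V$) satisfy $p_0\leq 0$; since $p_0$ is \emph{linear}, every point of the segment between $a$ and $b$ then satisfies $p_0\leq 0$, so this clause excludes no point of the segment at all. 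For a clause $p_1\neq 0\vee\dots\vee p_k\neq 0$ with no inequality literal, $L\subseteq V$ is impossible since it would force $a\notin S$. With this, the direct argument goes through: each clause excludes at most finitely many points of the segment (by Lemma~\ref{lem:algebra} when $L\not\subseteq V$, and none when $L\subseteq V$), and a finite conjunction of such clauses still excludes only finitely many points.
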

\begin{proof}
We first prove that every Horn-DLR relation $S$ is essentially convex.
Let $\phi$ be a Horn-DLR definition of $S$.
Suppose for contradiction that there are $a,b \in S$ and an infinite set $I$ of points on
the line segment $L$ between $a$ and $b$ is not contained in $S$. Since $\phi$ has finitely many conjuncts,
there is a conjunct $\psi$ in $\phi$ that is false for an infinite subset $I'$ of $I$. 
If $\psi$ is of the form  $p_1 \neq 0 \vee \dots \vee p_k \neq 0$,
 then all points in $I'$ satisfy $p_1 = \dots = p_k = 0$.
 By Lemma~\ref{lem:algebra}, the entire line $L$ satisfies $p_1 = \dots = p_k = 0$.
This contradicts the assumption that $a \in L$ and $b \in L$ satisfy $\phi$. 
If $\psi$ is of the form $p_1 \neq 0 \vee \dots \vee p_k \neq 0 \vee p_0 \leq 0$,
then all points in $I'$ satisfy $p_1 = \dots = p_k = 0$ and $p_0 >0$. 
Again by Lemma~\ref{lem:algebra} we find that both $a$ and $b$ must satisfy $p_1 = \dots = p_k = 0$.
Since $a,b$ also satisfy $\psi$, we conclude that both points satisfy $p_0 \leq 0$. But then
also all points in $L$ must satisfy $p_0 \leq 0$, which contradicts the fact that the points in $I'$ satisfy $p_0>0$. 

\ignore{
We first show that if $S \subseteq {\mathbb R}^n$ 
is Horn-DLR, then it has a convex Horn definition. This implies
that $S$ is essentially convex by Theorem~\ref{thm:main}.

The relation $S$ is a finite intersection of sets
that are defined by clauses of the form
$p_1 \neq 0 \vee \dots \vee p_k \neq 0$
or of the form $p_1 \neq 0 \vee \dots \vee p_k \neq 0 \vee 
p_0 \leq 0$. We can concentrate on one clause $\chi$ since a finite
intersection of essentially convex sets is essentially convex itself.
Let $X=\{x \in {\mathbb R}^m \; | \; \chi(x)\}$.
Assume without loss of generality that $X$ is non-empty.
If $\chi \equiv (p_0 \leq 0)$, then a convex Horn definition exists
since $X$ is closed and convex. Hence, we assume that $k > 0$.
Let $V=\{x \in {\mathbb R}^n \; | \; p_1(x)= \cdots=p_k(x)=0\}$
and note that $X = \neg V \cup U$
where $U=\emptyset$ or $U=\{x \in {\mathbb R}^n \; | \; p_0(x) \leq 0\}$.

Let $C$ be an arbitrary semi-algebraic convex set such that
$C \subseteq V$. We need to prove
that $C \cap U$ can be defined by a convex Horn formula and
that $\dim(C \cap U) < \dim(X)$.
Since $U$ is convex, the intersection of $U$ and $C$ is convex, and
there is a convex Horn formula defining 
$C \cap U$ by Theorem~\ref{thm:main}.

Next, we show that $\dim(C \cap U) < \dim(X)$.
Let $T=X \setminus (C \cap U)$.
It is sufficient to prove that 
$C \cap U \subseteq \overline{T} \setminus T$ 
since $\dim(\overline{T} \setminus T)  < \dim(T) \leq \dim(X)$.
The set $X$ contains a point $p$ that is not in $V$: 
note that $X \not\subseteq V$ since
$\neg V \subseteq X$.
If $C \cap U = \emptyset$, then
$C \cap U \subseteq \overline{T} \setminus T$ and we are done.
We can thus assume that $C \cap U \neq \emptyset$ and
arbitrarily choose a point $x$
in $C \cap U$.
Only finitely many points on the line segment $L'$  between $p$ and $x$ can be in $V$, because otherwise Proposition 15
implies that $V$ must contain the entire line between $x$ and $p$, including $p$, which leads to a contradiction.
Recall that $\neg V \subseteq X$ so $L' \setminus Y$ is in $X$ for some finite set $Y$.
Hence, we can choose a sequence of points from
$L'$ that approaches $x$ and the points can be chosen to
be members of $X$. Thus, $x \in \overline{T}$. Since $x$ was arbitrarily chosen
in $C \cap U$, we have shown that $C \cap U \subseteq \overline{T}$.
By the choice of $T$, $(C \cap U) \cap T = \emptyset$ so $C \cap U \subseteq 
\overline{T} \setminus T$.
\medskip
}

The other direction of the statement can be derived from 
Theorem~\ref{thm:main} as follows. 
Let $S$ be an essentially convex semi-linear relation. 
By Theorem~\ref{thm:main}, $S$ has a semi-linear convex Horn definition $\psi$.
We prove by induction on the level of $\psi$ 
that $\psi$ is Horn-DLR.
If the level is 0, then $S$ is closed and convex and the claim
follows from Lemma~\ref{lem:polytope}.
Now suppose that $\psi$ has level $i>0$ and is of the form
$p_1 \neq 0 \vee \dots \vee p_k \neq 0 \vee \psi'$,
where $p_1 = \dots = p_k = 0$ defines a set $V$,
and $\psi'$ defines a set $U$ such that for every semi-algebraic convex set $C \subseteq V$ the set $C \cap U$ has a convex Horn definition of level strictly smaller than $i$.
Since $\psi$ is semi-linear, the terms $p_1,\dots,p_k$ are 
linear.
Hence, $V$ is convex, and by taking $C:=V$ in the statement above
we see that $V \cap U$ has a convex Horn definition of level strictly
smaller than $i$.
By the inductive assumption, $V \cap U$ has a definition by a Horn-DLR formula $\phi$ with 
clauses $\phi_1, \dots, \phi_m$.
Then $\phi' = \bigwedge_{1 \leq i \leq m} 
(p_1 \neq 0 \vee \dots \vee p_k \neq 0 \vee \phi_i)$ is clearly Horn-DLR.
We claim that $\phi'$ defines $S$.
First suppose that $a \in \neg V$.
In this case, $a$ clearly satisfies $\phi'$ and this
is justified by the fact $\neg V \subseteq S$.
Suppose instead that $a \in V$. Then $a$ satisfies $\phi'$ if and only if it
satisfies $\phi$, and since $\phi$ defines $V \cap U$ this is the case 
if and only if $a$ satisfies $\psi'$.

Finally, the statement holds if $\psi$ is the conjunction of finitely
many convex Horn formulas 
(which are Horn-DLR by inductive assumption).
 \end{proof}

\begin{proof}[Proof of Theorem~\ref{LPresult}]
If all relations of $\Gamma$ are Horn-DLR, then $\Csp(\Gamma)$ can be solved in polynomial time (Theorem~\ref{tractability}).
Otherwise, if there is a relation $S$ from $\Gamma$ that is not Horn-DLR, then Theorem~\ref{thm:semi-linear-main} shows that $S$ is not essentially convex, and NP-hardness of $\Csp(\Gamma)$ follows by Lemma~\ref{semilinhard}.

So we only have to show that $\Csp(\Gamma)$ is in NP.
Let $\Phi$ be an arbitrary instance of $\Csp(\Gamma)$.
By Theorem~\ref{thm:LP-QE} every relation of $\Gamma$
has a quantifier-free definition in conjunctive normal form
over $(\mathbb R;+,-,\leq)$ with rational parameters. One can now non-deterministically guess \emph{one} literal from each clause of in the defining formula for each constraint and verify -- 
in polynomial-time by Theorem~\ref{tractability} --
that all the selected literals are simultaneously satisfiable.
 \end{proof}

\subsection{Generalized linear programming}
\label{sect:glp}
In this section, we study generalizations of the following problem.

\cproblem{Linear Programming (LP)}
{A finite set of variables $V$, a vector $c \in {\mathbb Q}^{|V|}$, a number $M \in \mathbb Q$, and a finite set of linear inequalities of the form $a_1x_1+\dots+a_nx_n \leq a_0$ where $x_1,\dots,x_n \in V$ and 
$a_0,\dots,a_n \in \mathbb Q$. All rationals are given by numerators and denominators represented in binary.}
{Is there a vector $x \in {\mathbb R}^{|V|}$ that satisfies the inequalities
and $c^Tx \geq M$?}

We generalize LP as follows. 
Let $\Gamma$ be a structure $(\Gamma_{\rm lin}, R_1,\ldots,R_m)$
such that $R_1,\ldots,R_m$ are semi-linear relations.

\cproblem{Generalized Linear Programming for $\Gamma$ (GLP$(\Gamma)$)}
{A finite set of variables $V$, a vector $c \in {\mathbb Q}^{|V|}$, a number $M \in \mathbb Q$, and a finite set $\Phi$ 
of expressions of the form $R(x_1,\dots,x_k)$ where $R$ is a relation from $\Gamma$ and $x_1,\ldots,x_k \in V$.}
{Is there a vector $x \in {\mathbb R}^{|V|}$ that satisfies the constraints
and $c^Tx \geq M$?}

This can indeed be viewed as a generalization of LP because of Proposition~\ref{prop:lp-pp}: the problem LP is polynomial-time equivalent to the problem
$\GLP(\lingamma)$.

\begin{thm}
Let $\Gamma=({\mathbb R}; \lingamma,R_1,\ldots,R_l)$ be a structure with semi-linear relations $R_1,\dots,R_l$. Then, either each $R_i$ is Horn-DLR and $\GLP(\Gamma)$ is in P, or $\GLP(\Gamma)$ is NP-hard.
\end{thm}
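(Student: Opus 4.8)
The plan is to show that $\GLP(\Gamma)$ and $\Csp(\Gamma)$ are polynomial-time equivalent, and then to apply Theorem~\ref{LPresult}. Both reductions are elementary, so the proof reduces to two short observations plus bookkeeping.

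For the reduction $\Csp(\Gamma)\leq_p \GLP(\Gamma)$, I would note that $\GLP$ literally generalizes the feasibility question: given a pp-sentence $\Phi$ over $\Gamma$, viewed as a set of atomic constraints on its variables $V$, form the $\GLP(\Gamma)$-instance with this same constraint set, objective vector $c=0\in{\mathbb Q}^{|V|}$, and threshold $M=0$. The extra requirement $c^Tx\geq M$ then reads $0\geq 0$ and is vacuous, so the $\GLP$-instance is a yes-instance precisely when $\Phi$ is true in $\Gamma$.

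For the reduction $\GLP(\Gamma)\leq_p \Csp(\Gamma)$, I would encode the objective inequality as a pp-formula over $\lingamma$ (hence over $\Gamma$). Given an instance $(V,c,M,\Phi)$, the condition $c^Tx\geq M$ is equivalent to $\exists z.\ \big(c_1x_1+\cdots+c_{|V|}x_{|V|}-z = M\big)\wedge (z\geq 0)$, where by Lemma~\ref{lem:pp-rational} the linear-equation conjunct has a pp-definition in $({\mathbb R};\{(x,y,z)\mid x+y=z\},\{1\})$ computable in polynomial time, and $z\geq 0$ is pp-definable over $\lingamma$ (using that $\{0\}$ is pp-definable by $x+x=x$). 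Conjoining this pp-formula, with fresh existentially quantified variables, to the constraint set $\Phi$ and existentially quantifying all variables yields a pp-sentence over $\Gamma$ of length polynomial in the input which is true in $\Gamma$ if and only if the $\GLP$-instance is a yes-instance.

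With these two reductions the dichotomy follows. If every $R_i$ is Horn-DLR, then so is every relation of $\Gamma$ (the relations of $\lingamma$ are linear, hence Horn-DLR), so $\Csp(\Gamma)\in\mathrm{P}$ by Theorem~\ref{LPresult}, and therefore $\GLP(\Gamma)\in\mathrm{P}$. Otherwise some $R_i$ is not Horn-DLR, hence not essentially convex by Theorem~\ref{thm:semi-linear-main}, so $\Csp((\lingamma,R_i))$ is NP-hard by Lemma~\ref{semilinhard}; since every instance of $\Csp((\lingamma,R_i))$ is also an instance of $\Csp(\Gamma)$, the latter is NP-hard, and hence so is $\GLP(\Gamma)$. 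The only step that needs genuine care is that the pp-encoding of $c^Tx\geq M$ stays polynomial in the binary sizes of $c$ and $M$, which is exactly the content of the polynomial-time clause of Lemma~\ref{lem:pp-rational}; everything else is routine.
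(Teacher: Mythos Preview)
Your proof is correct, but the tractability direction takes a genuinely different route from the paper. For NP-hardness you and the paper argue identically. For the case where every $R_i$ is Horn-DLR, you reduce $\GLP(\Gamma)$ to $\Csp(\Gamma)$ by pp-encoding the objective inequality $c^Tx\geq M$ via Lemma~\ref{lem:pp-rational} and then invoke Theorem~\ref{LPresult}; this is short, elementary, and fully sufficient for the decision problem as stated. The paper instead presents a direct polynomial-time algorithm that solves a strictly stronger problem: it computes the actual supremum of $c^Tx$ over the feasible region (distinguishing the cases ``unbounded'', ``optimum $K$ attained'', and ``supremum $K$ not attained'') by first solving an ordinary LP over the clauses without disequality literals and then checking, via Horn-DLR satisfiability, whether the optimum is attained. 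Your approach buys simplicity and a clean reduction to already-proved results; the paper's approach buys more information, namely the optimum value and whether it is attained, which your reduction does not recover.
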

\begin{proof}
If there is an $R_i$ that is not Horn-DLR, then the relation
$R_i$ is not essentially convex by Theorem~\ref{thm:semi-linear-main}, and
$\Csp((\lingamma,R_i))$ is NP-hard by Theorem~\ref{LPresult}. Clearly, 
$\GLP(\Gamma)$ is NP-hard, too.

Assume instead that
each $R_i$ is Horn-DLR. We present an algorithm that actually solves a more general problem that includes $\GLP(\Gamma)$.
 Let $\Phi$ be an arbitrary satisfiable Horn-DLR formula\footnote{Note that if we are given an instance of
$\Csp((\lingamma,R_1,\ldots,R_l))$, then it can 
be transformed into an equivalent Horn-DLR formula in polynomial time since
there is only a finite number of relations in the given structure.
Hence, there is no loss of generality in considering Horn-DLR formulas instead
of CSP instances.
Also note that the resulting formula is (up to a multiplicative constant
depending on the structure)  of the same size as the CSP instance.} over
variable vector $\bar{x}=(x_1,\ldots,x_n)$ and let $c$ be a rational $n$-vector.

We assume additionally that $\Phi \wedge D$ is satisfiable for every
disequality literal (i.e., literal $p(\bar{x}) \neq a$) 
$D$ appearing in $\Phi$. If $\Phi \wedge D$
is not satisfiable, then every occurrence of $D$
in $\Phi$ can be removed without changing the set defined by the formula.
Furthermore, this check can be carried out in polynomial time
by Theorem~\ref{tractability}. Hence, we may assume that $\Phi$ has
this additional property (which we will refer to as $(*)$) without loss
of generality.
Let $\Phi=\Phi' \wedge \Phi''$ where $\Phi'$ consists of the
clauses not containing any disequality literal $p(\bar x) \neq a$, 
and $\Phi''$ consists of the remaining clauses.

Given $\Phi$, our algorithm returns one of the following three answers:

\begin{iteMize}{$\bullet$}
\item
`unbounded': for every $K \in {\mathbb Q}$, there exists
a solution $y$ such that $c^Ty \geq K$;

\item
`optimum: $K$': there exists a $K \in {\mathbb Q}$ and a solution $y$ such that
$c^Ty=K$, but there is no solution $y'$ such that $c^Ty' > K$;

\item
`optimum is arbitrarily close to $K$': there exists a $K \in {\mathbb Q}$
such that
there is no solution $y$ satisfying $c^Ty \geq K$, but
for every $\epsilon > 0$ there is a solution $y'$
with $c^Ty' \geq K-\epsilon$.
\end{iteMize}

\noindent
We claim that the following algorithm solves the task described above in polynomial time.

\medskip

\noindent
{\bf Step 1.}
Maximize $c^T \bar x$ over $\Phi'$ (by using some polynomial-time algorithm for
linear programming).
Let $K$ denote the optimum.
If $K=\infty$, then return `unbounded' and stop.

\medskip

\noindent
{\bf Step 2.}
Check whether $\Phi \wedge c^T \bar x=K$ is satisfiable. Note that $c^T \bar x=K$ has a primitive positive definition
in $\lingamma$, which furthermore can be computed in polynomial time by Lemma~\ref{lem:pp-rational}.
Therefore this check can be reduced to deciding satisfiability of Horn-DLRs.
If $\Phi \wedge c^T \bar x=K$ is satisfiable, then return `optimum: $K$'.
If this is not the case, then return `optimum is arbitrarily close
to $K$'.

\bigskip

We first show that the algorithm runs in polynomial time.
{\bf Step 1} takes polynomial time since maximizing $c^T \bar x$ over $\Phi'$
is equivalent to solving a linear program with size polynomially bounded
in the size of $\Phi$.
Finally, {\bf Step 2} takes polynomial time
due to Lemma~\ref{lem:pp-rational} and Theorem~\ref{tractability}.

Next, we prove the correctness of the algorithm.
Correctness is obvious if the algorithm answers `optimum: $K$' in {\bf Step 2}.
For the remaining cases, we need to make a couple of observations.
Define $S=\{x \in {\mathbb R}^n \; | \; \mbox{$x$ satisfies $\Phi$}\}$ and
$S'=\{x \in {\mathbb R}^n \; | \; \mbox{$x$ satisfies $\Phi'$}\}$. 
Let $D_1,\ldots,D_m$ denote the disequality literals appearing
in $\Phi$.
Let $H_i$ be the set $\{x \in {\mathbb R}^n \; | \; \mbox{$x$ does not satisfy $D_i$}\}$, $1 \leq i \leq m$, and note that each $H_i$ is a hyperplane.

\medskip

\noindent
{\em Observation 1.} The formula 
$\Phi^- \equiv \Phi \wedge D_1 \wedge \cdots \wedge D_m$ is satisfiable. 

Otherwise, 
$S \subseteq H_1 \cup \cdots \cup H_m$. The set $S$ is
essentially convex and each $H_m$ is a variety,
so there exists an $1 \leq i \leq m$ such that $S \subseteq H_i$
by Proposition~\ref{prop:indep}. 
Consequently, $\Phi \wedge D_i$ is 
not satisfiable which contradicts the fact that $\Phi$ has property $(*)$.

\medskip

\noindent
{\em Observation 2.} For every $\epsilon > 0$ there is a $y \in S$
satisfying $|c^Tw - c^Ty| < \epsilon$.
 
Let $d(\cdot,\cdot)$
denote the Euclidean distance in ${\mathbb R}^n$, i.e.,
$d(a,b)=\sqrt{\sum_{i=1}^n (a_i-b_i)^2}$, and $|| \cdot ||$ the
corresponding norm, i.e., $||a||=\sqrt{a^Ta}$.

Arbitrarily choose a point $z$ that satisfies $\Phi^-$; 
this is always possible by Observation 1. Consider the line
segment $L$ between $z$ and $w$.
Note the following: if $H$ is a hyperplane in ${\mathbb R}^n$, then 
either $H$ intersects $L$ in at most one point or $L \subseteq H$.
Also note that $w,z \in S'$ and
$S'$ is convex so $L \subseteq S'$.
Arbitrarily choose a clause $C \in \Phi''$ and
assume $C = (p_1(\bar{x}) \neq 0 \vee \cdots \vee 
p_k(\bar{x}) \neq 0 \vee p_0(\bar{x}) \leq 0)$.
Assume that there exist two distinct points $a,b \in L$
such that $p_1(a)=p_1(b)=0$. 
If so, then every point $c \in L$ satisfies
$p_1(c)=0$. This is not possible since $z \in L$ satisfies
$D_1 \wedge \dots \wedge D_m$, and in particular
$p_1(z) \neq 0$. Hence, at most one point $c \in L$ 
satisfies $p(c)=0$, and $c$ is the only point in $L$
that potentially does not satisfy the clause $C$.
This implies that only finitely many points in $L$ do not
satisfy $\Phi''$, and it follows
that for every $\delta > 0$ there is
a point $y \in S$ such that $d(w,y) < \delta$.


We proceed by showing that if
$w,y \in {\mathbb R}^n$ and $d(w,y)=d$, 
then $|c^Tw-c^Ty| \leq ||c|| \cdot d$.
This follows from the Cauchy-Schwarz inequality (that is, 
$|a^Tb| \leq ||a|| \cdot ||b||$ for vectors $a,b$ in ${\mathbb R}^n$):
\begin{align*}
|c^Tw-c^Ty| & =  |c^T(w-y)| \leq
||c|| \cdot ||w-y|| = ||c|| \cdot d(w,y) = ||c|| \cdot d.
\end{align*}
To find a vector $y$ that satisfies $|c^Tw - c^Ty| < \epsilon$,
we simply choose $y \in S$ such that
$d(w,y) < \frac{\epsilon}{||c||}$; we know that such a $y$ exists by the
argument above.

\medskip

If the algorithm outputs `unbounded' in {\bf Step 2}, then
arbitrarily choose a sufficiently large number $k$ and note that there exists a vector
$w \in S'$ such that $c^Tw \geq k$.
 By Observation 2, there exists
a vector $y \in S$ such that $|c^Tw-c^Ty| < 1$. Hence, $S$
has unbounded solutions, too.

Assume finally that the algorithm answers `optimum is arbitrarily close to $K$'
in {\bf Step~3};
Observation 2 immediately proves correctness in this case.
\end{proof}

\section{Open Problems}
\label{sect:concl}
The most prominent open question is whether there are
there are essentially convex relations $S$ with a first-order definition
in $(\mathbb R;*,+)$
such that $\Csp((\lingamma,S))$
is NP-hard.
Resolving this question is probably difficult, since the following
closely related problem is of unknown computational complexity:

\cproblem{Feasibility of Convex Polynomial Inequalities}
{A set of variables $V$, a set of polynomial inequalities each of which defining a convex set; the coefficients of the polynomials are rational numbers where
the numerators and denominators are represented in binary.}
{Is there a point in ${\mathbb R}^{|V|}$ that satisfies all inequalities?}

One may note that the problems we have considered could be easier since
they are defined over finite constraint languages.
On the other hand, convexity is much more restrictive than essential convexity; moreover, we are only
given polynomial inequalities (there are convex semi-algebraic relations that cannot be defined as the intersection of convex polynomial inequalities). Still, the computational complexity
of the Feasibility Problem of Convex Polynomial Inequalities is open.

\emph{Convex} semi-algebraic relations are of particular interest
in the quest for efficiently solvable semi-algebraic constraint languages because of a conjectured link to semidefinite programming.
Every semidefinite representable
set is convex and semi-algebraic. Recently, Helton, Vinnikov and Nie
showed that
the converse statement is true in surprisingly many cases
and conjectured that it remains true in general~\cite{HeltonNie}.


\section*{Acknowledgements}
We would like to thank Johan Thapper for pointing out inaccuracies in an earlier version
of the article, Frank-Olaf Schreyer for helpful discussions in the
early stages of this work, and the referees for their detailed remarks.
\bibliographystyle{abbrv}
\bibliography{../../global}

\end{document}